\title{\bf\boldmath
Construction of eigenfunctions for the elliptic \\ 
Ruijsenaars difference operators
}
\author{
Edwin Langmann\footnote{Department of Physics, KTH Royal Institute of Technology, SE-106 91 Stockholm, Sweden}, \ 
Masatoshi Noumi\footnote{Department of Mathematics, KTH Royal Institute of Technology, SE-100 44 Stockholm, Sweden 
\hfill \break\ \ (on leave from:  Department of Mathematics, Kobe University,  
Rokko, Kobe 657-8501, Japan)} \
and 
Junichi Shiraishi\footnote{Graduate School of Mathematical Sciences, The University of Tokyo, Komaba, 
Tokyo 153-8914, Japan}
}
\date{}
\newtheorem{thm}{Theorem}[section]
\newtheorem{prop}[thm]{Proposition}
\newtheorem{lem}[thm]{Lemma}
\newtheorem{rem}[thm]{Remark}
\numberwithin{equation}{section}
\newcommand{\bC}{\mathbb{C}}
\newcommand{\bN}{\mathbb{N}}
\newcommand{\bR}{\mathbb{R}}
\newcommand{\bZ}{\mathbb{Z}}
\newcommand{\bT}{\mathbb{T}}
\newcommand{\cA}{\mathcal{A}}
\newcommand{\cB}{\mathcal{B}}
\newcommand{\cC}{\mathcal{C}}
\newcommand{\cD}{\mathcal{D}}
\newcommand{\cE}{\mathcal{E}}
\newcommand{\cK}{\mathcal{K}}
\newcommand{\cN}{\mathcal{N}}
\newcommand{\cP}{\mathcal{P}}
\newcommand{\cR}{\mathcal{R}}
\newcommand{\cS}{\mathcal{S}}
\newcommand{\cU}{\mathcal{U}}
\newcommand{\cV}{\mathcal{V}}
\newcommand{\cW}{\mathcal{W}}
\newcommand{\frS}{\mathfrak{S}}
\newcommand{\ep}{\epsilon}
\newcommand{\vep}{\varepsilon}
\newcommand{\isom}{\stackrel{\sim}{\to}}
\newcommand{\aff}{{\mathrm{af}}}
\newcommand{\pr}[1]{\left\{#1\right\}}
\newcommand{\prm}[2]{\left\{{#1}\big|{#2}\right\}}
\newcommand{\br}[1]{\left\langle #1\right\rangle}
\newcommand{\htaf}{\mathrm{ht}}
\newcommand{\dd}{\mathrm{d}}
\newcommand{\ca}{\mathsf{a}}
\newcommand{\cb}{\mathsf{b}}
\newcommand{\cc}{\mathsf{c}}
\newcommand{\cs}{\mathsf{s}}
\newcommand{\ct}{\mathsf{t}}
\newcommand{\sP}{\mathsf{P}}
\newcommand{\sQ}{\mathsf{Q}}
\newcommand{\qed}{\hfill $\square$\par\noindent}
\newcommand{\fps}[1]{[\hspace{-1pt}[#1]\hspace{-1pt}]}
\newcommand{\cps}[1]{\pr{#1}}
\newenvironment{proof}[1]{\par\smallskip\noindent{{\sl #1}\,:}}
{\hfill $\square$\par\noindent}
\newenvironment{proofa}[1]{\par\smallskip\noindent{{\sl #1}\,:}}
{\par\noindent}
\newcommand{\pick}[1]{\begin{picture}(0,0){#1}\end{picture}}
\begin{document}
\maketitle
\begin{abstract}
We present a perturbative construction of two kinds of eigenfunctions of the commuting family of difference operators defining the elliptic Ruijsenaars system. The first kind corresponds to elliptic deformations of the Macdonald polynomials, and the second kind generalizes asymptotically free eigenfunctions previously constructed 
in the trigonometric case.  We obtain these eigenfunctions as infinite series which, as we show, converge in suitable domains of the variables and parameters.  Our results imply that, for the domain where the elliptic Ruijsenaars operators define a relativistic quantum mechanical system, the elliptic deformations of the Macdonald polynomials provide a family of orthogonal functions with respect to the pertinent scalar product. 
\par\medskip\noindent
{\sl Keywords\/}:\ \ 
Ruijsenaars difference operators, perturbative eigenfunctions, 
elliptic deformation of Macdonald polynomials
\par\noindent
{\sl 2010 Mathematics Subject Classification\/}:\ \ 81Q80; 33E30,  33D67
\end{abstract}
\tableofcontents


\section{Introduction}

We consider the relativistic generalization of $A$-type Calogero-Moser-Sutherland systems discovered by Ruijsenaars \cite{R1987}.  
It is well-known that, in the trigonometric case, the eigenfunctions of the commuting family of difference operators defining the Ruijsenaars system are given by the Macdonald polynomials \cite{M1995}. 
While the Macdonald polynomials provide eigenfunctions that describe identical particles of a quantum mechanical system, there exists another kind called asymptotically free eigenfunctions \cite{S2005, C2009, vMS2010, NS2012}
that are interesting from a mathematical point of view. 
The aim of this paper is to generalize these two kinds of eigenfunctions to the general elliptic Ruijsenaars system \cite{R1987}; 
see \cite{FV1997, FV1998, FV2004, R2009a, R2009b} 
preceding works in this direction. 
For that purpose, we generalize an approach based on kernel functions that was 
used to construct eigenfunctions of the elliptic Calogero-Moser-Sutherland system \cite{L2000, L2013}. 


\subsection{Motivation}\label{ssec:1.1}

We recall that the Macdonald polynomials, $P_\lambda(x_1,\ldots,x_n|q, t)$, can be interpreted as quantum mechanical eigenstates of a Hamiltonian, and 
the quantum numbers 
$\lambda=(\lambda_1,\ldots,\lambda_n)$ with integers $\lambda_i$ 
such that $\lambda_1\ge \lambda_2\ge\cdots\ge\lambda_n$ 
correspond to quasi-momenta; 
this Hamiltonian describes a system of an arbitrary number, $n$, of 
massive particles moving on the circle of circumference $L > 0$, 
and it depends on two parameters $mc > 0$ and 
$g > 0$ \cite{R1987}.\footnote{We mention in passing that 
$m$, $c$ and $g$ have the physical interpretation of particle mass, 
vacuum velocity of light, and coupling constant, respectively \cite{R1987};
note that $u_i$ here corresponds to $q_i$ in \cite{R1987}.} 
Moreover, the particle positions $u_i\in [-L/2, L/2]$ and parameters $mc$, $g$ 
in this quantum mechanical system are related to the variables $x_i$ and Macdonald parameters $q$, $t$ as follows \cite{vD1995, H1997},\footnote{
This relation between the trigonometric Ruijsenaars system and the Macdonald polynomials was already noted in by Koornwinder in 1987 \cite{vD1995}.}
\begin{equation}\label{eq:physdom}
x_i=\exp(2\pi\sqrt{-1} u_i/L)\quad(i=1,\ldots,n),\quad
q=\exp(-1/mc),\quad t=\exp(-g/mc). 
\end{equation}
Thus, from a physical point of view, one is mainly interested in the eigenfunctions 
on the physical domain of variables and parameters given by
\begin{equation}
|x_i|=1\quad (i=1,\ldots,n), \quad 0<q<1, \quad 0<t<1.
\end{equation}
However, mathematically, it is convenient to extend the variables $x_i$ 
and parameter $q$, $t$ to the complex domain \cite{R1997}.  

The asymptotically free eigenfunctions, $f(x_1,\ldots,x_n; s_1,\ldots, s_n|q,t)$, 
generalize the Macdonald polynomials in that they are defined for complex $s_i$ 
and essentially\footnote{
The precise relation is $P_\lambda(x|q,t) = x^\lambda f(x;t^{\rho}q^{\lambda}|q,t)$ 
with $\rho_i = n-i$, using a common shorthand notation introduced in the main text. 
} reduce to the Macdonald polynomials  
$P_\lambda(x_1,\ldots,x_n|q,t)$ in the special case $s_i = t^{n-i}q^{\lambda_i}$  
$(i = 1,\ldots,n)$; however, 
the series defining the asymptotically free eigenfunctions are only convergent in the asymptotically free domain 
$|x_1|\gg|x_2|\gg\cdots\gg|x_n|$ \cite{NS2012}. 
We mention in passing that one remarkable property of the functions 
$f(x_1,\ldots,x_n;s_1,\ldots,s_n|q,t)$ is that they can be represented by simple explicit formulas \cite[Eqs. (1.10)--(1.11)]{NS2012}.\footnote{
Note that $f(x; s|q, t)$ here is denoted as $p_n(x; s|q, t)$ in \cite{NS2012}.}

Our first main result is a construction of elliptic deformations of the Macdonald polynomials, $\cP_\lambda(x_1, \ldots, x_n; p|q, t)$, 
depending on the elliptic deformation parameter $p$ with
$p\to 0$ corresponding to the trigonometric limit, such that
\begin{equation}
\cP_{\lambda}(x_1,\ldots,x_n;p|q,t) = P_\lambda(x_1,\ldots,x_n|q,t)
+\sum_{k=1}^{\infty} p^k\,\cP_{\lambda,k}(x_1,\ldots,x_n|q,t)
\end{equation}
where $(x_1\cdots x_n)^{k-\lambda_n}\cP_{\lambda,k}(x_1,\ldots,x_n|q,t)$ are 
symmetric polynomials for all $k =1,2,\ldots$. 
Moreover, we show that the series defining these functions is absolutely convergent 
in a suitable domain of the variables and parameters, including the physical domain 
in \eqref{eq:physdom} and real nonnegative $p$.  
We also show that these eigenfunctions, restricted to the physical domain, provide 
a family of orthogonal functions with respect to the pertinent scalar product. 
 Our second main result is a construction of asymptotically free eigenfunctions,  
$f(x_1,\ldots,x_n;s_1,\ldots,s_n;p|q,t)$, 
of the elliptic Ruijsenaars operators in the asymptotic domain 
$|x_1|\gg|x_2|\gg\cdots\gg|x_n|\gg|px_1|$ 
such that
\begin{equation}
\begin{split}
&f(x_1,\ldots,x_n;s_1,\ldots,s_n;p|q,t)
\\
&=f(x_1,\ldots,x_n;s_1,\ldots,s_n|q,t)
+\sum_{k=1}^{\infty} p^k f_k(x_1,\ldots,x_n;s_1,\ldots,s_n|q,t), 
\end{split}
\end{equation}
where $(x_n/x_1)^k f_k(x_1,\ldots,x_n;s_1,\ldots,s_n|q,t)$ are convergent 
power series in the variables $x_2/x_1$, $\ldots$, $x_n/x_{n-1}$, with coefficients 
depending rationally on $s_i$, for all $k=1,2,\ldots$.  
In a similar way to the trigonometric case, 
they essentially recover the elliptic deformations $\cP_\lambda(x_1,\ldots,x_n;p|q,t)$ 
of Macdonald polynomials 
as the special case $s_i=t^{n-i}q^{\lambda_i}$ ($i=1,\ldots,n$).  

In the sense of formal power series in the deformation parameter $p$, 
the construction of the eigenfunctions $\cP_\lambda(x_1,\ldots,x_n;p|q,t)$ 
can be achieved by the standard approach of perturbation of eigenvalue problems
\cite{Kato1995, KT2002}.  
It seems technically demanding, however, to directly prove 
their convergence in suitable domains of the variables and parameters. 
To overcome this difficulty, we first establish the convergence 
of asymptotically free eigenfunctions $f(x_1,\ldots,x_n;s_1,\ldots,s_n;p|q,t)$ 
in the asymptotic domain $|x_1|\gg\cdots\gg|x_n|\gg|px_1|$ by 
the method of majorants for their coefficients which are determined by 
nonlinear recurrences.  
Then, by applying an integral transform 
to those eigenfunctions in the asymptotic domain 
$|x_1|\gg\cdots\gg|x_n|\gg|px_1|$, 
we obtain symmetric eigenfunctions around the physical domain $|x_i|=1$ ($i=1,\ldots,n$) which provide the elliptic deformation of Macdonald polynomials; 
this is a new type of integral transform based on a kernel function for the 
elliptic Ruijsenaars system. 

\par\medskip

In Section \ref{sec:Summary}, we give a detailed presentation of our results, 
ending with the plan (Subsection \ref{ssec:1.5}) 
for Sections 3 -- 6 where we present the proofs of our results. 
In our concluding remarks in Section 7, we summarize our results and discuss 
open problems. 


\subsection{Notations}  

We collect below some notations which will be used in various places of this paper. 

\par\smallskip\noindent (0)  
We use the standard notations $\bZ$, $\bR$, $\bC$ for the 
sets of integers, real numbers and complex numbers, respectively. 
We denote by $\bN=\bZ_{\ge 0}$ and $\bZ_{<0}$ 
the sets of nonnegative and negative integers, 
and by $\bC^\ast=\bC\backslash\!\pr{0}$ 
the set of nonzero complex numbers. 
We also use the notation $q^{S}=\prm{q^{k}}{k\in S}$ 
for a subset $S\subseteq \bZ$. 

\par\smallskip\noindent (1)  
For a set $x=(x_1,\ldots,x_n)$ of $n$ variables, 
we denote by $\bC[x]$, $\bC[x^{\pm1}]$, $\bC\cps{x}$ and $\bC\fps{x}$ 
the rings of all polynomials, Laurent polynomials, convergent power series and formal power series in $x$, 
respectively.   The symmetric group $\frS_n$ of degree $n$ acts on these rings by the permutation 
of indices of $x_i$ ($i=1,\ldots,n$). If $\cR$ is one of them, we denote by 
$\cR^{\frS_n}$ 
the subring of all $\frS_n$-invariant (symmetric) elements in $\cR$. 
When we regard the elliptic nome $p$ as an indeterminate, 
we denote by $\cR\fps{p}$ the ring of formal power series in $p$ 
with coefficients in a ring $\cR$. 

\par\smallskip\noindent (2)  
For each $r=0,1,\ldots,n$, we denote by 
\begin{equation}
e_r(x)=e_r(x_1,\ldots,x_n)=\sum_{1\le i_1<\cdots<i_r\le n}x_{i_1}\cdots x_{i_r}
\end{equation}
the $r$th elementary symmetric function of $x=(x_1,\ldots,x_n)$. 
We 
remark that 
the ring of symmetric polynomials 
$\bC[x]^{\frS_n}$ is generated by the elementary symmetric functions 
$e_1(x),\ldots,e_n(x)$, 
and that 
the ring of symmetric Laurent polynomials 
$\bC[x^{\pm1}]^{\frS_n}$ is generated by 
$e_1(x),\ldots,e_n(x)=x_1\cdots x_n$ and $(x_1\cdots x_n)^{-1}$.  
We use the generating function
\begin{equation}
\sum_{r=0}^{n}(-1)^r e_r(x)u^r=(1-x_1u)\cdots(1-x_nu)
\end{equation}
for the elementary symmetric functions throughout this paper. 

\par\smallskip\noindent (3)  
Let $\mu=(\mu_1,\ldots,\mu_n)\in\bZ^n$ be an $n$-vector of integers.
For a nonzero complex number $c\in\bC^\ast$, and an $n$-vector
$a=(a_1,\ldots,a_n)\in(\bC^\ast)^n$, we define 
\begin{equation}
c^{\mu}=(c^{\mu_1},\ldots,c^{\mu_n}),\quad
a^{\mu}=(a_1^{\mu_1},\ldots,a_n^{\mu_n})\in(\bC^{\ast})^n
\end{equation}
respectively.  
Also, for an $n$-vector $a=(a_1,a_2,\ldots,a_n)\in\bC^n$, 
we denote by $a^{\vee}=(a_n,\ldots,a_2,a_1)\in\bC^n$ the {\em reversal} of $a$.  


\section{Summary of results}\label{sec:Summary}


\subsection{Elliptic Ruijsenaars difference operators}\label{ssec:1.2}
Fixing a base $p\in\bC$ with $|p|<1$, throughout this paper 
we use the notations 
\begin{equation}
\theta(z;p)=(z;p)_\infty(p/z;p)_\infty,\quad
(z;p)_\infty=\prod_{i=0}^{\infty}(1-p^iz), 
\end{equation}
of an elliptic 
theta function with elliptic nome $p$ in the multiplicative variable
and the infinite $p$-shifted factorial,
respectively.  
For $n=1,2,\ldots$, 
we denote by $\cD_x(p)=\cD_x(p|q,t)$ the {\em elliptic Ruijsenaars 
$q$-difference operator} of first order of type $A_{n-1}$, in $n$ variables 
$x=(x_1,\ldots,x_n)$ with parameter $t\in\bC^\ast=\bC\backslash\pr{0}$;
$\cD_x(p)$ is defined by
\begin{equation}
\cD_x(p)=\sum_{i=1}^{n}\prod_{\substack{1\le j\le n\\ j\ne i}}
\frac{\theta(tx_i/x_j;p)}{\theta(x_i/x_j;p)}\,T_{q,x_i},
\end{equation}
where $q\in\bC^\ast$, $|q|<1$, and for $i=1,\ldots,n$, 
$T_{q,x_i}$ stands for the $q$-shift operator in $x_i$ 
such that $T_{q,x_i}(x_j)=q^{\delta_{i,j}}x_j$ ($j=1,\ldots,n$). 
We regard $x=(x_1,\ldots,x_n)$ as the canonical coordinates 
of the direct product $(\bC^\ast)^n$ of $n$ copies of $\bC^\ast$, 
and denote by 
\begin{equation}
\bT^n=\prm{x=(x_1,\ldots,x_n)\in(\bC^\ast)^n\,}{\ |x_i|=1\ \ (i=1,\ldots,n)}
\end{equation}
the $n$-dimensional torus in $(\bC^\ast)^{n}$. 

For each $r=0,1,\ldots,n$, we define a 
$q$-difference operator $\cD^{(r)}_x(p)=\cD^{(r)}_x(p|q,t)$ of $r$th order by 
\begin{equation}\label{eq:defcDrx}
\cD^{(r)}_x(p)=t^{\binom{r}{2}}
\sum_{\substack{I\subseteq\pr{1,\ldots,n}\\ |I|=r}}\,
\prod_{\substack{i\in I\\ j\notin I}}
\frac{\theta(tx_i/x_j;p)}{\theta(x_i/x_j;p)}\,
\prod_{i\in I}T_{q,x_i}
\end{equation}
so that $\cD^{(0)}_x(p)=1$, $\cD^{(1)}_x(p)=\cD_x(p)$ and 
$\cD^{(n)}_x(p)=t^{\binom{n}{2}}T_{q,x_1}\cdots T_{q,x_n}$.  
We remark that the $q$-difference operators 
$\cD^{(-r)}_x(p)=\cD^{(-r)}_x(p|q,t)$ of negative orders defined by 
\begin{equation}
\cD^{(-r)}_x(p)=\cD^{(n-r)}_x(p) \cD^{(n)}_x(p)^{-1}
\end{equation}
satisfy $\cD^{(-r)}_x(p|q,t)=\cD^{(r)}_x(p|q^{-1},t^{-1})$.  
It is known by the pioneering work of Ruijsenaars \cite{R1987} that 
all these operators $\cD^{(r)}_x(p)$ ($r=\pm 1,\ldots,\pm n$) 
commute with each other. 
It seems, however, that understanding of the  
eigenfunctions of this {\em elliptic Ruijsenaars system} is rather limited. 
The purpose of this paper is to investigate the joint eigenvalue problem 
\begin{equation}
\cD^{(r)}_x(p) \psi(x;p)=\vep^{(r)}(p)\psi(x;p)\qquad(r=\pm 1,\ldots,\pm n)
\end{equation}
for this family of commuting difference operators, in which 
the joint eigenfunction $\psi(x;p)$ and the eigenvalues $\vep^{(r)}(p)$ 
are expressible as {\em convergent} power series in $p$:  
\begin{equation}
\psi(x;p)=\sum_{k=0}^{\infty}\,p^k\psi_k(x),\quad
\vep^{(r)}(p)=\sum_{k=0}^{\infty}\,p^k\vep^{(r)}_k\quad(r=1,\ldots,n).  
\end{equation}
We propose in particular two kinds of joint eigenfunctions\footnote{
Any joint eigenfunction of the operators $\cD^{(r)}_x(p)$ 
($r=1,\ldots,n$)
is necessarily an eigenfunction of $\cD^{(r)}_x(p)$ for $r=-1,\ldots,-n$ 
as well.  For this reason, we will mainly consider 
the elliptic Ruijsenaars operators 
$\cD^{(r)}_x(p)$ 
($r=1,\ldots,n$) of positive orders 
in the construction of joint eigenfunctions. }:
\par\medskip\noindent
\quad(1) Symmetric joint eigenfunctions around the torus $\bT^n$, and 
\par\smallskip\noindent 
\quad(2) Asymptotically free joint eigenfunctions in the 
domain $|x_1|\gg|x_2|\gg\cdots\gg|px_1/x_n|$. 
\par\medskip\noindent
The joint eigenfunctions of the first kind can be thought of as elliptic deformations 
of the Macdonald polynomials (\cite{M1995, M2003}), 
and those of the second kinds as elliptic deformations 
of the asymptotically free solutions of the trigonometric case in the domain 
$|x_1|\gg|x_2|\gg\cdots\gg |x_n|$ (\cite{S2005}, \cite{C2009}, \cite{vMS2010}, \cite{NS2012}).  


\subsection{Elliptic deformation of Macdonald polynomials}\label{ssec:1.3}

In the limit as $p\to 0$, the operators $D^{(x)}_x=\cD_x^{(r)}(0)$ appear as
\begin{equation}
D_x^{(r)}=t^{\binom{r}{2}}\sum_{\substack{I\subseteq\pr{1,\ldots,n}\\ |I|=r}}
\prod_{\substack{i\in I\\ j\notin I}}\frac{1-tx_i/x_j}{1-x_i/x_j}\prod_{i\in I}T_{q,x_i}\qquad(r=0,1,\ldots,n),
\end{equation}
which we call the {\em Macdonald-Ruijsenaars operators}.  
They define a commuting family of $\bC$-linear operators
$D_x^{(r)}:\ \bC[x]^{\frS_n}\to \bC[x]^{\frS_n}$ 
on the ring of symmetric polynomials in $x=(x_1,\ldots,x_n)$.  
If the parameter $t\in\bC^\ast$ is generic, they 
are simultaneously diagonalized by the {\em Macdonald polynomials} 
$P_\lambda(x)=P_\lambda(x|q,t)$ attached to partitions $\lambda$ 
with number of positive parts $\ell(\lambda)\le n$ (\cite{M1995}).  
In considering the elliptic deformations of Macdonald polynomials, 
however, we need to consider the operators
$D_x^{(r)}:\ \bC[x^{\pm1}]^{\frS_n}\to \bC[x^{\pm1}]^{\frS_n}$ 
acting on the ring of symmetric {\em Laurent} polynomials as well.  

We say that an $n$-vector $\lambda=(\lambda_1,\ldots,\lambda_n)\in\bZ^n$ 
of integers 
is {\em dominant} if $\lambda_1\ge\lambda_2\ge\cdots\ge\lambda_n$.  
This condition is equivalent to saying that 
it is expressed as $\lambda=\mu+(k^n)$, $(k^n)=(k,\ldots,k)$, 
by a partition $\mu$ with $\ell(\mu)\le n$ and an integer $k\in\bZ$. 
We extend the notation $P_\lambda(x)$ of Macdonald polynomials 
to symmetric Laurent polynomials 
attached to the dominant vectors $\lambda$ in $\bZ^n$  by the rule 
\begin{equation}
P_{\lambda}(x)=P_{\mu+(k^n)}(x)=(x_1\cdots x_n)^k P_{\mu}(x)\quad(k\in \bZ). 
\end{equation}
Then, 
the Macdonald-Ruijsenaars operators 
$D^{(r)}_{x}$ acting on $\bC[x^{\pm1}]^{\frS_n}$ 
are simultaneously diagonalized by them. 

In the following, 
to each $n$-vector $\mu=(\mu_1,\ldots,\mu_n)\in\bZ^n$ of integers 
we attach the monomial $x^{\mu}=x_1^{\mu_1}\cdots x^{\mu_n}$ 
of $x$ variables, 
and denote by $|\mu|=\mu_1+\cdots+\mu_{n}$ the degree of $x^{\mu}$. 
For each dominant vector $\lambda$ in $\bZ^n$,  
we denote by 
\begin{equation}
m_{\lambda}(x)=\sum_{\mu\in\frS_n.\lambda}x^{\mu}
\end{equation}
the {\em monomial symmetric function} of type $\lambda$, i.e. 
the sum 
of all monomials $x^\mu$ that are obtained from $x^{\lambda}$ by 
permuting the variables. 
Then we have 
\begin{equation}
\bC[x^{\pm1}]^{\frS_n}=
\bigoplus_{\substack{\lambda\in\bZ^n:\,\mbox{\scriptsize dominant}}}
\,\bC\,m_{\lambda}(x).  
\end{equation}
Note that $m_{\lambda+(k^n)}(x)=(x_1\cdots x_n)^{k}m_{\lambda}(x)$ 
for any $k\in\bZ$.  
We define the {\em dominance order} $\mu\le\nu$ on $\bZ^n$ by 
the condition
\begin{equation}
|\mu|=|\nu|\quad\mbox{and}\quad 
\mu_1+\cdots+\mu_i\le\nu_1+\cdots+\nu_i\quad
(i=1,\ldots, n-1). 
\end{equation}
Also, for each dominant vector $\lambda$ in $\bZ^n$, we denote by 
$\bC[x^{\pm1}]^{\frS_n}_{\le \lambda}$ 
the vector subspace 
of all symmetric Laurent polynomials of the form
\begin{equation}
\varphi(x)=
\sum_{\mu\le \lambda}\varphi_{\mu}m_{\mu}(x), 
\end{equation}
where the sum is over all dominant vectors $\mu$ in $\bZ^n$ such that 
 $\mu\le\lambda$.  
Then the existence theorem of Macdonald (Laurent) polynomials can be 
formulated as follows. 
\par\medskip
Suppose that the parameter $t\in\bC^\ast$ satisfies the condition 
$t^{k}\notin q^{\bZ_{<0}}$ ($k=1,\ldots,n-1$). 
Then, 
for each dominant vector $\lambda$ in $\bZ^n$, 
there exists a unique 
symmetric Laurent polynomial 
$P_\lambda(x)=P_{\lambda}(x|q,t)\in\bC[x^{\pm1}]^{\frS_n}$ 
such that $P_\lambda(x)$ is expressed in the form 
\begin{equation}
P_\lambda(x)=m_\lambda(x)+\sum_{\mu<\lambda}P_{\lambda,\mu}\,m_{\mu}(x) 
\end{equation}
with leading term $m_\lambda(x)$ with respect the dominance order, 
and satisfies the joint eigenfunction equations
\begin{equation}\label{eq:MAEq}
D_x^{(r)}P_\lambda(x)=e_r(t^{\rho}q^{\lambda})P_{\lambda}(x)\qquad(r=1,\ldots,n); 
\end{equation}
the eigenvalues $e_r(t^{\rho}q^{\lambda})$ are elementary symmetric functions of 
$t^{\rho}q^{\lambda}=(t^{n-1}q^{\lambda_1},t^{n-2}q^{\lambda_2},\ldots,q^{\lambda_n})$.
We use hereafter the notation 
$\rho=(n-1,n-2,\ldots,0)$ for the 
staircase partition, instead of $\delta$ in Macdonald's monograph \cite{M1995}, 
reserving the symbol $\delta$ for other uses. 
\par\medskip
Returning to the elliptic Ruijsenaars operators,  
we expand each $q$-difference operator $\cD^{(r)}_x(p)$ as a power series of $p$ in the form
\begin{equation}
\cD^{(r)}_x(p)=\sum_{k=0}^{\infty} p^k\cD^{(r)}_{x,k}.  
\end{equation}

\begin{prop}\label{prop:RAOp}
For each $r=1,\ldots,n$, the coefficients $\cD^{(r)}_{x,k}$ $(k=0,1,2,\ldots)$ 
are $\frS_n$-invariant $q$-difference operators, and all stabilize the 
ring $\bC[x^{\pm1}]^{\frS_n}$ of symmetric Laurent polynomials. 
Furthermore, they have triangularity property 
\begin{equation}
\cD^{(r)}_{x,k} \big(\bC[x^{\pm1}]^{\frS_n}_{\le \mu}\big)\subseteq 
\bC[x^{\pm1}]^{\frS_n}_{\le \mu+k\phi}
\end{equation}
for each dominant vector $\mu$ in $\bZ^n$, where $\phi=(1,0,\ldots,0,-1)$.
\end{prop}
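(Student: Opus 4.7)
The plan is to derive the three assertions from the manifest structure of $\cD^{(r)}_x(p)$ together with the classical analysis of the Macdonald--Ruijsenaars operator at $p=0$, updated order by order in $p$. The $\frS_n$-invariance is immediate from the defining expression \eqref{eq:defcDrx}: the sum runs over all $r$-subsets $I$, and the coefficient $\prod_{i\in I,j\notin I}\theta(tx_i/x_j;p)/\theta(x_i/x_j;p)$ together with $\prod_{i\in I}T_{q,x_i}$ transforms equivariantly under permutations, so extracting the $p^k$ coefficient yields the $\frS_n$-invariance of each $\cD^{(r)}_{x,k}$. Moreover each $\cD^{(r)}_{x,k}$ is manifestly a finite $\bC$-linear combination of the shift operators $\prod_{i\in I}T_{q,x_i}$ with coefficients that are rational in $x$, having simple poles only along $x_i=x_j$; hence it is a $q$-difference operator in the stated sense.

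For the stabilization of $\bC[x^{\pm1}]^{\frS_n}$, I would adapt the classical Macdonald pole-cancellation argument to the elliptic setting. Given $f\in\bC[x^{\pm1}]^{\frS_n}$, the only potential singularities of $\cD^{(r)}_x(p)f$ lie along the hyperplanes $x_a=x_b$ for $a\ne b$. Pairing subsets $I\ni a$, $I\not\ni b$ with $I'=(I\setminus\{a\})\cup\{b\}$, and using the reflection identity $\theta(z^{-1};p)=-z^{-1}\theta(z;p)$ to compare the residues of the two coefficient functions at $x_a=x_b$, one finds that the residues are opposite; moreover $T_{q,I}f=T_{q,I'}f$ on $x_a=x_b$ by the symmetry of $f$, so the residues cancel pairwise. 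Because $\theta(z;p)$ depends analytically on $p$ via its $(p;p)$-product representation, the cancellation persists at every order in $p$, so each $\cD^{(r)}_{x,k}$ sends $\bC[x^{\pm1}]^{\frS_n}$ into itself.

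For the triangularity, I would start from the factorization
\[
\frac{\theta(tz;p)}{\theta(z;p)}
=\frac{1-tz}{1-z}\prod_{l\ge 1}\frac{(1-p^ltz)(1-p^l/(tz))}{(1-p^lz)(1-p^l/z)}
=\frac{1-tz}{1-z}\sum_{m\ge 0}p^m F_m(z;t),
\]
where $F_0=1$ and each $F_m(z;t)$ is a Laurent polynomial in $z$ of degree at most $m$ on each side. Collecting the $p^k$ coefficient, $\cD^{(r)}_{x,k}$ becomes a finite sum, over subsets $I$ with $|I|=r$ and nonnegative integer distributions $(k_{ij})$ indexed by pairs $(i,j)$ with $i\in I$, $j\notin I$ and satisfying $\sum k_{ij}=k$, of the Macdonald coefficient $\prod(1-tx_i/x_j)/(1-x_i/x_j)$ multiplied by the Laurent polynomial $\prod_{(i,j)}F_{k_{ij}}(x_i/x_j)$ and by the shift $\prod_{i\in I}T_{q,x_i}$. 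Each monomial $x^\alpha$ appearing in this extra multiplier has the form $\alpha=\sum\alpha_{ij}(e_i-e_j)$ with $|\alpha_{ij}|\le k_{ij}$, so for any subset $S\subseteq\{1,\ldots,n\}$ one has $\sum_{i\in S}\alpha_i\le\sum|\alpha_{ij}|\le k$, i.e.\ the dominant rearrangement satisfies $\alpha_+\le k\phi$. Combined with the classical triangularity of $D^{(r)}_x=\cD^{(r)}_{x,0}$ on $\bC[x^{\pm1}]^{\frS_n}_{\le\mu}$ and the elementary dominance estimate $(\mu+\alpha)_+\le\mu+\alpha_+$ for dominant $\mu$ (which follows from $\max_{|S|=l}\sum_{i\in S}(\mu_i+\alpha_i)\le\sum_{i\le l}\mu_i+(\alpha_+)_{\le l}$), this yields the desired inclusion $\cD^{(r)}_{x,k}(\bC[x^{\pm1}]^{\frS_n}_{\le\mu})\subseteq\bC[x^{\pm1}]^{\frS_n}_{\le\mu+k\phi}$. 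I expect this triangularity step to be the main obstacle: the bookkeeping of the dominance shift contributed by $\prod F_{k_{ij}}(x_i/x_j)$, uniformly across all distributions of $k$ among the pairs $(i,j)$ and all choices of $I$, and its correct combination with the classical Macdonald leading-term analysis, is the delicate part of the proof.
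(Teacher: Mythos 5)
Your argument is correct in substance, and the central technical facts you identify (the residue cancellation at $x_a=x_b$, and the degree bound $\deg_z F_m\le m$ in the $p$-expansion of $\theta(tz;p)/\theta(z;p)$) are exactly what is needed — but the route differs from the paper's in two places worth comparing. For stabilization, the paper does not do residue cancellation: it uses the factorization $\cA_I(x;p)=A_I(x)\cC_I(x;p)$ with $\cC_I(x;p)=\prod_{i\in I,j\notin I}(ptx_i/x_j;p)_\infty(px_j/tx_i;p)_\infty/(px_i/x_j;p)_\infty(px_j/x_i;p)_\infty$, observes that $\cC_{I,k}\in\bC[x^{\pm1}]$ (your $F_m$), concludes that $\Delta(x)\cD^{(r)}_{x,k}$ has Laurent-polynomial coefficients, and invokes the algebraic divisibility fact (an anti-symmetric Laurent polynomial is divisible by $\Delta(x)$) together with $\frS_n$-invariance. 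Both arguments are valid; yours is more analytic, the paper's more algebraic, and each delivers the conclusion in one step once the relevant lemma is granted. For triangularity, the paper replaces your monomial-level dominance bookkeeping by the clean observation (Proposition~\ref{prop:RBOp}) that $\cB_I(x;p)$ is holomorphic in $(z_0,z_1,\ldots,z_{n-1})=(px_1/x_n,x_2/x_1,\ldots,x_n/x_{n-1})$; since $p=z_0z_1\cdots z_{n-1}$, extracting the coefficient of $p^k$ automatically places $\cA_{I,k}(x)\in(x_1/x_n)^k\,\bC\pr{x_2/x_1,\ldots,x_n/x_{n-1}}$, and the inclusion then follows from the identification $\bC[x^{\pm1}]^{\frS_n}_{\le\lambda}=\bC[x^{\pm1}]^{\frS_n}\cap x^\lambda\bC\cps{x_2/x_1,\ldots,x_n/x_{n-1}}$. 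This packages exactly the estimate you extract from $F_m$, but without having to track dominant rearrangements of individual monomials. One caution on your phrasing: ``combined with the classical triangularity of $D^{(r)}_x$'' is a bit loose, since the Laurent-polynomial multipliers $\prod F_{k_{ij}}(x_i/x_j;t)$ vary with $I$, so $\cD^{(r)}_{x,k}$ is not a Laurent polynomial times $D^{(r)}_x$ and the classical triangularity of the whole sum does not transfer as a black box. What one really needs is the term-by-term fact $A_I(x)\in\bC\cps{x_2/x_1,\ldots,x_n/x_{n-1}}$ (which underlies the classical proof) combined with your degree bound for each $I$ separately, followed by the intersection with $\bC[x^{\pm1}]^{\frS_n}$ from the stabilization step. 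If you unpack your sketch that way, it lines up with the paper's argument; as written, the combination step is the one place a reader could reasonably ask for more.
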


By Proposition \ref{prop:RAOp}, the elliptic Ruijsenaars operators define a commuting family 
of linear operators
\begin{equation}
\cD^{(r)}_x(p): \ \bC[x^{\pm1}]^{\frS_n}\fps{p}\to\bC[x^{\pm1}]^{\frS_n}\fps{p}\quad (r=1,\ldots,n)
\end{equation}
acting on the ring of formal power series in $p$ with coefficients 
in $\bC[x^{\pm1}]^{\frS_n}$. 
We first investigate formal solutions of the joint eigenfunction equations 
\begin{equation}\label{eq:RAEq}
\cD^{(r)}_{x}(p)\psi(x;p)=\vep^{(r)}(p)\psi(x;p)\quad(r=1,\ldots,n)
\end{equation}
for $\psi(x;p)\in \bC[x^{\pm1}]^{\frS_n}\fps{p}$ and $\vep^{(r)}(p)\in\bC\fps{p}$.  

\begin{thm}\label{thm:RAFm}
Suppose that 
$t^{k}\notin q^{\bZ_{<0}}$ $(k=1,\ldots,n-1)$.  
Then, for each dominant vector $\lambda$ in $\bZ^n$, 
the initial value problem $\psi(x;0)=P_\lambda(x)$ 
of the joint eigenfunction equations \eqref{eq:RAEq}
has a formal solution $\cP_\lambda(x;p)\in\bC[x^{\pm1}]^{\frS_n}\fps{p}$ with 
uniquely determined formal eigenvalues $\vep^{(r)}_\lambda(p)\in\bC\fps{p}$ $(r=1,\ldots,n)$.  
The formal solution $\cP_\lambda(x;p)$ is determined uniquely up to multiplication 
by a power series $\gamma(p)\in\bC\fps{p}$ such that $\gamma(0)=1$. 
\end{thm}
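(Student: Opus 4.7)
The plan is to solve the joint eigenvalue problem \eqref{eq:RAEq} order by order in $p$ by Rayleigh--Schr\"odinger perturbation theory in the Macdonald polynomial basis. Expanding $\cD^{(r)}_x(p) = \sum_{k \ge 0} p^k \cD^{(r)}_{x,k}$, $\psi(x;p) = \sum_{k \ge 0} p^k \psi_k(x)$, and $\vep^{(r)}(p) = \sum_{k \ge 0} p^k \vep^{(r)}_k$, matching coefficients of $p^0$ reduces \eqref{eq:RAEq} to the Macdonald eigenvalue equation \eqref{eq:MAEq} and pins down $\psi_0 = P_\lambda$ and $\vep^{(r)}_0 = e_r(t^\rho q^\lambda)$. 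For $k \ge 1$ the recursion becomes
$$
(D^{(r)}_x - e_r(t^\rho q^\lambda))\,\psi_k \;=\; F^{(r)}_k \;-\; \vep^{(r)}_k\, P_\lambda,
$$
where $F^{(r)}_k \in \bC[x^{\pm1}]^{\frS_n}$ is completely determined by $\psi_0,\ldots,\psi_{k-1}$ and $\vep^{(r)}_1,\ldots,\vep^{(r)}_{k-1}$. By Proposition \ref{prop:RAOp}, if each previous $\psi_j$ lies in $V_j := \bC[x^{\pm1}]^{\frS_n}_{\le \lambda + j\phi}$ then $F^{(r)}_k \in V_k$, so I seek $\psi_k$ in the finite-dimensional space $V_k$.

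Expanding $\psi_k = \sum_{\nu \le \lambda + k\phi} c_{k,\nu} P_\nu$ in the Macdonald basis, the operator $D^{(r)}_x - e_r(t^\rho q^\lambda)$ acts diagonally with eigenvalues $e_r(t^\rho q^\nu) - e_r(t^\rho q^\lambda)$. The $\nu = \lambda$ component of the recursion forces $\vep^{(r)}_k$ to equal the $P_\lambda$-component of $F^{(r)}_k$, so the formal eigenvalue at this order is uniquely determined; for $\nu \ne \lambda$ the coefficient equation reads $c_{k,\nu}(e_r(t^\rho q^\nu) - e_r(t^\rho q^\lambda)) = [F^{(r)}_k]_\nu$, yielding $n$ potentially over-determining conditions per unknown $c_{k,\nu}$.

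The key issue is mutual compatibility of these $n$ conditions, which I would handle by the single-operator trick. Choose generic scalars $\alpha_1,\ldots,\alpha_n \in \bC$ so that $L_0 := \sum_r \alpha_r D^{(r)}_x$ satisfies $\sum_r \alpha_r e_r(t^\rho q^\nu) \ne \sum_r \alpha_r e_r(t^\rho q^\lambda) =: \eta_0$ for every dominant $\nu \ne \lambda$. Standard single-operator perturbation theory applied to $L(p) := \sum_r \alpha_r \cD^{(r)}_x(p)$ then produces a formal power series $\cP_\lambda(x;p) \in \bC[x^{\pm1}]^{\frS_n}\fps{p}$ solving $L(p)\cP_\lambda = \eta(p)\cP_\lambda$ with $\cP_\lambda(x;0) = P_\lambda$, unique up to an overall scalar factor $\gamma(p) \in \bC\fps{p}$ with $\gamma(0) = 1$. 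Since $[\cD^{(r)}_x(p), L(p)] = 0$, each $\cD^{(r)}_x(p)\cP_\lambda$ is also a formal $L(p)$-eigenvector with eigenvalue $\eta(p)$, so by the just-established uniqueness-up-to-scalar of such formal eigenvectors, $\cD^{(r)}_x(p)\cP_\lambda = \vep^{(r)}_\lambda(p)\,\cP_\lambda$ for uniquely determined formal scalars $\vep^{(r)}_\lambda(p) \in \bC\fps{p}$, as required.

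The main obstacle is producing the separating $\alpha$. It reduces to showing that under $t^k \notin q^{\bZ_{<0}}$ ($k = 1,\ldots,n-1$), the multiset $\{t^{n-i} q^{\mu_i}\}_{i=1}^n$ separates distinct dominant vectors $\mu \in \bZ^n$; any $\alpha \in \bC^n$ lying outside the countable union of hyperplanes $\sum_r \alpha_r(e_r(t^\rho q^\nu) - e_r(t^\rho q^\lambda)) = 0$, taken over dominant $\nu \ne \lambda$, then suffices, and at each fixed order in $p$ only finitely many of these hyperplanes are active. The residual freedom is exactly the choice of $c_{k,\lambda}$ at each order $k \ge 1$, which assembles into the overall gauge factor $\gamma(p) = 1 + \sum_{k \ge 1} c_{k,\lambda}\, p^k$ asserted by the theorem.
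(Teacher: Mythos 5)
Your proposal is correct and follows essentially the same route as the paper: reduce to a single operator $L(p)=\sum_r\alpha_r\cD^{(r)}_x(p)$ for a generic choice of scalars, run Rayleigh--Schr\"odinger perturbation in the Macdonald basis using Proposition~\ref{prop:RAOp} and the separation of eigenvalues under $t^k\notin q^{\bZ_{<0}}$, then invoke commutativity to upgrade the single-operator eigenvector to a joint eigenvector. The paper realizes the same idea by fixing the generating-function parameter $u=c$ generically (Lemmas~\ref{lem:separation} and~\ref{lem:genericconst}), i.e.\ it takes $\alpha_r=(-c)^r$, so the two arguments differ only in this cosmetic parametrization.
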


Proposition \ref{prop:RAOp} also implies 
that the formal joint eigenfunction $\cP_\lambda(x;p)$ has $p$-expansion 
of the form
\begin{equation}\label{eq:cPpexp}
\cP_\lambda(x;p)=\sum_{k=0}^{\infty} p^k \cP_{\lambda,k}(x),
\quad \cP_{\lambda,k}(x)\in\bC[x^{\pm1}]^{\frS_n}_{\le \lambda+k\phi},
\end{equation}
with $\cP_{\lambda,0}(x)=P_\lambda(x)$.  
If $\lambda$ is a partition with $\ell(\lambda)\le n$, in particular, 
we have $\cP_{\lambda,k}(x)\in (x_1\cdots x_n)^{-k}\bC[x]^{\frS_n}$ 
for $k=0,1,2,\ldots$.  

We say that 
the formal solution $\cP_{\lambda}(x;p)$ is {\em normalized} 
if the coefficient of $m_{\lambda}(x)$ in $\cP_{\lambda,k}(x)$ is 0 for $k>0$, i.e. 
\begin{equation}\label{eq:cPpexp2}
\cP_{\lambda,k}(x)=\sum_{\mu\le \lambda+k\phi} \cP_{\lambda,k;\mu}\,m_{\mu}(x),
\quad \cP_{\lambda,k;\lambda}=\delta_{k,0}\quad(k=0,1,2,\ldots). 
\end{equation}
This condition is equivalent to saying that $x^{-\lambda}\cP_{\lambda}(x;p)$ has constant term 1
with respect to the $x$ variables.  
Hence, there exists a unique normalized formal solution $\cP_{\lambda}(x;p)$ 
to the initial value problem of Theorem \ref{thm:RAFm}, 
and the other solutions are expressed as $\gamma(p)\cP_{\lambda}(x;p)$ 
with some $\gamma(p)\in\bC\fps{p}$ with $\gamma(0)=1$.  
Note also that any nonzero formal solution $\varphi(x;p)$ of the 
joint eigenfunction equations \eqref{eq:RAEq} is expressed as  
$\varphi(x;p)=\gamma(p)\cP_\lambda(x;p)$ for some 
dominant vector $\lambda$ in $\bZ^n$ and $\gamma(p)\in\bC\fps{p}$.  

We remark that, for each $k=0,1,2,\ldots$, 
the coefficient of the leading term 
$m_{\lambda+k\phi}(x)$ in $\cP_{\lambda,k}(x)$ of \eqref{eq:cPpexp2} is explicitly computed as 
\begin{equation}
\cP_{\lambda,k;\lambda+k\phi}=
\frac{(t;q)_k(t^{n}q^{\lambda_1-\lambda_n};q)_k}{(q;q)_k(t^{n-1}q^{\lambda_1-\lambda_n+1};q)_k}(q/t)^k
\quad(k=0,1,2,\ldots),
\end{equation}
where $(a;q)_k=(1-a)(1-qa)\cdots(1-q^{k-1}a)$. 
It should be noted that the ``groundstate'' $\cP_{0}(x;p)$ determined by the initial condition 
$\cP_{0}(x;0)=1$ is already a {\em nontrivial} power series in $x$.\footnote{
Here, the subcript $0$ stands for the empty partition. 
}
We also remark that the normalized formal solutions $\cP_\lambda(x;p)$ 
satisfy the property 
\begin{equation}
\cP_{\lambda+(k^n)}(x;p)=(x_1\cdots x_n)^k\cP_{\lambda}(x;p)\quad(k\in\bZ),
\end{equation}
similarly to the case of Macdonald polynomials.  In particular, 
$\cP_{\lambda}(x;p)$ for general dominant vectors $\lambda$ in $\bZ^n$ 
are determined 
from those attached to partitions $\lambda$ with $\ell(\lambda)\le n$.  

\par\medskip
As to the convergence of the formal joint eigenfunctions, we propose the following theorem.  
\begin{thm}\label{thm:RACv}
Suppose that the parameter $t\in\bC^\ast$ satisfies the condition 
$t^k\notin q^{\bZ}$ $(k=1,\ldots,n-1)$ 
and  let $\lambda$ be a dominant vector in $\bZ^n$. 
Then, there exists a constant $\tau\in(0,1]$, not depending on $p$, 
such that
the normalized formal joint eigenfunction $\cP_\lambda(x;p)$ is absolutely convergent 
in the domain
\begin{equation}
|p|/\tau<|x_j/x_i|<\tau/|p|\qquad(1\le i<j\le n), 
\end{equation}
for $|p|<\tau$\,$;$  
the formal eigenvalues $\vep^{(r)}_{\lambda}(p)$ $(r=1,\ldots,n)$ are also absolutely 
convergent for $|p|<\tau$.  
\end{thm}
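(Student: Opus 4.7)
The plan is to avoid a direct estimate of the coefficients $\cP_{\lambda,k}(x)$ from their defining triangular recurrence: although such a recurrence is available in principle from Proposition \ref{prop:RAOp}, the subspaces $\bC[x^{\pm1}]^{\frS_n}_{\le \lambda+k\phi}$ grow linearly in $k$ (since $\phi=(1,0,\ldots,0,-1)$), and extracting a clean $p$-independent majorant directly at the symmetric-Laurent-polynomial level seems intractable. Instead I would route the proof through the asymptotically free eigenfunctions announced in Subsection \ref{ssec:1.1} and recover $\cP_\lambda(x;p)$ by an integral transform built from a kernel function for the elliptic Ruijsenaars operators.

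First I would construct $f(x;s;p\mid q,t)$ as a formal power series in $p$ whose coefficients $f_k(x;s)$ are convergent series in the ratios $x_{i+1}/x_i$ with coefficients rational in $s$, and prove its absolute convergence in the asymptotic domain $|x_1|\gg|x_2|\gg\cdots\gg|x_n|\gg|p\,x_1|$. Substituting the ansatz into $\cD^{(r)}_x(p) f=\vep^{(r)}(p) f$ and expanding both in $p$ and in the asymptotic variables yields a nonlinear recurrence for the coefficients of each $f_k$, whose leading-order ($k=0$) piece is invertible by the trigonometric theory of \cite{NS2012}. The heart of this step is a majorant estimate: bound all coefficients of $f$ termwise by those of a single scalar series $F(X;p)$ defined by a simpler positive recurrence, and conclude convergence in the stated region, uniformly for $|p|$ small and $s$ in compacta avoiding the resonance loci forced by $t^k\notin q^{\bZ}$.

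Next I would set up the integral transform. Take a kernel $\Phi(x,y;p)$ satisfying a duality relation
\begin{equation}
\cD^{(r)}_x(p)\,\Phi(x,y;p)=\widetilde{\cD}^{(r)}_y(p)\,\Phi(x,y;p)\qquad(r=1,\ldots,n),
\end{equation}
with a suitable partner operator $\widetilde{\cD}^{(r)}_y(p)$ (involving the parameters $q,t^{-1}$ on the $y$ side), and define
\begin{equation}
\Psi_\lambda(x;p)=\oint_{\cC}\cdots\oint_{\cC}\Phi(x,y;p)\,f(y;t^{\rho}q^{\lambda};p\mid q,t^{-1})\,\Delta(y;p)\,\frac{\dd y_1}{y_1}\cdots\frac{\dd y_n}{y_n},
\end{equation}
where $\Delta(y;p)$ is an appropriate weight and $\cC$ is a multi-contour lying inside the asymptotic region of $f$. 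Invariance of the $y$-contour under the $q$-shifts $T_{q,y_i}$, combined with the kernel identity, transfers the eigenvalue equations for $f$ to those for $\Psi_\lambda$, producing an $\frS_n$-symmetric holomorphic joint eigenfunction in an annular neighbourhood of $\bT^n$; holomorphy in $p$ near $0$ comes from the uniform bounds of the previous step. Comparing $\Psi_\lambda(x;0)$ with $P_\lambda(x)$ and invoking the uniqueness assertion of Theorem \ref{thm:RAFm}, one obtains $\Psi_\lambda(x;p)=\cP_\lambda(x;p)$ after normalization, which analytically continues $\cP_\lambda(x;p)$ to the desired annulus $|p|/\tau<|x_j/x_i|<\tau/|p|$; convergence of the eigenvalues $\vep^{(r)}_\lambda(p)$ then follows by reading them off from a fixed monomial coefficient after applying $\cD^{(r)}_x(p)$.

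The main obstacle is the majorant estimate of the first step. The elliptic recurrence is nonlinear in the coefficients of $f$ and carries denominators coming from the theta functions $\theta(x_i/x_j;p)$, giving rise to small divisors of the form $1-t^a q^b s_i/s_j$; the hypothesis $t^k\notin q^{\bZ}$ is exactly what keeps these denominators bounded away from zero on the slice $s=t^{\rho}q^{\lambda}$, but showing that a single scalar majorant dominates every coefficient uniformly in $k$ requires careful bookkeeping of the interplay between the $p$-expansion and the asymptotic expansion. Once this is in place, the kernel-function/contour-deformation step is comparatively mechanical.
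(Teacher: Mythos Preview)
Your plan is essentially the paper's own strategy: establish convergence of the asymptotically free eigenfunction $f(x;s;p)$ by a majorant argument on its quadratic recurrence, then transport it to a neighbourhood of the torus by a kernel-function integral transform and invoke the uniqueness in Theorem~\ref{thm:RAFm}. The one technical point where your sketch diverges from the paper is the form of the duality on the $y$-side: the paper uses the Cauchy-type kernel $K(x,y;p)=\prod_{i,j}\Gamma(x_i/y_j;p,q)/\Gamma(tx_i/y_j;p,q)$ with the identity $\cD_x(p;u)K=\cD_{y^{-1}}(p;u)K$, and then a factorization $w^{\mathrm{sym}}(y;p)=w(y;p)g(y)$ that turns the $y$-side into the \emph{modified} operator $\cE_{y,t^{\rho^\vee}}(p;u)$ acting on $y^{\lambda^\vee}f(y;s^\vee;p)$ with the \emph{reversed} spectral parameter $s^\vee=t^{\rho^\vee}q^{\lambda^\vee}$ --- not a change of base parameter $t\to t^{-1}$ as you guessed. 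With that correction, your outline coincides with the paper's argument.
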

Note that, if $t^k\ne 1$ ($k=1,\ldots,n-1$) and $|q|<\min\pr{|t|^{n-1},|t|^{-n+1}}$, 
then the condition $t^k\notin q^{\bZ}$ ($k=1,\ldots,n-1$) is automatically satisfied. 

\par\medskip
We remark that the symmetric joint eigenfunctions 
$\cP_{\lambda}(x;p)$ 
of Theorem \ref{thm:RACv} are holomorphic around the torus $\bT^n$; 
they can be thought of as elliptic deformation of the Macdonald polynomials.  
In fact, they satisfy 
the joint eigenfunction equations
\begin{equation}\label{eq:cPEFEq}
\cD^{(r)}_x(p)\cP_\lambda(x;p)=\vep^{(r)}_\lambda(p)\cP_\lambda(x;p)\quad(r=1,\ldots,n)
\end{equation}
for the elliptic Ruijsenaars operators, with initial condition 
$\cP_\lambda(x;0)=P_\lambda(x)$, $\vep^{(r)}_\lambda(0)=e_r(t^{\rho}q^{\lambda})$, 
in the domain 
$|p|/|q|\tau <|x_j/x_i|<|q|\tau/|p|$ ($1\le i<j\le n$) for $|p|<|q|\tau$.  
We expect that $\cP_{\lambda}(x;p)$ could be continued to 
global meromorphic functions on $(\bC^\ast)^n$, 
by means of the $q$-difference equations \eqref{eq:cPEFEq},
but we do not have a proof of this conjecture yet.  
\par\medskip
With this existence theorem established, we can prove the orthogonality relations for 
the normalized joint eigenfunctions $\cP_{\lambda}(x;p)$. 
We denote by $\Gamma(z;p,q)$ the {\em Ruijsenaars elliptic gamma function}
with bases $p,q$: 
\begin{equation}
\Gamma(z;p,q)=\frac{(pq/z;p,q)_\infty}{(z;p,q)_\infty}, \quad
(z;p,q)_\infty=\prod_{i,j=0}^{\infty}(1-p^iq^jz).  
\end{equation}
We define a meromorphic function $w^{\mathrm{sym}}(x;p)$ on 
$(\bC^\ast)^n$ by 
\begin{equation}\label{eq:wtfnruij}
w^{\mathrm{sym}}(x;p)=\prod_{1\le i<j\le n}\frac{\Gamma(tx_i/x_j;p,q)\Gamma(tx_j/x_i;p,q)}{\Gamma(x_i/x_j;p,q)\Gamma(x_j/x_i;p,q)}.  
\end{equation} 
We also denote by $d\omega_n(x)$ the normalized invariant measure on $\bT^n$: 
\begin{equation}
d\omega_n(x)=\frac{1}{(2\pi\sqrt{-1})^n}\frac{dx_1\cdots dx_n}{x_1\cdots x_n},
\qquad \int_{\bT^n}d\omega_n(x)=1.  
\end{equation}

\begin{thm}\label{thm:RAOr}
Suppose that $|t|<1$ and $t^k\notin q^{\bZ_{>0}}$ $(k=1,\ldots,n-1)$.
Then, the normalized joint eigenfunctions $\cP_{\lambda}(x;p)$ 
of the elliptic Ruijsenaars operators, 
attached to the dominant vectors $\lambda$ in $\bZ^n$, 
are orthogonal with respect to the weight function 
$w^{\mathrm{sym}}(x;p)$ of \eqref{eq:wtfnruij}  in the sense that 
\begin{equation}\label{eq:cPorth}
\int_{\bT^n}\cP_{\lambda}(x^{-1};p)\cP_{\mu}(x;p)w^{\mathrm{sym}}(x;p)d\omega_n(x)=\delta_{\lambda,\mu}\,\cN_{\lambda}(p)
\end{equation}
with some constants $\cN_\lambda(p)$ depending on $p$.  
\end{thm}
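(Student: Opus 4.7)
The plan is to combine self-adjointness of the elliptic Ruijsenaars operators with respect to the pairing in \eqref{eq:cPorth} with a separation-of-eigenvalues argument, following the template that is classical in the trigonometric ($p=0$) Macdonald case.

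\textbf{Self-adjointness (the main step).} Writing $\cD^{(r)}_x(p) = \sum_{|I|=r} A_I(x;p)\, T_{q,x_I}$ with $A_I(x;p) = t^{\binom{r}{2}}\prod_{i\in I,\,j\notin I}\theta(tx_i/x_j;p)/\theta(x_i/x_j;p)$ and $T_{q,x_I}=\prod_{i\in I}T_{q,x_i}$, my first task is to establish the weight-function identity
\[
A_I(x^{-1};p)\, w^{\mathrm{sym}}(x;p) = A_I(q^{-e_I}x;p)\, w^{\mathrm{sym}}(q^{-e_I}x;p) \qquad (I\subseteq\{1,\ldots,n\}),
\]
where $q^{-e_I}x$ denotes the vector with $x_i$ replaced by $q^{-1}x_i$ for $i\in I$ and $x_j$ unchanged for $j\notin I$. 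This is a direct algebraic consequence of the functional equation $\Gamma(qz;p,q) = \theta(z;p)\Gamma(z;p,q)$ for the Ruijsenaars elliptic gamma function, together with the reflection $\theta(z^{-1};p) = -z^{-1}\theta(z;p)$: for each crossing pair $(i,j)$ with $i\in I$, $j\notin I$, the four theta factors match on both sides after cancellation. Given this identity, the self-adjointness relation
\[
\int_{\bT^n} (\cD^{(r)}_x(p)f)(x^{-1})\, g(x)\, w^{\mathrm{sym}}(x;p)\, d\omega_n(x) = \int_{\bT^n} f(x^{-1})\, (\cD^{(r)}_x(p)g)(x)\, w^{\mathrm{sym}}(x;p)\, d\omega_n(x)
\]
follows by substituting $x_i\mapsto q^{-1}x_i$ for $i\in I$ in the right-hand side and summing over $I$. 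The hypotheses $|t|<1$ and $t^k\notin q^{\bZ_{>0}}$ ensure that $w^{\mathrm{sym}}$ extends holomorphically to a neighborhood of the polydomain swept out by these contour shifts.

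\textbf{Eigenvalue separation and conclusion.} For distinct dominant vectors $\lambda\ne\mu$ I need some $r$ with $\vep^{(r)}_\lambda(p)\ne\vep^{(r)}_\mu(p)$. If $|\lambda|\ne|\mu|$, the top-order operator $\cD^{(n)}_x(p) = t^{\binom{n}{2}}T_{q,x_1}\cdots T_{q,x_n}$ is $p$-independent and acts on $\cP_\lambda(x;p)$ by the scalar $t^{\binom{n}{2}}q^{|\lambda|}$, immediately separating $\lambda$ and $\mu$. If $|\lambda|=|\mu|$, the leading values $\vep^{(r)}_\lambda(0) = e_r(t^\rho q^\lambda)$ distinguish dominant $\lambda,\mu$ for some $r$ by the classical theory of Macdonald polynomials (under the genericity assumed on $t$), whence $\vep^{(r)}_\lambda(p)\ne\vep^{(r)}_\mu(p)$ as convergent power series in $p$. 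Denoting the integral in \eqref{eq:cPorth} by $J_{\lambda,\mu}(p)$, the self-adjointness relation applied to $f=\cP_\lambda$, $g=\cP_\mu$, combined with the eigenfunction equation \eqref{eq:cPEFEq}, yields
\[
\vep^{(r)}_\lambda(p)\, J_{\lambda,\mu}(p) = \vep^{(r)}_\mu(p)\, J_{\lambda,\mu}(p)
\]
for every $r$. Choosing $r$ as above forces $J_{\lambda,\mu}(p)=0$ whenever $\lambda\ne\mu$, and $\cN_\lambda(p):=J_{\lambda,\lambda}(p)$ are the resulting norms.

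The main obstacle is the analytic justification of the contour shift underlying self-adjointness. Although the weight identity itself is a clean consequence of the elliptic gamma functional equation, $w^{\mathrm{sym}}$ carries a rich system of poles at points such as $x_i/x_j \in t^{-1}p^{-\bN}q^{-\bN}$ and their reciprocals, whose avoidance by the successively shifted tori $|x_i|=|q|^{-1}$ for $i\in I$ is controlled precisely by the hypotheses $|t|<1$ and $t^k\notin q^{\bZ_{>0}}$; tracking these carefully through the $n$-fold shifts is the delicate part.
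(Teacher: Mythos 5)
Your overall strategy---formal self-adjointness of $\cD_x(p;u)$ with respect to the scalar product, followed by eigenvalue separation---is the same as the paper's. Your ``weight-function identity'' $A_I(x^{-1};p)\,w^{\mathrm{sym}}(x;p)=A_I(q^{-\ep_I}x;p)\,w^{\mathrm{sym}}(q^{-\ep_I}x;p)$ is exactly the coefficient-level restatement of the paper's Lemma \ref{lem:selfadjoint}, $\cD_y(p;u)^\ast=w^{\mathrm{sym}}(y;p)\cD_{y^{-1}}(p;u)w^{\mathrm{sym}}(y;p)^{-1}$, which the paper proves structurally by factoring $w^{\mathrm{sym}}(y;p)=w^{\mathrm{sym}}_+(y;p)w^{\mathrm{sym}}_+(y^{-1};p)$ and conjugating by $\frS_n$. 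Your eigenvalue separation is also correct; your $|\lambda|\neq|\mu|$ refinement via the $p$-independent top operator $\cD^{(n)}_x(p)$ is a nice shortcut, though not strictly needed since $\vep_\lambda(0;u)=d_\lambda(u)$ already separates dominant weights under $t^k\notin q^{\bZ_{<0}}$ (automatic from $|t|<1$, $|q|<1$).

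The genuine gap is in the contour-shift step. You assert that $|t|<1$ and $t^k\notin q^{\bZ_{>0}}$ by themselves guarantee that the shifted tori avoid the poles of the integrand, but this fails when $|q|<|t|<1$: the shift $|x_i|\to|q|^{-1}$ for $i\in I$ moves $|x_i/x_j|$ (with $j\notin I$) from $1$ up to $|q|^{-1}$, which then exceeds $|t|^{-1}$ and therefore crosses the pole of $w^{\mathrm{sym}}$ at $x_i/x_j=t^{-1}$. The paper handles this in two steps you omit: first, it observes that the poles of $\cA_I(q^{-\ep_I}x;p)$ are cancelled by zeros of $w^{\mathrm{sym}}(q^{-\ep_I}x;p)$, so the only poles to worry about are those of $w^{\mathrm{sym}}$ itself; second, it proves the self-adjointness relation under the auxiliary restriction $|t|<|q|<1$, where $q^{-\ep_I}\bT^n$ is genuinely homologous to $\bT^n$ in the domain of holomorphy, and then removes this restriction by holomorphic continuation in $q$ of both sides of the identity. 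Without that continuation argument (or some substitute), your contour-shift justification does not cover the full range of $(q,t)$ allowed by the theorem.
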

(We expect that the assumption $t^{k}\in q^{\bZ_{>0}}$ ($k=1,\ldots,n-1$) could be removed.) 
\par\medskip
We remark that the integral \eqref{eq:cPorth} for $p=0$ is 
Macdonald's scalar product $\br{\ ,\ }'$ in \cite{M1995} 
up to a factor $n!$.  
It would be an intriguing problem to evaluate the norm $\cN_\lambda(p)$ in  \eqref{eq:cPorth}, although it seems challenging at present. 
Notice here that the weight function 
$w^{\mathrm{sym}}(x;p)=w^{\mathrm{sym}}(x;p,q,t)$ is invariant by 
exchanging $p$ and $q$. 
This suggests that the orthogonal functions $\cP_{\lambda}(x;p)=\cP_{\lambda}(x;p|q,t)$ should also be symmetric with respect to $(p,q)$.  


\subsection{Asymptotically free eigenfunctions in the domain 
$|x_1|\gg\cdots\gg|x_n|\gg|px_1|$}\label{ssec:1.4}

Denoting by 
\begin{equation}
\Delta(x)=\prod_{1\le i<j\le n}(x_i-x_j)=x^{\rho}\prod_{1\le i<j\le n}(1-x_j/x_i)
\end{equation}
the difference product of $x$ variables, 
we write the Macdonald-Ruijsenaars operators in the form 
\begin{equation}
D^{(r)}_x=\sum_{I\subseteq\pr{1,\ldots,n};\ |I|=r} A_I(x)T_{q,x}^{I},
\quad A_I(x)=\frac{T_{t,x}^{I}(\Delta(x))}{\Delta(x)}, 
\end{equation}
where $T_{q,x}^{I}=\prod_{i\in I}T_{q,x_i}$.  
Then, from the expression 
\begin{equation}\label{eq:MRAB}
A_I(x)=
(t^{\rho})^I
B_I(x),\quad 
B_I(x)=
\prod_{\substack{1\le i<j\le n\\ i\in I,\,j\notin I}}\frac{1-x_j/tx_i}{1-x_j/x_i}
\prod_{\substack{1\le i<j\le n\\ i\notin I,\,j\in I}}\frac{1-tx_j/x_i}{1-x_j/x_i}
\end{equation}
with $(t^{\rho})^I=\prod_{i\in I}t^{n-i}$, 
it follows that the coefficients 
$A_I(x)$ are holomorphic in the domain $|x_2/x_1|<1, \ldots, |x_n/x_{n-1}|<1$, 
and they are represented 
there by convergent power series of the variables $x_2/x_1,\ldots,x_n/x_{n-1}$.
Hence  
we see that, for each complex vector $\lambda=(\lambda_1,\ldots,\lambda_n)\in\bC^n$, 
the Macdonald-Ruijsenaars operators
define a commuting family of linear operators 
\begin{equation}
D^{(r)}_x:\ x^{\lambda} \bC\pr{x_2/x_1,\ldots,x_n/x_{n-1}}\to x^{\lambda} \bC\pr{x_2/x_1,\ldots,x_n/x_{n-1}}
\quad(r=1,\ldots,n)
\end{equation}
on the space of convergent power series of the form
\begin{equation}
\psi(x)=x_1^{\lambda_1}\cdots x_n^{\lambda_n}\sum_{k_1,\ldots,k_{n-1}\in\bZ_{\ge 0}} 
c_{k_1,\ldots,k_{n-1}} (x_2/x_1)^{k_1}\cdots (x_n/x_{n-1})^{k_{n-1}}.
\end{equation}
Regarding the joint eigenfunctions in this setting, 
it is known that, 
for each complex vector 
$\lambda\in\bC^n$ such that $t^{i-j}q^{\lambda_j-\lambda_i}\notin 
q^{\bZ}$ $(1\le i<j\le n)$, 
the joint eigenvalue problem
\begin{equation}
D^{(r)}_x \psi_\lambda(x)=e_r(t^{\rho}q^{\lambda})\psi_\lambda(x)\quad(r=1,\ldots,n)
\end{equation}
has a unique solution $\psi_\lambda(x)\in x^{\lambda}\bC\pr{x_2/x_1,\ldots,x_n/x_{n-1}}$ 
with leading term $x^{\lambda}$.  
Such a solution is 
called {\em asymptotically free} in the domain $|x_1|\gg|x_2|\gg\cdots\gg|x_n|$ 
(see Shiraishi \cite{S2005}, Cherednik \cite{C2009}, van Meer-Stockman \cite{vMS2010}, Noumi-Shiraishi \cite{NS2012}).

It is also known that there exists a meromorphic function  $f(x;s)$ 
in the {\em position variables} $x=(x_1,\ldots,x_n)$ and the {\em spectral variables} $s=(s_1,\ldots,s_n)$ 
such that $\psi_\lambda(x)=x^{\lambda}f(x;t^{\rho}q^{\lambda})$ for generic 
$\lambda\in\bC^n$ 
under the identification $s_i=t^{n-i}q^{\lambda_i}$ ($i=1,\ldots,n$).  
We remark 
that the conjugation of the Macdonald-Ruijsenaars operators $D^{(r)}_x$ by $x^\lambda$ 
are expressed as
\begin{equation}
x^{-\lambda}D_x^{(r)}x^{\lambda}=\sum_{I\subseteq\pr{1,\ldots,n};\ |I|=r}
(t^\rho q^\lambda)^I
B_I(x)T_{q,x}^{I},
\end{equation}
where $(t^{\rho}q^{\lambda})^I=\prod_{i\in I}t^{n-i}q^{\lambda_i}$. 
In view of this expression, replacing $(t^{\rho}q^{\lambda})^I$ by 
$s^I=\prod_{i\in I}s_i$, 
we introduce the {\em modified Macdonald-Ruijsenaars operators} 
$E^{(r)}_{x,s}$ with parameters $s=(s_1,\ldots,s_n)$ by
\begin{equation}\label{eq:mMROp}
E^{(r)}_{x,s}=\sum_{I\subseteq\pr{1,\ldots,n};\ |I|=r} s^{I}B_I(x)T_{q,x}^{I}\qquad(r=0,1,\ldots,n),
\end{equation}
so that $E^{(r)}_{x,\,t^{\rho}q^{\lambda}}=x^{-\lambda}D^{(r)}_{x}x^{\lambda}$.  
Then, 
as a function of the $x$ variables, 
$f(x;s)$ mentioned above is characterized as 
the unique solution in $\bC\pr{x_2/x_1,\ldots,x_n/x_{n-1}}$  with leading term 1
of the joint eigenfunction equations 
\begin{equation}
E^{(r)}_{x,s}f(x;s)=e_r(s)f(x;s)\qquad(r=1,\ldots,n). 
\end{equation}
We remark that the unique existence of $f(x;s)$ is guaranteed 
under the condition $s_j/s_i\notin q^{\bZ}$ ($1\le i<j\le n$) \cite{NS2012}.  
We call this $f(x;s)=f(x;s|q,t)$ the {\em Macdonald function} in the 
asymptotic domain
$|x_1|\gg|x_2|\gg\cdots\gg|x_n|$.  
It is known that $f(x;s|q,t)$ can be continued to a meromorphic function on 
$(\bC^\ast)^n\times(\bC^\ast)^n$ 
with pole divisors explicitly described,  
and it has various remarkable properties comparable to those of Macdonald polynomials, such as 
combinatorial formula, bispectral self-duality and transformation under the 
reflection $t\leftrightarrow q/t$ of the parameter $t$ 
(see \cite{NS2012}; $f(x;s|q,t)$ is denoted there by $p_n(x;s|q,t)$).  

\par\medskip
In the elliptic case, the elliptic Ruijsenaars operators are expressed as
\begin{equation}
\cD^{(r)}_x(p)=\sum_{I\subseteq\pr{1,\ldots,n};\ |I|=r} \cA_I(x;p)T_{q,x}^{I},
\quad \cA_I(x;p)=\frac{T_{t,x}^{I}(\Delta(x;p))}{\Delta(x;p)}, 
\end{equation}
in terms of the elliptic deformation $\Delta(x;p)$ of $\Delta(x)$ defined by 
\begin{equation}
\Delta(x;p)=\prod_{1\le i<j\le n}x_i\theta(x_j/x_i;p)
=x^{\rho}\prod_{1\le i<j\le n}\theta(x_j/x_i;p).  
\end{equation}
From the expression
\begin{equation}
\cA_I(x;p)=(t^{\rho})^I\cB_I(x;p),\quad 
\cB_I(x;p)=
\prod_{\substack{1\le i<j\le n\\ i\in I,\,j\notin I}}\frac{\theta(x_j/tx_i;p)}{\theta(x_j/x_i;p)}
\prod_{\substack{1\le i<j\le n\\ i\notin I,\,j\in I}}\frac{\theta(tx_j/x_i;p)}{\theta(x_j/x_i;p)}
\end{equation}
we have

\begin{prop}\label{prop:RBOp}
The coefficients $\cA_I(x;p)$ of the elliptic Ruijsenaars operators are holomorphic in the domain 
$|px_1/x_n|<1, |x_2/x_1|<1,\ldots,|x_n/x_{n-1}|<1$
and, in particular, they can be represented 
 as convergent power series in the variables $z_0=px_1/x_n, z_1=x_2/x_1, \ldots, z_{n-1}=x_n/x_{n-1}$.  
\end{prop}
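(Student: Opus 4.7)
The plan is to work in the coordinates $(z_0,z_1,\ldots,z_{n-1})$ with $z_0 = p x_1/x_n$ and $z_k = x_{k+1}/x_k$ for $k=1,\ldots,n-1$, noting the key identity $p = z_0 z_1 \cdots z_{n-1}$. The essential observation is that for every $1\le i<j\le n$, both
\begin{equation*}
x_j/x_i = z_i z_{i+1}\cdots z_{j-1}, \qquad p\, x_i/x_j = z_0(z_1\cdots z_{i-1})(z_j\cdots z_{n-1})
\end{equation*}
are \emph{monomials} in $z_0,\ldots,z_{n-1}$ with non-negative integer exponents and strictly positive total degree. The extra variable $z_0$ is chosen precisely so that $p\,x_i/x_j$ --- which would otherwise contain negative powers of the $z_k$ --- becomes such a monomial after substituting $p = z_0 z_1\cdots z_{n-1}$.

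Using $\theta(w;p)=\prod_{k\ge 0}(1-p^k w)(1-p^{k+1}/w)$, I would then expand every factor of $\cB_I(x;p)$, and hence of $\cA_I(x;p)=(t^{\rho})^I \cB_I(x;p)$. For $w=\alpha\, x_j/x_i$ with $\alpha\in\{1, t, t^{-1}\}$ and $i<j$, each resulting factor takes the form $1-c\,z^{\mathbf{m}(k)}$, with $c\in\bC^\ast$ depending only on $t$ and $z^{\mathbf{m}(k)}$ a monomial in $z_0,\ldots,z_{n-1}$ of total degree $\ge 1$ for $k=0$ and $\ge nk+1$ for $k\ge 1$. On any compact subpolydisc $\{|z_l|\le r:\ 0\le l\le n-1\}$ with $r<1$, this gives $\sum_{k\ge 0}|c\,z^{\mathbf{m}(k)}|\le |c|r+|c|\sum_{k\ge 1}r^{nk+1}<\infty$ uniformly, so each infinite-product theta factor, viewed as a function of $z_0,\ldots,z_{n-1}$, converges absolutely and uniformly on compact subsets of the open polydisc $\{|z_l|<1:\ 0\le l\le n-1\}$ and equals a convergent power series in $z_0,\ldots,z_{n-1}$ with constant term $1$.

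For the denominator factors $\theta(x_j/x_i;p)$, I would additionally note that each individual $1-z^{\mathbf{m}(k)}$ is non-vanishing throughout the open polydisc, since a non-trivial monomial in variables of modulus $<1$ has modulus $<1$. Hence each denominator product is non-vanishing, and its reciprocal is again a convergent power series in $z_0,\ldots,z_{n-1}$. Combining all factors in the finite product and quotient defining $\cB_I(x;p)$, and multiplying by the scalar $(t^\rho)^I$, yields the holomorphy and the power series representation for $\cA_I(x;p)$. The only genuine obstacle is the initial change-of-variables observation that turns all arguments of the theta factors into monomials with non-negative exponents in the $z_l$; everything else is a routine infinite-product convergence argument.
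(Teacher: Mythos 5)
Your proof is correct and follows essentially the same route as the paper: the key identity $z_0 z_1\cdots z_{n-1}=p$ and the observation that both $x_j/x_i=z_i\cdots z_{j-1}$ and $p x_i/x_j=z_j\cdots z_{n-1}z_0\cdots z_{i-1}$ become nonnegative monomials in the $z$-coordinates is exactly the paper's argument (Subsection 4.2). The paper states the holomorphy of $\cB_I(x;p)$ as an immediate consequence, whereas you spell out the routine infinite-product convergence and the non-vanishing of the denominators; this is the same argument with the standard details filled in.
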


Regarding the coefficients $\cA_I(x;p)$ as convergent power series in 
\begin{equation}
\bC\pr{px_1/x_n,x_2/x_1,\ldots,x_n/x_{n-1}}, 
\end{equation}
we consider the commuting family of elliptic Ruijsenaars operators
\begin{equation}
\cD^{(r)}_x(p):\ x^\lambda\bC\pr{px_1/x_n,x_2/x_1,\ldots,x_n/x_{n-1}}\to x^{\lambda}\bC\pr{px_1/x_n,x_2/x_1,\ldots,x_n/x_{n-1}}
\end{equation}
($r=1,\ldots,n$) for each complex vector $\lambda\in\bC^n$. 
As we will see below, if $\lambda\in\bC^n$ is generic 
in the sense that $t^{i-j}q^{\lambda_j-\lambda_i}\notin q^{\bZ}$ 
($1\le i<j\le n$), 
these linear operators have 
a joint eigenfunction $\psi_\lambda(x;p)$ in $x^\lambda\bC\pr{px_1/x_n,x_2/x_1,\ldots,x_n/x_{n-1}}$ with leading term $x^\lambda$, 
determined uniquely up to multiplication by a power series $\gamma(p)\in\bC\pr{p}$ with $\gamma(0)=1$.  
We call such an eigenfunction 
$\psi_{\lambda}(x;p)$ {\em asymptotically free}  in the domain 
$|x_1|\gg|x_2|\gg\cdots\gg|x_n|\gg|px_1|$.  

Similarly to the trigonometric case explained above, we define the {\em modified elliptic Ruijsenaars operators} 
$\cE_{x,s}^{(r)}(p)$ ($r=0,1,\ldots,n$) by
\begin{equation}
\cE_{x,s}^{(r)}(p)=\sum_{I\subseteq\pr{1,\ldots,n};\ |I|=r}\,
s^{I}\cB_I(x;p) T_{q,x}^{I}  
\end{equation}
so that $\cD^{(r)}_{x}(p)=\cE^{(r)}_{x,t^\rho}(p)$ and  
$x^{-\lambda}\cD^{(r)}_{x}(p)x^{\lambda}=\cE^{(r)}_{x,t^\rho q^{\lambda}}(p)$.  
Regarding $s=(s_1,\ldots,s_n)\in(\bC^\ast)^n$ as parameters, 
we investigate the joint eigenvalue problem of the modified elliptic Ruijsenaars operators 
\begin{equation}\label{eq:RBEq}
\cE^{(r)}_{x,s}(p)f(x;s;p)=\vep^{(r)}(s;p)f(x;s;p)\quad (r=1,\ldots,n)
\end{equation}
for $f(x;s;p)\in \bC\pr{px_1/x_n,x_2/x_1,\ldots,x_n/x_{n-1}}$ and $\vep^{(r)}(s;p)\in\bC\pr{p}$.  
We first establish an existence theorem for 
formal solutions such that 
$f(x;s;p)\in\bC\fps{px_1/x_n,x_2/x_1,\ldots,x_n/x_{n-1}}$ and $\vep^{(r)}(s;p)\in\bC\fps{p}$, 
and then propose a convergence theorem for them.  
We remark that the ring 
\begin{equation}
\bC\fps{px_1/x_n,x_2/x_1,\ldots,x_n/x_{n-1}}
\end{equation}
of formal power series in $(px_1/x_n,x_2/x_1,\ldots,x_n/x_{n-1})$ can be defined alternatively 
as the ring of all power series in $p$ of the form
\begin{equation}
f(x;p)=\sum_{k=0}^{\infty} p^k f_k(x),\quad f_k(x)\in (x_1/x_n)^k\bC\fps{x_2/x_1,\ldots,x_n/x_{n-1}}. 
\end{equation}

\begin{thm}\label{thm:RBFm}
Suppose that the parameters $s=(s_1,\ldots,s_n)\in(\bC^\ast)^n$ satisfy the condition $s_j/s_i\notin q^{\bZ^\ast}$ 
$(1\le i<j\le n)$, where $\bZ^\ast=\bZ\backslash\!\pr{0}$.  
Then the joint eigenvalue problem \eqref{eq:RBEq} for the modified elliptic Ruijsenaars operators 
$\cE^{(r)}_{x,s}(p)$ $(r=1,\ldots,n)$
has a formal solution $f(x;s;p)\in\bC\fps{px_1/x_n,x_2/x_1,\ldots,x_n/x_{n-1}}$ 
with uniquely determined formal eigenvalues $\vep^{(r)}(s;p)\in\bC\fps{p}$ 
$(r=1,\ldots,n)$. 
The solution $f(x;s;p)$ is determined uniquely up to multiplication by a power series $\gamma(s;p)\in\bC\fps{p}$ 
such that $\gamma(s;0)=1$.  
\end{thm}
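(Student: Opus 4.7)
The strategy is standard perturbation in $p$, using the trigonometric existence result of \cite{NS2012} as the base case. Expand
\begin{equation*}
f(x;s;p)=\sum_{k\ge 0}p^{k}f_{k}(x),\quad
\vep^{(r)}(s;p)=\sum_{k\ge 0}p^{k}\vep^{(r)}_{k},\quad
\cE^{(r)}_{x,s}(p)=\sum_{k\ge 0}p^{k}\cE^{(r)}_{x,s,k},
\end{equation*}
with $\cE^{(r)}_{x,s,0}=E^{(r)}_{x,s}$, $\vep^{(r)}_{0}=e_{r}(s)$, and require $f_{k}\in V_{k}:=(x_{1}/x_{n})^{k}\bC\fps{z_{1},\ldots,z_{n-1}}$ where $z_{i}=x_{i+1}/x_{i}$.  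Since $(x_{1}/x_{n})^{k}=(z_{1}\cdots z_{n-1})^{-k}$, the subspaces are nested: $V_{0}\subset V_{1}\subset\cdots$, and a direct $p$-expansion of the coefficients $\cB_{I}(x;p)$ shows that $\cE^{(r)}_{x,s,m}$ maps $V_{\ell}$ into $V_{\ell+m}$.  Equating the $p^{k}$-coefficient of the eigenvalue equation therefore yields, for each $k\ge 1$, the inhomogeneous system in $V_{k}$
\begin{equation*}
(E^{(r)}_{x,s}-e_{r}(s))\,f_{k}=\vep^{(r)}_{k}\,f_{0}+\Phi^{(r)}_{k}\qquad(r=1,\ldots,n),
\end{equation*}
where $\Phi^{(r)}_{k}\in V_{k}$ depends only on $f_{0},\ldots,f_{k-1}$ and $\vep^{(r)}_{1},\ldots,\vep^{(r)}_{k-1}$.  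For the base case $k=0$, take $f_{0}=f(x;s|q,t)$ to be the trigonometric asymptotically free Macdonald function of \cite{NS2012}.

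The inductive step rests on the identification $V_{k}\cong\bC\fps{z_{1},\ldots,z_{n-1}}$ via $(x_{1}/x_{n})^{k}h(z)\leftrightarrow h(z)$, under which $E^{(r)}_{x,s}|_{V_{k}}$ is conjugated into $E^{(r)}_{x,s^{(k)}}$ on $\bC\fps{z}$ with $s^{(k)}=(s_{1}q^{k},s_{2},\ldots,s_{n-1},s_{n}q^{-k})$, while $f_{0}\in V_{0}\subset V_{k}$ corresponds to $(z_{1}\cdots z_{n-1})^{k}f_{0}$, whose leading monomial is $z^{(k,\ldots,k)}$.  The commuting family $\{E^{(r)}_{x,s^{(k)}}\}_{r=1}^{n}$ acts triangularly on the monomial basis $\{z^{\alpha}\}_{\alpha\in\bN^{n-1}}$ with joint diagonal $(e_{r}(s^{(k)}q^{\beta}))_{r=1}^{n}$, where $\beta_{i}=\alpha_{i-1}-\alpha_{i}$ and $\alpha_{0}=\alpha_{n}=0$.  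Under the hypothesis $s_{j}/s_{i}\notin q^{\bZ^{\ast}}$ for $i<j$, the unique $\alpha\in\bN^{n-1}$ producing the multiset identity $\{s^{(k)}_{i}q^{\beta_{i}}\}=\{s_{i}\}$, and hence the target eigenvalues $(e_{r}(s))_{r}$, is $\alpha=(k,\ldots,k)$, precisely matching the leading monomial of the identified $f_{0}$.  Consequently the joint kernel of $\{E^{(r)}_{x,s}-e_{r}(s)\}_{r=1}^{n}$ on $V_{k}$ is one-dimensional (spanned by $f_{0}$) and the joint cokernel is one-dimensional; requiring the RHS to lie in the image determines $\vep^{(r)}_{k}$ uniquely, and $f_{k}$ is then constructed by a triangular recursion on the $z$-degree, uniquely modulo $\bC f_{0}$.

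Compatibility of the system across $r$, and the form of the final uniqueness statement, both follow from the commutativity $[\cE^{(r)}_{x,s}(p),\cE^{(r')}_{x,s}(p)]=0$: starting from $f$ as the unique up-to-scalar formal eigenvector of $\cE^{(1)}_{x,s}(p)$, commutativity forces $\cE^{(r)}_{x,s}(p)f$ to lie in the formal eigenspace of $\cE^{(1)}_{x,s}(p)$ for the eigenvalue $\vep^{(1)}(s;p)$, which by the one-dimensionality argument above is spanned by $f$, so that $\cE^{(r)}_{x,s}(p)f=\vep^{(r)}(s;p)f$ for a uniquely determined $\vep^{(r)}(s;p)\in\bC\fps{p}$.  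The scalar factor $\gamma(s;p)\in\bC\fps{p}$ with $\gamma(s;0)=1$ in the theorem statement is exactly the accumulated choice of $f_{k}$ modulo $\bC f_{0}$ at each order.  The main obstacle is verifying the one-dimensionality of the joint kernel, namely the uniqueness of the solution $\alpha=(k,\ldots,k)$ in $\bN^{n-1}$ to the multiset identity; the weakness of the hypothesis (permitting $s_{i}=s_{j}$) can introduce additional solutions, but these correspond to the multiplicity already present in $f_{0}$ at $p=0$ and are absorbed into the ambiguity $\gamma(s;p)$ rather than obstructing the construction.
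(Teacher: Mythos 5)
Your overall strategy (expand in $p$, solve order by order, then invoke commutativity) is the right one, but there is a genuine gap in how the inductive step is made well-posed, and the paper's own proof handles this point in an essential way that your argument misses.

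The paper does not try to solve the $n$ equations
\begin{equation*}
(E^{(r)}_{x,s}-e_{r}(s))\,f_{k}=\vep^{(r)}_{k}\,f_{0}+\Phi^{(r)}_{k}\qquad(r=1,\ldots,n)
\end{equation*}
as a joint system, nor does it work with a single operator such as $\cE^{(1)}_{x,s}(p)$. Instead it picks a \emph{generic} specialization $u=c$ of the generating parameter, using Lemma~\ref{lem:43}: under the hypothesis $s_j/s_i\notin q^{\bZ^\ast}$ there exists $c$ with $b_0(q^{-\mu};c)\neq\vep_0(c)$ for \emph{every} $\mu\in\sQ^{\aff}_+\setminus\bZ\delta$, so the single operator $\cE_{x,s}(p;c)=\sum_r(-c)^r\cE^{(r)}_{x,s}(p)$ has separated diagonal entries, its eigenvalue problem on $\bC\fps{px_1/x_n,x_2/x_1,\ldots,x_n/x_{n-1}}$ is solved by a well-posed triangular recursion, and only then does commutativity promote the solution to a joint eigenfunction of all the $\cE^{(r)}_{x,s}(p)$.

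Your argument skips this genericity step, and this is where it breaks. First, your claim that ``the joint cokernel is one-dimensional; requiring the RHS to lie in the image determines $\vep^{(r)}_k$ uniquely'' is not justified: for a \emph{single} $r$, the diagonal entry $e_r(s^{(k)}q^{\mu(\alpha)})-e_r(s)$ vanishes for many $\alpha\in\bN^{n-1}$, not just $\alpha=(k,\ldots,k)$ (a single elementary symmetric polynomial equation does not determine the multiset), so each $E^{(r)}_{x,s}-e_r(s)$ restricted to $V_k$ has a large kernel and cokernel. The multiset separation you correctly observe is a property of the \emph{joint} diagonal $(e_r(s^{(k)}q^\mu))_{r=1}^n$, and to exploit it you would have to solve the overdetermined system, which requires verifying nontrivial compatibility relations among the inhomogeneities $\Phi^{(r)}_k$ before the triangular recursion can even be set up consistently. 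You relegate this to ``commutativity'' without spelling it out, but it is precisely the subtle content. Second, your fall-back ``starting from $f$ as the unique up-to-scalar formal eigenvector of $\cE^{(1)}_{x,s}(p)$'' is false for the same reason: the first-order operator $\cE^{(1)}_{x,s}(p)$ does not have separated eigenvalues in the relevant ring of formal power series, so its formal eigenspace for the eigenvalue with leading coefficient $e_1(s)$ need not be one-dimensional, and the commutativity argument cannot be run off that single operator. Replacing $\cE^{(1)}$ with $\cE_{x,s}(p;c)$ for a generic $c$ (as in Lemma~\ref{lem:43}) is exactly the fix, after which the rest of your outline becomes essentially the paper's proof, modulo the choice of grading ($p$-degree versus the $\sQ^{\aff}_+$-grading used in the paper, which amount to equivalent reorderings of the same recursion).
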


We say that the solution $f(x;s;p)$ is {\em normalized} if it has constant term 1 
with respect to 
$x=(x_1,\ldots,x_n)$.  
In terms of the expansion
\begin{equation}
f(x;s;p)=\sum_{k_0,k_1,\ldots,k_{n-1}}f_{k_0,k_1,\ldots,k_{n-1}}(s) (px_1/x_n)^{k_0}
(x_2/x_1)^{k_1}\cdots (x_n/x_{n-1})^{k_{n-1}}
\end{equation}
as a formal power series in $(px_1/x_n,x_2/x_1,\ldots,x_n/x_{n-1})$, 
the constant term of $f(x;s;p)$ with respect to $x=(x_1,\ldots,x_n)$ 
is given by
\begin{equation}
\sum_{k=0}^{\infty}p^k f_{k,k,\ldots,k}(s)\in\bC\fps{p}. 
\end{equation}
Hence, the normalization condition for $f(x;s;p)$ 
mentioned above is equivalent to 
\begin{equation}
f_{k,k,\ldots,k}(s)=\delta_{k,0}\qquad(k=0,1,2,\ldots). 
\end{equation}
We remark that 
the joint eigenvalue problem \eqref{eq:RBEq} 
has a unique normalized solution $f(x;s;p)$ in 
$\bC\fps{px_1/x_n,x_2/x_1,\ldots,x_n/x_{n-1}}$,
and the other solutions are expressed as $\gamma(s;p)f(x;s;p)$ for some $\gamma(s;p)\in\bC\fps{p}$.  

If we specialize the $s$ variables by $s=t^{\rho}q^{\lambda}$ with a complex vector 
$\lambda\in\bC^n$, 
we have $s_j/s_i=t^{i-j}q^{\lambda_j-\lambda_i}$.  
This means that, if $t^{i-j}q^{\lambda_j-\lambda_i}\notin q^{\bZ^\ast}$ ($1\le i<j\le n$), 
$\psi_{\lambda}(x;p)=x^{\lambda}f(x;t^\rho q^\lambda;p)$ gives the formal 
asymptotically free solution of the joint eigenvalue problem 
of the elliptic Ruijsenaars operators, with formal eigenvalues $\vep^{(r)}(t^{\rho}q^{\lambda};p)$,  
in the domain $|x_1|\gg\dots\gg|x_n|\gg|px_1|$.  
Also, if $\lambda$ is a dominant vector in $\bZ^n$, and if 
$t^{k}\notin q^{\bZ}$ ($k=1,\ldots,n-1$), 
the normalized formal solution of Theorem \ref{thm:RAFm} 
is recovered as $\cP_{\lambda}(x;p)=x^{\lambda}f(x;t^{\rho}q^{\lambda};p)$ 
consistently with the normalization condition.  

\par\medskip
As to the convergence of the formal solution $f(x;s;p)$ we have
\begin{thm}\label{thm:RBCv}
Suppose that $s_j/s_i\notin q^{\bZ}$ $(1\le i<j\le n)$. 
Then, there exist a constant $\sigma\in (0,1)$, not depending on $p$, such that 
the normalized formal solution $f(x;s;p)$ of the joint eigenvalue problem \eqref{eq:RBEq} is absolutely 
convergent in the domain
\begin{equation}
|px_1/x_n|<\sigma,\ |x_2/x_1|<\sigma,\ \ldots, |x_n/x_{n-1}|<\sigma
\end{equation}
for $|p|<\sigma^n$.  
\end{thm}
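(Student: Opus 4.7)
The plan is to apply the method of majorants to the recurrence that determines the coefficients $f_k(s)$ in the expansion
\begin{equation*}
f(x;s;p) = \sum_{k=(k_0,\ldots,k_{n-1})\in\bZ_{\ge 0}^n} f_k(s)\, z_0^{k_0} z_1^{k_1}\cdots z_{n-1}^{k_{n-1}},
\end{equation*}
where $z_0=px_1/x_n$ and $z_i=x_{i+1}/x_i$ for $i=1,\ldots,n-1$. The identity $p = z_0 z_1\cdots z_{n-1}$ allows the eigenvalue series $\vep^{(r)}(s;p)=\sum_j \vep^{(r)}_j(s) p^j$ to be absorbed cleanly into the $z$-expansion. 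First I would derive the explicit recurrence by substituting the expansion into $\cE^{(r)}_{x,s}(p)f=\vep^{(r)}(s;p)f$, using that $T_{q,x}^{I} z^k = q^{\sum_{j\in I}(k_{j-1}-k_j)} z^k$ with the cyclic convention $k_n=k_0$, and expanding $\cB_I(x;p)=\sum_m b_m^{(I)} z^m$ (holomorphic by Proposition \ref{prop:RBOp}). For each multi-index $K$ this yields
\begin{equation*}
f_K \sum_{|I|=r} s^I(\theta_I(K)-1)
= \sum_{j\ge 1} \vep^{(r)}_j(s)\, f_{K-j\bs{1}}
- \sum_{0\le k<K} f_k \sum_{|I|=r} s^I\, b^{(I)}_{K-k}\, \theta_I(k),
\end{equation*}
with $\theta_I(K)=q^{\sum_{j\in I}\mu_j(K)}$ and $\mu_j(K):=K_{j-1}-K_j$.

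Next I would combine the $n$ equations using the generating function identity
\begin{equation*}
\sum_{r=0}^{n} u^r \sum_{|I|=r} s^I \theta_I(K)
= \prod_{i=1}^{n}\bigl(1+s_i q^{\mu_i(K)} u\bigr).
\end{equation*}
Substituting $u=-1/s_j$ (for a suitably chosen $j$) isolates $f_K$ with denominator $(1-q^{\mu_j(K)})\prod_{i\ne j}(1-s_i q^{\mu_i(K)}/s_j)$. The resonances occur precisely at $K=k\bs{1}$ (where $\sum_j \mu_j(K)=0$ forces $\mu_j(K)\equiv 0$), and at these values the normalization $f_{k\bs{1}}=\delta_{k,0}$ instead determines the eigenvalue corrections $\vep_j^{(r)}(s)$ uniquely, consistently with Theorem \ref{thm:RBFm}. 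Away from resonances, the assumption $s_j/s_i\notin q^{\bZ}$ (the strengthening over Theorem \ref{thm:RBFm}) ensures a uniform lower bound $|1-s_i q^{\mu_i(K)}/s_j|\ge c>0$, while $|1-q^{\mu_j(K)}|\ge 1-|q|$ for $\mu_j(K)\ne 0$.

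Then I would bound the numerator. Since $p=z_0 z_1\cdots z_{n-1}$, the ratio $\theta(u;p)/\theta(u;0)=\prod_{\ell\ge 1}(1-p^\ell u)(1-p^\ell/u)$ shows that each $\cB_I(x;p)$ factors as its trigonometric part (analytic in $|z_i|<1$) times an elliptic correction that is analytic in $(z_0,\ldots,z_{n-1})$ near the origin. On a polydisc $|z_i|\le\sigma_0$ all the coefficient series $\sum_m b_m^{(I)} z^m$ are uniformly majorized, independently of $I$, by a single explicit holomorphic function $G(z_0,\ldots,z_{n-1})$. Similarly $|\theta_I(k)|\le |q|^{-cn|k|}$ is easily bounded.

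Finally I would close the argument via a scalar majorant: using the above bounds, the recurrence produces an estimate of the form $|f_K(s)|\le C^{|K|} F_{|K|}$ where $F(\zeta)=\sum_m F_m \zeta^m$ is the holomorphic solution of a one-variable functional equation, analytic in some disc $|\zeta|<\sigma$ with $\sigma\in(0,1)$ independent of $p$. This yields absolute convergence of $f(x;s;p)$ in the polydisc $|z_0|<\sigma$, $|z_i|<\sigma$ ($i=1,\ldots,n-1$), which in the original variables is the required domain for $|p|<\sigma^n$ (since $|z_0|=|p|/|z_1\cdots z_{n-1}|$ forces $|p|<\sigma^n$ to be consistent). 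The main obstacle will be the coupling between the small-divisor control of step three and the majorant of step four: one must design $F$ and $G$ so that the inverse denominator (which may grow only like a polynomial in the $\mu_j(K)$'s) together with the resonant eigenvalue corrections do not destroy the geometric-rate bound on $F_m$. Careful bookkeeping of the eigenvalue corrections $\vep_j^{(r)}(s)$ at the resonant multi-indices, and propagating their size into the non-resonant recurrence, is the delicate point.
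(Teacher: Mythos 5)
Your proposal follows the same overall strategy as the paper: expand $f$ as a power series indexed by $\sQ^\aff_+$ (equivalently multi-indices in $\bZ_{\ge 0}^n$), specialize the generating-function parameter $u$ to $s_i^{-1}$ to isolate a single coefficient, use $s_j/s_i\notin q^\bZ$ for small-divisor control, and close with a majorant. However, there is a genuine gap in how you estimate the small divisors, and it will prevent the majorant from closing.

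You claim a \emph{constant} lower bound on the denominator
$\prod_i\bigl(1-s_iq^{\mu_i(K)}/s_j\bigr)$, namely $(1-|q|)c^{n-1}$, and then worry that the inverse denominator ``may grow polynomially.'' This has the picture backwards and misses the crucial cancellation. The actual difficulty is in the \emph{numerator}: the factors $\theta_I(k)$, equivalently the Cauchy-estimated $b_\beta(q^{-\nu};u)$, carry a factor growing like $|q|^{-\dd(\nu)}$ where $\dd(\nu)=\sum_{j:\,l_j>l_{j-1}}(l_j-l_{j-1})$ can be as large as $\htaf(\nu)$. This is an exponential-in-height blow-up, and if you only have a constant lower bound on the denominator, the ensuing majorant recurrence has coefficients that grow like $|q|^{-\dd(\nu)}$ and the bound $|f_K|\le C^{|K|}$ does \emph{not} propagate: one step of the recursion replaces $C$ by something like $C/|q|$. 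The paper's key observation (its Lemma~5.4) is that the denominator does not merely stay away from zero; it \emph{grows}, with the sharp lower bound
\[
|b_0(q^{-\mu};s_i^{-1})|\ \ge\ \tau\,|q|^{-\dd(\mu)},\qquad \tau=|1-q|(\tau_0\tau_1)^{n-1},
\]
and this growth exactly matches the numerator's $|q|^{-\dd(\nu)}$ once one passes to the renormalized quantity $\frac{r^{\htaf(\mu)}}{|q|^{\dd(\mu)}}|f_\mu|$. After that substitution, the $|q|$-factors disappear entirely from the recurrence, leaving a clean pair of coupled convolution inequalities whose generating functions satisfy quadratic (not linear) equations in one set of variables $z=(z_0,\ldots,z_{n-1})$ — quadratic because the eigenvalue corrections $\vep_l(u)=g_{l\delta}(u)$ enter the recursion multiplied by $f_\nu$. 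Your single scalar majorant $F(\zeta)$ would need to encode this coupled quadratic structure as well, but the more fundamental obstruction is the first one: without the $|q|^{-\dd(\mu)}$ lower bound on the denominator, the estimates simply do not close.
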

Note that the joint eigenfunction equations \eqref{eq:RBEq} hold in the domain 
\begin{equation}
|px_1/x_n|<|q|\sigma,\ |x_2/x_1|<|q|\sigma,\ \ldots, |x_n/x_{n-1}|<|q|\sigma
\end{equation}
 for $|p|<|q|^n\sigma$.  
We also remark that $f(x;s;p)=f(x;s;p|q,t)$ depend holomorphically on 
$s=(s_1,\ldots,s_n)\in(\bC^\ast)^n$, and $q\in\bC^\ast$ with $|q|<1$ 
and $t\in\bC^\ast$, provided that the condition 
$1-q^l s_j/s_i\ne 0$ $(l\in\bZ)$ is satisfied for any distinct $i,j\in\pr{1,\ldots,n}$. 
More specifically, 
$s_i\ne s_j$ for any distinct $i,j\in\pr{1,\ldots,n}$ and that 
$\tau_0\le |s_j/s_i|\le \tau_0^{-1}$ $(1\le i<j\le n)$
for some $\tau_0\in(0,1]$, then $f(x;s;p|q,t)$ is 
a holomorphic function of $(q,t)$ in the domain 
$\prm{q\in\bC}{\ |q|<\tau_0}\times \bC^\ast$. 

We refer to this normalized joint eigenfunction $f(x;s;p)=f(x;s;p|q,t)$ 
as the {\em Ruijsenaars function} in the domain 
$|x_1|\gg|x_2|\gg\cdots\gg|x_n|\gg|px_1|$.  
In the context of \cite{S2019} and \cite{LNS2020a}, 
$f(x;s;p|q,t)$ is the {\em stationary} Ruijsenaars function, which is conjectured 
to arise through a limiting procedure, as $\kappa\to1$, from the {\em non-stationary}  
Ruijsenaars function involving an extra parameter $\kappa$.  
We expect that the stationary and non-stationary Ruijsenaars functions should inherit various symmetry properties 
of the Macdonald function in the trigonometric case.  

\par\medskip
We give below two results regarding symmetries of the Ruijsenaars function.   

\begin{thm}\label{thm:RBRo}
The normalized Ruijsenaars function $f(x;s;p)=f(x;s;p|q,t)$ is invariant under the simultaneous 
rotation of indices of the $x$ and $s$ variables in the following sense\,$:$
\begin{equation}
f(x_2,\ldots,x_n,px_1;s_2,\ldots,s_n,s_1;p)=f(x_1,\ldots,x_n;s_1,\ldots,s_n;p).
\end{equation}
The eigenvalues $\vep^{(r)}(s;p)$ of the modified elliptic Ruijsenaars operators are also invariant 
under the rotation of indices of the $s$ variables\,$:$  
\begin{equation}
\vep^{(r)}(s_2,\ldots,s_n,s_1;p)=\vep^{(r)}(s_1,\ldots,s_n;p)\quad(r=1,\ldots,n).  
\end{equation}
\end{thm}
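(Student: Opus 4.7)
The plan is to construct a second normalized formal solution of the joint eigenvalue problem \eqref{eq:RBEq} by applying the simultaneous cyclic shift to $f$, and then invoke the uniqueness clause of Theorem \ref{thm:RBFm} to identify it with $f$ itself. Concretely, put $\sigma_x x = (x_2, \ldots, x_n, px_1)$ and $\sigma_s s = (s_2, \ldots, s_n, s_1)$, and define $g(x;s;p) := f(\sigma_x x; \sigma_s s; p)$. The goal is to show that (i) $g$ satisfies $\cE^{(r)}_{x,s}(p) g = \vep^{(r)}(\sigma_s s; p) g$ for $r = 1, \ldots, n$, and (ii) $g$ is a normalized formal power series in the same variables $(px_1/x_n, x_2/x_1, \ldots, x_n/x_{n-1})$. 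Uniqueness will then force $g = f$ and $\vep^{(r)}(\sigma_s s;p) = \vep^{(r)}(s;p)$, proving both assertions of the theorem at once.

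For (i), I would establish the operator intertwining $\cE^{(r)}_{x,s}(p)(F \circ \sigma) = (\cE^{(r)}_{y,u}(p) F) \circ \sigma$ for any function $F(y;u;p)$, where $(F \circ \sigma)(x;s;p) := F(\sigma_x x; \sigma_s s; p)$. Under the substitution, the $q$-shifts transform as $T_{q,x_i}(F \circ \sigma) = (T_{q, y_{\pi(i)}} F) \circ \sigma$ with $\pi(1) = n$ and $\pi(i) = i-1$ for $i > 1$, and the parameter products satisfy $s^I = (\sigma_s s)^{\pi(I)}$ by direct inspection. The nontrivial input is the coefficient identity
\begin{equation*}
\cB_I(x;p) = \cB_{\pi(I)}(\sigma_x x; p) \qquad (I \subseteq \pr{1, \ldots, n}).
\end{equation*}
Pairs $(i,j)$ with $1 < i < j$ contribute identical theta factors on both sides. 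The remaining pairs --- those with $i = 1$ on the left, and the wrap-around pairs $(k, n)$ involving $y_n = px_1$ on the right --- match through the elementary identities
\begin{equation*}
\frac{\theta(tpx_1/x_j;p)}{\theta(px_1/x_j;p)} = \frac{\theta(x_j/tx_1;p)}{\theta(x_j/x_1;p)}, \qquad
\frac{\theta(px_1/tx_j;p)}{\theta(px_1/x_j;p)} = \frac{\theta(tx_j/x_1;p)}{\theta(x_j/x_1;p)},
\end{equation*}
which follow directly from the quasi-periodicity $\theta(pz;p) = -z^{-1}\theta(z;p)$ together with the inversion $\theta(z^{-1};p) = -z^{-1}\theta(z;p)$.

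For (ii), I would note that $\sigma_x$ acts on the coordinates $(z_0, z_1, \ldots, z_{n-1}) := (px_1/x_n, x_2/x_1, \ldots, x_n/x_{n-1})$ as the cyclic permutation $z_i \mapsto z_{i+1 \bmod n}$, so $g$ lies in $\bC\fps{z_0, \ldots, z_{n-1}}$ and its constant term equals that of $f(\,\cdot\,; \sigma_s s; p)$, namely $1$ by the normalization of $f$. The genericity condition $s_j/s_i \notin q^{\bZ^\ast}$ depends only on the multiset $\pr{s_j/s_i : i \neq j}$ and is therefore preserved by $\sigma_s$, so Theorem \ref{thm:RBFm} guarantees that the normalized formal solution of the eigenvalue problem for $\cE^{(r)}_{x,s}(p)$ is unique with uniquely determined eigenvalues. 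Since $g$ is such a normalized solution, with eigenvalues $\vep^{(r)}(\sigma_s s;p)$, uniqueness yields $g(x;s;p) = f(x;s;p)$ and $\vep^{(r)}(\sigma_s s;p) = \vep^{(r)}(s;p)$. The main obstacle is the bookkeeping behind the coefficient identity $\cB_I(x;p) = \cB_{\pi(I)}(\sigma_x x;p)$: although each theta identity in play is elementary, one must systematically pair up factors in the two products, carefully distinguishing the generic case from the wrap-around case in which the factor $p$ in $y_n = px_1$ must be absorbed by quasi-periodicity.
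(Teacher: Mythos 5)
Your proposal is correct and follows essentially the same route as the paper: both proofs rest on the coefficient cyclicity $\cB_I(x;p) = \cB_{\pi(I)}(\sigma_x x;p)$ (obtained from the symmetry $\theta(z;p)=\theta(p/z;p)$), the compatibility of $q$-shifts and $s$-products with the rotation, and the uniqueness of the normalized formal solution. The only difference is presentational — the paper phrases the argument as invariance of $\cE_{x,s}(p;u)$ under the substitution automorphism $\pi$, while you introduce $g=f\circ\sigma$ and match it to $f$ directly — but these are the same argument.
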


\begin{thm}\label{thm:RBRe}
The normalized Ruijsenaars function $f(x;s;p|q,t)$ 
satisfies the following transformation formula 
under the reflection $t\leftrightarrow q/t$ of the parameter $t$\,$:$ 
\begin{equation}
f(x;s;p|q,q/t)=\gamma(s;p|q,t) \prod_{1\le i<j\le n}
\frac{\Gamma(tx_j/x_i;p,q)}{\Gamma(qx_j/tx_i;p,q)}\cdot f(x;s;p|q,t), 
\end{equation}
for some constant $\gamma(s;p|q,t)$ depending on $p$ such that 
$\gamma(s;p|q,t)\gamma(s;p|q,q/t)=1$.
Also, the eigenvalues $\vep^{(r)}(s;p)=\vep^{(r)}(s;p|q,t)$ 
are invariant under the reflection $t\leftrightarrow q/t$ of the parameter $t$\,$:$
\begin{equation}
\vep^{(r)}(s;p|q,q/t)=\vep^{(r)}(x;p|q,t)\quad(r=1,\ldots,n).  
\end{equation}
\end{thm}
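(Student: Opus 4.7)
The plan is to construct an explicit gauge transformation that intertwines the modified elliptic Ruijsenaars operators at parameters $(q,t)$ and $(q,q/t)$. Set
$$G(x;p|q,t) = \prod_{1\le i<j\le n}\frac{\Gamma(tx_j/x_i;p,q)}{\Gamma(qx_j/tx_i;p,q)}.$$
Using the quasi-periodicity $\Gamma(qz;p,q)=\theta(z;p)\Gamma(z;p,q)$ of the Ruijsenaars elliptic gamma function, one computes directly, for each $I\subseteq\{1,\ldots,n\}$,
$$\frac{T_{q,x}^{I} G(x;p|q,t)}{G(x;p|q,t)}=\prod_{\substack{i\in I,\, j\notin I\\ i<j}}\frac{\theta(x_j/tx_i;p)}{\theta(tx_j/qx_i;p)}\prod_{\substack{i\notin I,\, j\in I\\ i<j}}\frac{\theta(tx_j/x_i;p)}{\theta(qx_j/tx_i;p)},$$
which coincides term by term with $\cB_I(x;p|q,t)/\cB_I(x;p|q,q/t)$. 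Combined with the definition of the modified elliptic Ruijsenaars operators, this establishes the intertwining
$$G(x;p|q,t)^{-1}\cdot\cE^{(r)}_{x,s}(p|q,q/t)\cdot G(x;p|q,t)=\cE^{(r)}_{x,s}(p|q,t)\qquad(r=1,\ldots,n).$$

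Next, using $\Gamma(z;p,q)=(pq/z;p,q)_\infty/(z;p,q)_\infty$, I expand $G$ as a ratio of four double-Pochhammer products. Their arguments $tx_j/x_i$, $qx_j/tx_i$, $pqx_i/(tx_j)$, and $ptx_i/x_j$ (for $i<j$) are all monomials of positive total degree in the expansion variables $(px_1/x_n, x_2/x_1,\ldots,x_n/x_{n-1})$, and each Pochhammer factor equals $1$ at the origin; hence $G(x;p|q,t)$ is itself a convergent power series with value $1$ at the origin in the domain of Theorem \ref{thm:RBCv}. Consequently, $G(x;p|q,t)\,f(x;s;p|q,t)$ lies in the same function space as the normalized Ruijsenaars function $f(x;s;p|q,q/t)$, and by the intertwining it solves the joint eigenvalue problem for $\cE^{(r)}_{x,s}(p|q,q/t)$ with eigenvalues $\vep^{(r)}(s;p|q,t)$. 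Applying the uniqueness in Theorems \ref{thm:RBFm} and \ref{thm:RBCv} (with $t$ replaced by $q/t$) therefore forces $\vep^{(r)}(s;p|q,q/t)=\vep^{(r)}(s;p|q,t)$ and
$$G(x;p|q,t)\,f(x;s;p|q,t)=\gamma(s;p|q,t)\,f(x;s;p|q,q/t)$$
for a unique $x$-independent scalar $\gamma(s;p|q,t)$.

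Finally, the multiplicative relation $\gamma(s;p|q,t)\gamma(s;p|q,q/t)=1$ follows from involutivity: the defining product immediately gives $G(x;p|q,q/t)\,G(x;p|q,t)=1$, and chaining the identity of the previous step applied with parameters $t$ and $q/t$ produces the claim. The main technical obstacle is verifying the convergence statement underlying the second step, namely that $G$ genuinely lies in the ring of convergent power series used in Theorem \ref{thm:RBCv}, so that the uniqueness theorem can legitimately be invoked; once this is in hand, the rest reduces to a direct algebraic verification based on the functional equation of $\Gamma$.
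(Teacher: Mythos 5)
Your proof is correct and follows essentially the same strategy as the paper: gauge-transform the modified elliptic Ruijsenaars operators using elliptic gamma functions to make the $t\leftrightarrow q/t$ symmetry manifest, then invoke the uniqueness of the normalized joint eigenfunction. The only cosmetic difference is that you conjugate directly by the ratio $\prod_{i<j}\Gamma(tx_j/x_i;p,q)/\Gamma(qx_j/tx_i;p,q)$ appearing in the theorem's statement, whereas the paper first conjugates by $\prod_{i<j}\Gamma(tx_j/x_i;p,q)/\Gamma(x_j/x_i;p,q)$ and observes that the conjugated operator is manifestly invariant under $t\to q/t$; the two are algebraically equivalent since your gauge factor is precisely the ratio of the paper's at $t$ and at $q/t$.
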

If we define 
\begin{equation}
\psi(x;s;p|q,t)=\prod_{1\le i<j\le n}
\frac{\Gamma(tx_j/x_i;p,q)}{\Gamma(x_j/x_i;p,q)} \cdot f(x;s;p|q,t),
\end{equation}
Theorem \ref{thm:RBRe} is equivalently rewritten as
\begin{equation}
\psi(x;s;p|q,q/t)=\gamma(s;p|q,t) \psi(x;s;p|q,t).  
\end{equation}

It is known that $\gamma(s;0;q,t)=1$ by a result of \cite{NS2012}.  
We conjecture that the undetermined constant $\gamma(s;p|q,t)$ 
is 1, namely the function $\psi(x;s;p|q,t)$ is symmetric with respect to 
the reflection $t\leftrightarrow q/t$ of the parameter $t$. 

\subsection{Plan of this paper}\label{ssec:1.5}

The rest of this paper is organized as follows.  
\par\medskip
Theorems \ref{thm:RAFm} and \ref{thm:RBFm} 
on existence of joint eigenfunctions in the formal sense will be 
established in Section \ref{sec:Existence}.  
Once we have set up appropriate classes of 
formal power series on which the joint eigenvalue problems 
should be investigated, their formal solutions are obtained  
along a more or less standard procedure of perturbation theory as in Kato \cite{Kato1995} and Komori-Takemura \cite{KT2002} 
under the separation conditions of eigenvalues.   

The convergence of formal joint eigenfunctions will be discussed in 
Section \ref{sec:Convergence}.  
This problem for the elliptic Ruijsenaars operators requires careful analysis 
of the recurrence relations for the coefficients if we follow the direct approach. 
We first prove the convergence of the joint eigenfunction $f(x;s;p)$ 
(stationary Ruijsenaars function) 
of the modified elliptic Ruijsenaars operators in the asymptotic 
domain $|x_1|\gg\cdots\gg|x_n|\gg|px_1|$.  
The absolute convergence of $f(x;s;p)$ is established by 
the method of majorants for the expansion coefficients defined through 
quadratic recurrences. 
It seems technically demanding, however, to directly apply a similar method to the case 
of symmetric joint eigenfunctions $\cP_\lambda(x;p)$ 
that deform the Macdonald polynomials.  
We overcome this difficulty by employing an integral transform 
that generates symmetric eigenfunctions in the {\em physical domain}
(around the torus) from eigenfunctions in the {\em asymptotic domain}.  

Since this approach of integral transforms seems novel and characteristic in this paper, 
we explain in advance in Section \ref{sec:Prototype} 
how this method works in the trigonometric case, presenting 
a prototype of our arguments of the elliptic case in Section \ref{sec:Convergence}.  

As a consequence of the convergence theorem, 
we include a proof
of Theorem \ref{thm:RAOr} on the orthogonality relations 
for $\cP_\lambda(x;p|q,t)$ 
in Section \ref{sec:Convergence}.  
Theorems \ref{thm:RBRo} and \ref{thm:RBRe}
on symmetries of $f(x;s;p|q,t)$ will be proved in Section \ref{sec:Symmetry}.


\section{Integrable transform to the physical domain: a prototype}\label{sec:Prototype}

In this section, we deal with joint eigenfunctions of 
the Macdonald-Ruijsenaars operators in the trigonometric case, 
and introduce an integral transform that maps 
holomorphic eigenfunctions in the asymptotic domain 
$|x_1|\gg|x_2|\gg\cdots\gg|x_n|$ 
to symmetric eigenfunctions in the {\em physical domain} 
around the torus.  
The method we explain below 
will be extended to the elliptic case in Section \ref{sec:Convergence} 
so as to establish the convergence of the 
symmetric eigenfunctions 
that deform Macdonald polynomials. 


\subsection{Integral transform}

We first recall some basic results from the theory of Macdonald polynomials \cite{M1995}.  
The {\em Cauchy kernel}
\begin{equation}
K(x,y)=\prod_{i=1}^{n}\prod_{j=1}^{n}\frac{(tx_i/y_j;q)_\infty}{(x_i/y_j;q)_\infty}
\end{equation}
for the Macdonald polynomials in variables $x=(x_1,\ldots,x_n)$ and $y=(y_1,\ldots,y_n)$ 
is a meromorphic function on $(\bC^\ast)^n \times(\bC^\ast)^n$, 
and it has an expansion
\begin{equation}\label{eq:Cauchy-trig}
K(x,y)=\sum_{\ell(\lambda)\le n} b_{\lambda}P_{\lambda}(x)P_{\lambda}(y^{-1}),
\quad 
b_{\lambda}=
\prod_{1\le i\le j\le n}
\frac{(t^{j-i+1}q^{\lambda_i-\lambda_j};q)_{\lambda_j-\lambda_{j+1}}}
{(t^{j-i}q^{\lambda_i-\lambda_j+1};q)_{\lambda_j-\lambda_{j+1}}}
\end{equation}
in the domain $|x_i/y_j|<1$ ($i,j\in\pr{1,\ldots,n}$), 
where the sum is taken over 
all partitions $\lambda$ with $l(\lambda)\le n$,
and $(a;q)_k=(1-a)(1-qa)\cdots(1-q^{k-1}a)$ ($k=0,1,2,\ldots$).  
We define the weight function $w(y)$ by
\begin{equation}
w(y)=\prod_{1\le i<j\le n}(1-y_j/y_i)\frac{(qy_j/ty_i;q)_\infty}{(ty_j/y_i;q)_\infty},
\end{equation}
which is holomorphic in the domain  $|y_j/y_i|<|t|^{-1}$ ($1\le i<j\le n$).

For each $\theta>0$, we denote by $V_\theta$ the domain in $(\bC^\ast)^n$ 
specified by 
\begin{equation}
V_\theta=\prm{y=(y_1,\ldots,y_n)\in(\bC^\ast)^n }
{\ |y_2/y_1|<\theta,\ |y_3/y_2|<\theta,\ \ldots,\ |y_n/y_{n-1}|<\theta}.
\end{equation}
Fixing positive constants $r,\sigma$ such that $\sigma<\min\pr{1,\theta,|t|^{-1}}$, 
we define an $n$-cycle $C_{r,\sigma}$ in $(\bC^\ast)^n$ 
by 
\begin{equation}
C_{r,\sigma}=\prm{y=(y_1,\ldots,y_n)\in(\bC^\ast)^n}
{\ |y_i|=\sigma^{i-1}r\ \ (i=1,\ldots,n)}
\end{equation}
with positive orientation.  
Note that $w(y)$ is holomorphic in an neighborhood of $C_{r,\sigma}$ since 
$\sigma<\min\pr{1,|t|^{-1}}$.  
For each holomorphic function $\psi=\psi(y)$ in $V_{\theta}$, we define 
its integral transform $\cK(\psi)$ by 
\begin{equation}
\cK(\psi)(x)=\int_{C_{r,\sigma}} K(x,y)w(y)\psi(y)d\omega_n(y).  
\end{equation}
\begin{lem}
The function $\varphi(x)=\cK(\psi)(x)$ defined as above is continued to an 
entire symmetric holomorphic function in $\bC^n$.  
\end{lem}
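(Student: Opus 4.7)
{Proof plan}
The statement has three assertions: $\varphi(x)$ is symmetric in $x$, holomorphic on an initial open set, and extends to an entire function on all of $\bC^n$. My plan is to dispatch the first two quickly and reduce the third to a Cauchy-expansion argument combined with a cycle-deformation observation.

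\emph{Symmetry and initial holomorphy.} The Cauchy kernel $K(x,y)=\prod_{i,j=1}^{n}(tx_i/y_j;q)_\infty/(x_i/y_j;q)_\infty$ is manifestly invariant under permutations of $x_1,\ldots,x_n$, since the double product ranges over all index pairs; the remaining factors $w(y)$, $\psi(y)$ and $d\omega_n(y)$ carry no $x$-dependence, so $\varphi$ is symmetric wherever defined. For initial holomorphy I would check that the integrand is holomorphic in $y$ on a neighborhood of $C_{r,\sigma}$: $\psi$ is holomorphic there because $\sigma<\theta$ puts $C_{r,\sigma}\subset V_\theta$; $w(y)$ is holomorphic there because its potential poles lie at ratios $|y_j/y_i|\ge 1/|t|$ while on the cycle $|y_j/y_i|=\sigma^{j-i}\le\sigma<1/|t|$; and the $y$-poles of $K(x,\cdot)$, located at $y_j=q^kx_i$ with $k\ge 0$, have modulus at most $|x_i|$, hence remain strictly inside the cycle whenever $|x_i|<\sigma^{n-1}r\le|y_j|$. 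Therefore $\varphi$ is holomorphic on $D_r=\{x\in\bC^n:|x_i|<\sigma^{n-1}r,\ i=1,\ldots,n\}$.

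\emph{Extension via Macdonald-polynomial expansion.} On $D_r$ the Cauchy expansion \eqref{eq:Cauchy-trig} is absolutely and uniformly convergent for $y\in C_{r,\sigma}$; substituting it into the integral and exchanging sum and integral yields
\begin{equation*}
\varphi(x)=\sum_{\ell(\lambda)\le n}c_\lambda P_\lambda(x),\qquad c_\lambda=b_\lambda\int_{C_{r,\sigma}}P_\lambda(y^{-1})\,w(y)\,\psi(y)\,d\omega_n(y).
\end{equation*}
Each summand is a symmetric polynomial in $x$, so to establish the lemma it suffices to prove that this series converges absolutely and uniformly on every compact subset of $\bC^n$. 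The decisive observation here is that the integrand defining $c_\lambda$ carries no $x$-dependence, and its only $y$-singularities (those of $w$, together with the pole of $P_\lambda(y^{-1})$ at $y=0$) sit at fixed \emph{ratios} $|y_j/y_i|=|q|^{-k}/|t|$ that are not crossed as $r$ is rescaled. Consequently $c_\lambda$ is independent of the cycle radius and may be evaluated on any $C_{r',\sigma}$ with $r'>0$.

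\emph{Estimation and main obstacle.} On $C_{r',\sigma}$ a direct estimate of the leading monomials in $P_\lambda(y^{-1})$ gives a bound $|P_\lambda(y^{-1})|\le A^{|\lambda|}(r'\sigma^{n-1})^{-|\lambda|}$ with $A$ depending only on $n,q,t$; similarly $|P_\lambda(x)|\le B^{|\lambda|}R^{|\lambda|}$ for $|x_i|\le R$. Combined with the sub-exponential growth of $b_\lambda$ in $|\lambda|$ and the polynomial count of partitions of length $\le n$, one reaches an estimate of the form
\begin{equation*}
\sum_{\lambda}|c_\lambda|\cdot\sup_{|x_i|\le R}|P_\lambda(x)|\ \le\ \mathrm{const}\cdot M(r')\sum_{N=0}^{\infty}N^{n-1}\Big(\frac{AB\,R}{r'\sigma^{n-1}}\Big)^{N},
\end{equation*}
where $M(r')=\sup_{C_{r',\sigma}}|w(y)\psi(y)|$. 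The main technical obstacle is that $M(r')$ may well grow as $r'\to\infty$, so $r'$ must be chosen as a function of $R$ alone (never of $\lambda$); the conicity of $V_\theta$ in the moduli guarantees $C_{r',\sigma}\subset V_\theta$ for every $r'>0$, so $M(r')$ is finite, and fixing $r'$ large enough that $AB\,R<r'\sigma^{n-1}/2$ makes the bounding series geometric in $N$, hence convergent. This gives absolute uniform convergence on $\{|x_i|\le R\}$ for every $R>0$, so $\varphi$ extends to an entire symmetric holomorphic function on $\bC^n$.
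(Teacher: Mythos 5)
Your proof is built on the same key observation as the paper's --- that the cycle $C_{r,\sigma}$ can be scaled in $r$ without crossing any singularities of the integrand, because the singularities of $w$ and $\psi$ are constrained only by the \emph{ratios} $|y_j/y_i|$ while the poles of $K(x,\cdot)$ stay inside the cycle once $|x_i|<\sigma^{n-1}r$ --- but you apply it in a roundabout way. The paper's proof is three sentences: $K(x,y)$ holomorphic on $|x_i/y_j|<1$ gives holomorphy of $\varphi$ on $|x_i|<\sigma^{n-1}r$; the integral is independent of $r$ because scaling $r$ does not cross any pole of the integrand; hence $\varphi$ is entire. You instead expand $K$ in the Macdonald basis, push the sum through the integral, argue that each coefficient $c_\lambda$ is $r'$-independent (the same cycle-deformation observation, applied coefficientwise), and then try to show the resulting Macdonald series converges on all of $\bC^n$ by choosing $r'$ as a function of $R$.

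This detour introduces estimates that the paper never needs, and it is here that your argument has an unjustified step. Both the bound $|P_\lambda(y^{-1})|\le A^{|\lambda|}(r'\sigma^{n-1})^{-|\lambda|}$ on the cycle and $|P_\lambda(x)|\le B^{|\lambda|}R^{|\lambda|}$ on the polydisc ultimately require a growth estimate of the form $\sum_{\mu\le\lambda}|P_{\lambda,\mu}|\le C^{|\lambda|}$ on the monomial coefficients of the Macdonald polynomials. (Your phrase ``a direct estimate of the leading monomials'' passes over this: the subdominant monomials $m_\mu$ with $\mu<\lambda$ do satisfy $|y^{-\mu}|\le|y^{-\lambda}|$ on $C_{r',\sigma}$ for $\sigma<1$, but the coefficients $P_{\lambda,\mu}$ still have to be controlled uniformly in $\lambda$.) Such bounds are plausible for fixed $q,t$ with $|q|<1$ and separation of eigenvalues, but they are not trivial and you have not supplied or cited them. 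By contrast, the paper's direct argument entirely avoids any quantitative control of $P_\lambda$: holomorphy of $\varphi$ on the growing disc $|x_i|<\sigma^{n-1}r$ plus $r$-independence of the integral is all that is needed. If you want to keep your expansion-based route, you would need to either prove the coefficient bound or replace it by a soft argument, e.g.\ observing that the Cauchy expansion of $K(x,y)$ itself already converges absolutely and uniformly on compacta of $\{|x_i/y_j|<1\}$, and transferring that to the series for $\varphi$ via the $r'$-independence of $c_\lambda$, rather than estimating $P_\lambda$ term by term.
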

\begin{proof}{Proof}  The symmetry of $\varphi(x)$ follows from that of $K(x,y)$ in $x$.  
Since $K(x,y)$ is holomorphic in the domain $|x_i/y_j|<1$ ($i,j=1,\ldots,n$), it is 
is holomorphic in the domain $|x_i|<\sigma^{n-1}r$ $(i=1,\ldots,n)$ when $y\in C_{r,\sigma}$.  
This implies that $\varphi(x)$ is holomorphic in the same domain.  
Since the integral does not depend on $r\in\bR_{>0}$, 
$\varphi(x)$ is continued to an entire function in $\bC^n$.  
\end{proof}


\subsection{From eigenfunctions in the asymptotic domain 
to Macdonald polynomials}\label{ssec:3.2}

In what follows, we use the generating function 
\begin{equation}
D_x(u)=\sum_{r=0}^{n}(-u)^r D^{(r)}_x=\sum_{I\subseteq\pr{1,\ldots,n}}
(-u)^{I}A_I(x)T_{q,x}^{I}
\end{equation}
of the Macdonald-Ruijsenaars operators $D^{(r)}_x$ ($r=0,1,\ldots,n$), 
so that 
\begin{equation}\label{eq:DP=eP}
D_x(u)P_\lambda(x)=\vep(u)P_\lambda(x),\quad
\vep(u)=\sum_{r=0}^{n}(-u)^re_r(t^{\rho}q^{\lambda})=\prod_{j=1}^{n}(1-ut^{n-j}q^{\lambda_j}), 
\end{equation}
for each partition $\lambda$ with $\ell(\lambda)\le n$. 
Also, 
using the multi-index notation 
$T_{q,x}^{\mu}=T_{q,x_1}^{\mu_1}\cdots T_{q,x_n}^{\mu_n}$, 
for a $q$-difference operator $L=\sum_{\mu\in\bZ^n}a_{\mu}(y) T_{q,y}^{\mu}$ 
(finite sum) in $y$ variables, we denote by 
$L^\ast=\sum_{\mu\in\bZ^n}T_{q,y}^{-\mu} a_{\mu}(y)$ 
the formal adjoint of $L$ with respect to $d\omega_n(y)$, noting that 
$(L_1L_2)^\ast=L_2^\ast L_1^\ast$.  

Recall that the Ruijsenaars-Macdonald operator $D_x(u)$ is formally self-adjoint with respect 
to the weight function 
\begin{equation}\label{eq:wsym-trig}
w^{\mathrm{sym}}(y)=\prod_{1\le i<j\le n}\frac{(y_i/y_j;q)_\infty(y_j/y_i;q)_\infty}{(ty_i/y_j;q)_\infty(ty_j/y_i;q)_\infty}
\end{equation}
for the orthogonality of Macdonald polynomials; 
the superscript ``sym" is used to indicate that this weight function is symmetric with respect to 
$y=(y_1,\ldots,y_n)$.  
Namely we have an identity 
\begin{equation}
D_y(u)^{\ast}=w^{\mathrm{sym}}(y)D_{y^{-1}}(u) w^{\mathrm{sym}}(y)^{-1} 
\end{equation}
of $q$-difference operators.  Also, the kernel function $K(x,y)$ satisfies 
the kernel function identity
\begin{equation}
D_x(u)K(x,y)=D_{y^{-1}}(u)K(x,y).  
\end{equation}
(See for instance, \cite{MN1997} and \cite{KNS2009}.)
Hence we have 
\begin{equation}\label{eq:DKw-trig}
D_x(u)K(x,y)w^{\mathrm{sym}}(y)=w^{\mathrm{sym}}(y)D_{y^{-1}}K(x,y)
=D_{y}(u)^\ast (K(x,y)w^{\mathrm{sym}}(y)).  
\end{equation}
We use the idea of integral representation of \cite{MN1997} for our weight 
function $w(y)$ defined above.  
For this purpose we rewrite $w^{\mathrm{sym}}(y)$ as follows:  
\begin{equation}
\begin{split}
w^{\mathrm{sym}}(y)
&=
\prod_{1\le i<j\le n}\frac{(y_j/y_i;q)_\infty(qy_j/ty_i;q)_\infty
}{(ty_j/y_i;q)_\infty(qy_j/y_i;q)_\infty}
\frac{\theta(y_i/y_j;q)}{\theta(ty_i/y_j;q)}
\\
&=
\prod_{1\le i<j\le n}(1-y_j/y_i)
\frac{(qy_j/ty_i;q)_\infty
}{(ty_j/y_i;q)_\infty}
\cdot
\prod_{1\le i<j\le n}
\frac{\theta(y_i/y_j;q)}{\theta(ty_i/y_j;q)}. 
\end{split}
\end{equation}
This means that 
\begin{equation}
w^{\mathrm{sym}}(y)=w(y)g(y),\quad g(y)=\prod_{1\le i<j\le n}
\frac{\theta(y_i/y_j;q)}{\theta(ty_i/y_j;q)}. 
\end{equation}
Noting that 
\begin{equation}
T_{q,y_i}g(y)=t^{n-2i+1}g(y)\qquad(i=1,\ldots,n), 
\end{equation}
we define the $n$-vector $\kappa=(\kappa_1,\ldots,\kappa_n)\in\bC^n$ 
by $\kappa_i=(n-2i+1)\log t/\log q$ $(i=1,\ldots,n)$. 
Then we have
\begin{equation}
T_{q,y}^{\mu}g(y)=q^{\br{\mu,\kappa}}g(y)\qquad(\mu\in \bZ^n),
\end{equation}
where $\br{\mu,\kappa}=\sum_{i=1}^{n}\mu_i\kappa_i$.  
This means that the ratio $g(y)/y^{\kappa}$ is a $q$-periodic function in each 
variable $y_i$ ($i=1,\ldots,n$).  

We introduce the generating function 
\begin{equation}
E_{x,s}(u)=\sum_{I\subseteq\pr{1,\ldots,n}}(-u)^{|I|}s^{\ep_I}
B_I(x)
T_{q,x}^{\ep_I} 
\end{equation}
for the 
modified Ruijsenaars-Macdonald operators $E^{(r)}_{x,s}$ of \eqref{eq:mMROp}, 
so that $D_x(u)=E_{x,t^{\rho}}(u)$.  
Then the Macdonald function 
$f(x;s)$ in the asymptotic domain $|x_1|\gg|x_2|\gg\cdots\gg|x_n|$ 
satisfies 
\begin{equation}
E_{x,s}(u)f(x;s)=\vep(s;u)f(x;s),\quad
\vep(s;u)=\sum_{r=0}^{n}(-u)^r e_r(s)=\prod_{j=1}^{n}(1-us_j).  
\end{equation}
It is known by \cite{NS2012} that 
$f(x;s)$ is holomorphic in the domain 
$|x_j/x_i|<|t/q|$ ($1\le i<j\le n$) and $|s_j/s_i|<1/|q|$ ($1\le i<j\le n$). 

Returning to the kernel function, 
by \eqref{eq:DKw-trig} we obtain
\begin{equation}
\begin{split}
D_x(u)K(x,y)w(y)
&=g(y)^{-1}D_{y}(u)^\ast\big(g(y)K(x,y)w(y)\big)
\\
&=(g(y)D_{y}(u)g(y)^{-1})^{\ast}\big(K(x,y)w(y)\big)
\\
&=(E_{y,t^{\rho}q^{-\kappa}})^\ast(K(x,y)w(y)).  
\end{split}
\end{equation}
Note that 
\begin{equation}
t^{\rho}q^{-\kappa}
=(t^{n-1}\cdot t^{-n+1},t^{n-2}\cdot t^{-n+3},\ldots,1\cdot t^{n-1})
=(1,t,\ldots,t^{n-1}).  
\end{equation}
Hereafter, for a vector $a=(a_1,\ldots,a_n)$, we denote by $a^\vee=(a_n,\ldots,a_1)$ 
the {\em reversal} of $a$ (the action of the longest element of $\frS_n$). 
In this notation of reversals, we obtain the kernel function identity
\begin{equation}
D_x(u)K(x,y)w(y)=(E_{y,{t^{\rho^{\vee}}}})^{\ast}(K(x,y)w(y)). 
\end{equation}

\begin{thm} \label{thm:MacIT}
Let $\lambda$ be a partition with $l(\lambda)\le n$.  
Setting $s=t^\rho q^{\lambda}$, let $f(y;s^{\vee})$ be the Macdonald function in the 
asymptotic domain $V_{\theta}$ 
with parameters $s^\vee=t^{\rho^\vee}\!q^{\lambda^\vee}$
with $\theta^{n-1}<|t/q|$ . 
Then we have
\begin{equation}\label{eq:Kf-trig}
\int_{C_{r,\sigma}}K(x,y)w(y)\,y^{\lambda^\vee}\!f(y;s^\vee)d\omega_n(y)=b_\lambda P_{\lambda}(x). 
\end{equation}
\end{thm}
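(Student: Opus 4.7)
The plan is to recognize the left-hand side as $\cK(\psi)(x)$ with $\psi(y)=y^{\lambda^\vee}f(y;s^\vee)$, prove it is a joint eigenfunction of the Macdonald–Ruijsenaars operators with the eigenvalues of $P_\lambda$, and then pin down the overall scalar.

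First, I would verify that $\psi(y)$ is holomorphic on $V_\theta$: the Macdonald function $f(y;s^\vee)$ is holomorphic for $|y_j/y_i|<|t/q|$ by \cite{NS2012}, and on $V_\theta$ one has $|y_j/y_i|\le\theta^{j-i}\le\theta^{n-1}<|t/q|$ thanks to the assumption $\theta^{n-1}<|t/q|$. By the preceding lemma $\varphi(x):=\cK(\psi)(x)$ then extends to an entire symmetric function of $x$.

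Next, I would establish the eigenfunction equation $D_x(u)\varphi(x)=\vep(t^\rho q^\lambda;u)\varphi(x)$. Applying $D_x(u)$ under the integral sign and using the kernel function identity $D_x(u)\bigl(K(x,y)w(y)\bigr)=\bigl(E_{y,t^{\rho^\vee}}(u)\bigr)^{\ast}\bigl(K(x,y)w(y)\bigr)$ derived above, I would transfer the formal adjoint back onto $\psi(y)$ via a contour-deformation argument justified by holomorphy of the integrand in a polyannular neighborhood of $C_{r,\sigma}$ under the required $q$-shifts (by adjusting $\sigma$ and $r$ slightly if needed). A direct computation based on $T_{q,y}^{I}(y^{\lambda^\vee}g(y))=(q^{\lambda^\vee})^{I}y^{\lambda^\vee}T_{q,y}^{I}g(y)$ gives the operator identity
\[
E_{y,t^{\rho^\vee}}(u)\bigl(y^{\lambda^\vee}g(y)\bigr)=y^{\lambda^\vee}E_{y,t^{\rho^\vee}q^{\lambda^\vee}}(u)g(y)=y^{\lambda^\vee}E_{y,s^\vee}(u)g(y).
\]
Taking $g(y)=f(y;s^\vee)$ and invoking $E_{y,s^\vee}(u)f(y;s^\vee)=\vep(s^\vee;u)f(y;s^\vee)$ together with the symmetric-function identity $\vep(s^\vee;u)=\vep(s;u)=\vep(t^\rho q^\lambda;u)$ (the eigenvalue depends only on the multiset of spectral parameters) yields the claimed eigenfunction property.

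Finally, I would identify $\varphi$ with the appropriate multiple of $P_\lambda$. Expanding $K(x,y)$ via the Cauchy formula \eqref{eq:Cauchy-trig} in a neighborhood of $x=0$ and integrating term by term, one obtains $\varphi(x)=\sum_{\mu}b_{\mu}c_{\mu}P_{\mu}(x)$ with $c_\mu=\int_{C_{r,\sigma}}P_\mu(y^{-1})w(y)\psi(y)\,d\omega_n(y)$. The eigenfunction equation just established, combined with the generic separation of the eigenvalues $\vep(t^\rho q^\mu;u)$ for distinct partitions $\mu$, forces $c_\mu=0$ whenever $\mu\ne\lambda$, so $\varphi(x)=b_\lambda c_\lambda P_\lambda(x)$. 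The theorem then reduces to showing $c_\lambda=1$, which I would obtain by extracting the constant term of the Laurent expansion of $P_\lambda(y^{-1})w(y)\psi(y)$ on the cycle $C_{r,\sigma}$ in the regime $|y_1|\gg\cdots\gg|y_n|$, where the leading behavior is $P_\lambda(y^{-1})\sim y^{-\lambda}$, $\psi(y)\sim y^{\lambda^\vee}$, and $w(y)\sim 1$. The \emph{main obstacle} lies precisely here: while the eigenfunction argument is structurally clean, the normalization $c_\lambda=1$ is a delicate constant-term identity requiring careful combinatorial matching between the asymptotically free expansion of $f(y;s^\vee)$ and the non-leading terms of $P_\lambda(y^{-1})$ and $w(y)$. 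The role of the reversal $\lambda^\vee$ is crucial here, and one would likely invoke the bispectral duality of the Macdonald function $f$ established in \cite{NS2012}, or an equivalent constant-term identity, to close this final step.
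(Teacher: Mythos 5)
Your overall strategy---apply $D_x(u)$ under the integral, use the kernel identity to transfer to $(E_{y,t^{\rho^\vee}}(u))^\ast$, perform a $q$-shift contour deformation, hit the eigenvalue $\vep(s^\vee;u)=\vep(s;u)$, expand $K(x,y)$ by the Cauchy formula, and pin down the scalar by a constant-term argument---is essentially identical to the paper's proof. The paper differs in one cosmetic detail: it first deduces from the $u^n$ coefficient of the eigenfunction equation that $\varphi$ is homogeneous of degree $|\lambda|$, hence a symmetric polynomial and a scalar multiple of $P_\lambda$, whereas you go directly through eigenvalue separation applied termwise in the Cauchy expansion; both work.

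However, your handling of the final normalization $c_\lambda=1$ contains an error and a misjudgement. You write that in the regime $|y_1|\gg\cdots\gg|y_n|$ the leading behavior is $P_\lambda(y^{-1})\sim y^{-\lambda}$. This is wrong: the dominant monomial is $y^{-\lambda^\vee}$, not $y^{-\lambda}$. Concretely, setting $z_i=y_{n+1-i}^{-1}$ and using the symmetry of $P_\lambda$, one has $P_\lambda(y^{-1})=P_\lambda(z)=z^\lambda(1+O(z_2/z_1,\ldots))$ and $z^\lambda=y^{-\lambda^\vee}$, while $z_{i+1}/z_i=y_{n+1-i}/y_{n-i}$, so
\[
P_\lambda(y^{-1})\in y^{-\lambda^\vee}\,\bC\{y_2/y_1,\ldots,y_n/y_{n-1}\}\quad\text{with leading coefficient }1.
\]
Taken literally, your stated asymptotics $y^{-\lambda}\cdot y^{\lambda^\vee}$ would leave a stray monomial $y^{\lambda^\vee-\lambda}\ne 1$, making the normalization fail. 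Once corrected, the normalization is not the ``main obstacle'' you fear, and no bispectral duality or combinatorial matching is needed: $P_\lambda(y^{-1})$, $w(y)$, and $y^{\lambda^\vee}f(y;s^\vee)$ belong to $y^{-\lambda^\vee}\bC\{y_2/y_1,\ldots,y_n/y_{n-1}\}$, $\bC\{y_2/y_1,\ldots,y_n/y_{n-1}\}$, and $y^{\lambda^\vee}\bC\{y_2/y_1,\ldots,y_n/y_{n-1}\}$ respectively, each with leading coefficient $1$, so their product lies in $\bC\{y_2/y_1,\ldots,y_n/y_{n-1}\}$ with constant term $1$, and that constant term is precisely $c_\lambda$. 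This is exactly the one-line argument the paper gives.
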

\begin{proof}{Proof}
Taking $\sigma>0$ such that $\sigma |q|<\min\pr{1,\theta,1/|t|}$, we denote by 
$\varphi(x)=\cK(f(y;s^\vee))$ the left-hand side of \eqref{eq:Kf-trig}.  
In this setting, if $|x_i|<|q|\sigma^{n-1}r$ ($i=1,\ldots,n$), then the $n$-cycle $C_{r,\sigma}$ 
is $q$-shift invariant with respect to the variables $y_i$ ($i=1,\ldots,n$) by Cauchy's theorem.   
Hence we obtain
\begin{equation}
\begin{split}
D_x(u)\varphi(x)
&=
\int_{C_{r,\sigma}}D_x(u)K(x,y)w(y)\,y^{\lambda^\vee}\!f(y;s^\vee)d\omega_n(y)\\
&=
\int_{C_{r,\sigma}}(E_{y,t^{\rho^\vee}})^\ast\big(K(x,y)w(y))\,y^{\lambda^\vee}\!f(y;s^\vee)d\omega_n(y)\\
&=
\int_{C_{r,\sigma}}K(x,y)w(y)
E_{y,t^{\rho^\vee}}\big(y^{\lambda^\vee}\!f(y;s^\vee)\big)d\omega_n(y)\\
&=
\int_{C_{r,\sigma}}K(x,y)w(y)\,
y^{\lambda^\vee}
E_{y,t^{\rho^\vee}q^{\lambda^\vee}}\big(f(y;s^\vee)\big)d\omega_n(y)\\
\end{split}
\end{equation}
Here we compute
\begin{equation}
E_{y,t^{\rho^\vee}}\big(y^{\lambda^\vee}\!f(y;s^\vee)\big)
=
y^{\lambda^\vee}
E_{y,t^{\rho^\vee}q^{\lambda^\vee}}f(y;t^{\rho^\vee}q^{\lambda^\vee})
=\vep(t^{\rho^\vee}q^{\lambda^\vee};u) 
y^{\lambda^\vee}f(y;t^{\rho^\vee}q^{\lambda^\vee}). 
\end{equation}
Since $\vep(s^\vee;u)=\vep(s;u)=\prod_{i=1}^{n}(1-us_i)$, we obtain
\begin{equation}\label{eq:Dxuphi}
D_{x}(u)\varphi(x)=\vep(t^\rho q^{\lambda};u)\varphi(x). 
\end{equation}
Note that the coefficient of $u^n$ gives $t^{\binom{n}{2}}T_{q,x_1}\cdots T_{q,x_n}\varphi(x)=
t^{\binom{n}{2}}q^{|\lambda|}\varphi(x)$, which implies $\varphi(x)$ is homogenous of 
degree $|\lambda|$ in $x$.
Since $\varphi(x)$ is a symmetric holomorphic function in $\bC^n$, it is actually a 
symmetric polynomial satisfying \eqref{eq:Dxuphi}, and hence a constant multiple 
of the Macdonald polynomial $P_{\lambda}(x)$. 
On the other hand, by \eqref{eq:Cauchy-trig} we have
\begin{equation}
\varphi(x)=\sum_{\ell(\mu)\le n}\ c_{\mu}b_{\mu}P_{\mu}(x),
\qquad
c_{\mu}=\int_{C_{r,\sigma}}P_{\mu}(y^{-1})\,y^{\lambda^\vee}\!f(y;s^\vee)d\omega_n(y).  
\end{equation}
Since $\varphi(x)$ is a constant multiple of $P_{\lambda}(x)$, we have 
$c_{\mu}=0$ ($\mu\ne \lambda$) and $\varphi(x)=c_{\lambda}b_{\lambda}P_{\lambda}(x)$.  
In fact we have $c_{\lambda}=1$, since 
$P_{\lambda}(y^{-1})$ and $y^{\lambda^\vee}f(y;s^{\vee})$ and $w(y)$ 
have leading terms 
$y^{-\lambda^\vee}$, $y^{\lambda^\vee}$ and $1$, respectively, with respect to the dominance order.
This implies $\varphi(x)=b_\lambda P_\lambda(x)$.  
\end{proof}

This theorem explains how the Macdonald polynomials are reconstructed 
by the integral transform from the 
Macdonald functions $f(x;s)$ in the asymptotic domain 
$|x_1|\gg|x_2|\gg\cdots\gg|x_n|$.  
The same idea will be used in Section \ref{sec:Convergence} for proving 
the convergence of elliptic deformations of the Macdonald polynomials. 


\section{Existence of formal joint eigenfunctions}\label{sec:Existence}

We begin this section by remarks on the condition on the parameter $t\in\bC^\ast$ that 
we impose for the separation of eigenvalues.  
After giving proofs to Propositions \ref{prop:RAOp} and \ref{prop:RBOp}, 
we establish the existence theorems of formal joint eigenfunctions 
for the two cases of elliptic deformations of Macdonald polynomials, 
and asymptotically free solutions in the domain $|x_1|\gg\cdots\gg|x_n|\gg|px_1|$, 
respectively. 

In order to clarify how the Ruijsenaars operators are related to the root system 
of the Lie algebra $\mathfrak{gl}_n$, we recall here some standard notations and terminologies relevant to our arguments. 
We denote by $\sP=\bZ\ep_1\oplus\cdots\oplus\bZ\ep_n$ the (integral) 
{\em weight lattice} of $\mathfrak{gl}_n$; it is a free $\bZ$-module with 
canonical basis $\ep_1,\ldots,\ep_n$, endowed with the 
the scalar product (symmetric bilinear form) 
$\br{\ ,\,}:\ \sP\times\sP\to\bZ$ such that $\br{\ep_i,\ep_j}=\delta_{i,j}$ 
($i,j\in\pr{1,\ldots,n}$).  We freely identify an integral weight 
$\mu=\mu_1\ep_1+\cdots+\mu_n\ep_n\in\sP$ 
with the $n$-vector of integers $\mu=(\mu_1,\ldots,\mu_n)\in\bZ^n$, 
and denote by $x^{\mu}=x_1^{\mu_1}\cdots x_n^{\mu_n}$ the corresponding 
monomial of the variables $x=(x_1,\ldots,x_n)$. 
Fixing the {\em simple roots} $\alpha_i=\ep_i-\ep_{i+1}\in\sP$ ($i=1,\ldots,n-1$), 
we denote by 
\begin{equation}
\sQ=\bZ\alpha_1\oplus\cdots\oplus\bZ\alpha_{n-1}\subseteq \sP
\end{equation}
the {\em root lattice} of $\mathfrak{gl}_n$, and by 
\begin{equation}
\sP_+=\prm{\lambda\in\sP}{\ \br{\alpha_i,\lambda}\ge 0\ \ (i=1,\ldots,n-1)}\subseteq \sP,
\end{equation}
the cone of {\em dominant integral weights}. 
Note that an integral weight 
$\lambda=\lambda_1\ep_1+\cdots+\lambda_n\ep_n\in \sP$,
or an integer vector $\lambda=(\lambda_1,\ldots,\lambda_n)\in\bZ^n$, 
belongs 
to $\sQ$ if and only if $|\lambda|=\lambda_1+\cdots+\lambda_n=0$, 
and to $\sP_+$ if and only if $\lambda_1\ge\cdots\ge\lambda_n$. 
In terms of the root system, 
the dominance order $\mu\le\nu$ on $\sP$ is defined by the condition
$\nu-\mu\in \sQ_+=\bZ_{\ge 0}\alpha_1\oplus\cdots\oplus\bZ_{\ge 0}\alpha_{n-1}$. 
We denote by 
$\Delta_+=\prm{\ep_i-\ep_j}{1\le i<j\le n}\subseteq \sQ_{+}$ 
the set of {\em positive roots}, so that 
\begin{equation}
\Delta(x)=\prod_{1\le i<j\le n}(x_i-x_j)=
\prod_{i=1}^{n}x_i^{n-i}
\prod_{1\le i<j\le n}(1-x_j/x_i)
=x^{\rho}\prod_{\alpha\in\Delta_+}(1-x^{-\alpha}),
\end{equation}
where $\rho=\sum_{i=1}^{n}(n-i)\ep_i$.  


\subsection{Separation of eigenvalues}\label{ssec:2.1}

Introducing a parameter $u$, we define a generating function $D_x(u)$ 
of the Macdonald-Ruijsenaars operators by 
\begin{equation}
D_x(u)=\sum_{r=0}^{n}(-u)^r D_x^{(r)}=\sum_{I\subseteq\pr{1,\ldots,n}}(-u)^{|I|}A_I(x)T_{q,x}^{\ep_I},
\quad A_I(x)=\frac{T_{t,x}^{\ep_I}(\Delta(x))}{\Delta(x)}. 
\end{equation}
Here we 
used the multi-index notation for 
$T_{q,x}^{\mu}=T_{q,x_1}^{\mu_1}\cdots T_{q,x_n}^{\mu}$ 
for $q$-shift operators, 
setting $\ep_I=\sum_{i\in I}\ep_i\in \sP$ for each subset $I\subseteq\pr{1,\ldots,n}$. 
Then, the joint eigenfunction equations \eqref{eq:MAEq} for the 
Macdonald polynomial $P_\lambda(x)$ ($\lambda\in \sP_+$) can be written as a 
single eigenfunction equation 
\begin{equation}
D_x(u)P_\lambda(x)=d_\lambda(u)P_\lambda(x),\quad
d_\lambda(u)=\sum_{r=0}^{n}(-u)^r e_r(t^{\rho}q^{\lambda})=\prod_{j=1}^{n}(1-ut^{n-j}q^{\lambda_j}),  
\end{equation}
containing a parameter $u$.  
Note that, for a pair $\lambda,\mu\in \sP$ of integral weights, $d_\lambda(u)=d_\mu(u)$ as polynomials in $u$ 
if and only if there exists a permutation $\sigma\in\frS_n$ such that 
\begin{equation}
t^{n-j}q^{\mu_j}=t^{n-\sigma(j)}q^{\lambda_{\sigma(j)}} \quad (j=1,\ldots,n). 
\end{equation}
For a subset $L\subseteq \sP$, we say that the polynomials 
$d_\lambda(u)$ {\em separate} $L$ if $d_\lambda(u)\ne d_{\mu}(u)$ 
for any distinct pair $\lambda,\mu\in L$.  
The condition $t^k\notin q^{\bZ_{<0}}$ $(k=1,\ldots,n-1)$ that we imposed for the existence 
of Macdonald polynomials is equivalent to saying that 
the polynomials $d_\lambda(u)$ separate $\sP_+$. 
In fact, under our assumption $|q|<1$, we have 
\begin{lem}\label{lem:separation}
$(1)$ The polynomials $d_\lambda(u)$ separate $\sP$ if and only if $t^{k}\notin q^{\bZ}$ $(k=1,\ldots,n-1)$. 
\newline
$(2)$ The polynomials $d_\lambda(u)$ separate $\sP_+$ if and only if $t^k\notin q^{\bZ_{<0}}$ 
$(k=1,\ldots,n-1)$.   
\end{lem}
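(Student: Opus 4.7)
The plan is to exploit the explicit factorization $d_\lambda(u) = \prod_{j=1}^n \bigl(1 - u\,t^{n-j} q^{\lambda_j}\bigr)$.  This shows that $d_\lambda(u) = d_\mu(u)$ as polynomials in $u$ if and only if the multisets $\{t^{n-j}q^{\lambda_j}\}_{j=1}^n$ and $\{t^{n-j}q^{\mu_j}\}_{j=1}^n$ coincide, equivalently iff there exists $\sigma \in \frS_n$ with $q^{\mu_j - \lambda_{\sigma(j)}} = t^{j - \sigma(j)}$ for all $j = 1, \ldots, n$.  Since $|q|<1$, the case $\sigma = \mathrm{id}$ immediately forces $\mu = \lambda$, so the question reduces to whether a nontrivial $\sigma$ can yield distinct $\lambda,\mu$ in the relevant domain.

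For part $(1)$, the ``if'' direction is immediate: picking any $j$ with $\sigma(j) \ne j$ yields $|\sigma(j)-j| \in \{1,\ldots,n-1\}$, and the identity above places $t^{|\sigma(j)-j|}$ in $q^{\bZ}$, contradicting the hypothesis.  For the converse, whenever $t^k = q^m$ with $k \in \{1,\ldots,n-1\}$ and $m \in \bZ$, I would take $\sigma = (1,\,k{+}1)$ and set $\lambda = 0$, $\mu = -m\ep_1 + m\ep_{k+1}$ when $m \ne 0$, or $\lambda = \ep_1$, $\mu = \ep_{k+1}$ when $m = 0$; the relation $t^k = q^m$ then gives $d_\lambda = d_\mu$ with $\lambda \ne \mu$ in $\sP$.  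For part $(2)$ $(\Rightarrow)$, the same idea must be adjusted to keep $\mu$ dominant: given $t^k = q^{-m}$ with $m > 0$, I would take $\lambda = a(\ep_1 + \cdots + \ep_{k+1})$ for any integer $a \ge m$ and $\mu = \lambda + m(\ep_1 - \ep_{k+1})$; substitution using $t^k = q^{-m}$ verifies the multiset equality, while $a \ge m$ ensures $\mu \in \sP_+$ and $\mu \ne \lambda$.

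The main content is the ``if'' direction of part $(2)$.  Assuming $t^k \notin q^{\bZ_{<0}}$ for $k = 1,\ldots,n-1$ and taking $\lambda, \mu \in \sP_+$ with $d_\lambda = d_\mu$, the hypothesis together with $|q|<1$ forces $\mu_j \le \lambda_{\sigma(j)}$ at every ascent $\sigma(j) > j$ and $\mu_j \ge \lambda_{\sigma(j)}$ at every descent $\sigma(j) < j$.  My plan is to decompose $\sigma$ into disjoint cycles and show that on the support of each nontrivial cycle $(a_1,\ldots,a_\ell)$ both $\lambda$ and $\mu$ are constant and equal.  For a transposition $(i,j)$ with $i<j$, the inequalities $\mu_i \le \lambda_j$ and $\mu_j \ge \lambda_i$ combine with dominance ($\lambda_i \ge \lambda_j$, $\mu_i \ge \mu_j$) to close a chain $\mu_j \ge \lambda_i \ge \lambda_j \ge \mu_i \ge \mu_j$ that collapses all four values.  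For a longer cycle I would rotate the labelling so that $a_1$ is the extremal (say minimum) index of the support, then chase the ascent/descent inequalities around the cycle while interlacing with the dominance constraints at each comparison.  Once each cycle is pinned down, $\mu_j = \lambda_{\sigma(j)}$ for all $j$, so $\mu$ is a rearrangement of $\lambda$; uniqueness of the dominant rearrangement then yields $\lambda = \mu$, the desired contradiction.  The hard part will be this cycle-chasing step for $\ell \ge 3$: one must ensure that the ascent and descent inequalities, together with dominance, propagate all the way around the cycle, plausibly by induction on $\ell$ after reducing to the case where the chosen endpoint is extremal.
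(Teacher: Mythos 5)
Your treatment of part $(1)$ and of the ``only if'' direction of part $(2)$ is correct and essentially matches the paper: the forward implications follow by picking a displaced index, and the converses are established by explicit transposition examples (your choices of $\lambda,\mu$ for $t^k = q^m$ are a valid variant of the paper's construction via the transposition $(i,n)$ with $i = n-k$).

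The genuine gap is the ``if'' direction of part $(2)$. Your proposed combinatorial route --- decompose $\sigma$ into cycles, derive $\mu_j \le \lambda_{\sigma(j)}$ at ascents and $\mu_j \ge \lambda_{\sigma(j)}$ at descents (correct), then chase these around each cycle together with dominance --- works cleanly only for transpositions, and you acknowledge the case $\ell \ge 3$ is not resolved. The sketched ``induction on $\ell$'' has no evident reduction step, since a nontrivial $\ell$-cycle does not decompose into shorter cycles. One can close the chain in each concrete example (by first pinning both $\lambda$ and $\mu$ at the extremal indices $\min C$, $\max C$ to a common max $M$ and min $m$, then using dominance to locate an ascent $j$ with $\mu_j = M$ and $\lambda_{\sigma(j)} = m$), but extracting a uniform argument along these lines is delicate and not furnished by the proposal. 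As written, this is a missing idea, not merely a detail.

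The paper instead proceeds by a much shorter magnitude argument, and you should be aware of it. If $|t| \le 1$, then for dominant $\lambda$ the sequence $\bigl(|t^{n-j}q^{\lambda_j}|\bigr)_{j=1}^n$ is non-decreasing in $j$ (because $|t|^{n-j}$ and $|q|^{\lambda_j}$ are both non-decreasing). Equality of the root multisets of $d_\lambda$ and $d_\mu$ therefore forces term-by-term equality of moduli, hence $|q|^{\lambda_j} = |q|^{\mu_j}$ and $\lambda_j = \mu_j$ for every $j$ --- so separation on $\sP_+$ holds with \emph{no} hypothesis on $t$ at all. If $|t| > 1$ and $d_\lambda = d_\mu$ for some distinct dominant pair, part $(1)$ gives $t^k = q^m$ with $k \in \{1,\ldots,n-1\}$, $m \in \bZ$; then $|t|^k > 1$ together with $|q| < 1$ forces $m < 0$, i.e.\ $t^k \in q^{\bZ_{<0}}$. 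Taking the contrapositive finishes $(2)$. This sidesteps the cycle analysis entirely, and I recommend you replace the combinatorial chase by this argument unless you can supply a complete general proof of the cycle-pinning claim.
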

\begin{proof}{Proof} 
(1)\ \ If $d_\lambda(u)=d_\mu(u)$ for a distinct pair $\lambda,\mu\in \sP$, 
we have $t^{n-j}q^{\mu_j}=t^{n-\sigma(j)}q^{\lambda_{\sigma(j)}}$ ($j=1,\ldots,n$) 
for a permutation $\sigma\in\frS_n$.  Since $\lambda\ne \mu$, we have $\sigma\ne 1$, 
and hence there exists an index $j\in\pr{1,\ldots,n}$ such that $\sigma(j)\ne j$ and 
$t^{\sigma(j)-j}=q^{\lambda_{\sigma(j)}-\mu_j}\in q^{\bZ}$,
which means $t^k\in q^{\bZ}$ for $k=|\sigma(j)-j|\in \pr{1,\ldots,n-1}$. 
Conversely, 
suppose that $t^{n-i}=q^{-l}$ for some $i\in\pr{1,\ldots,n-1}$ and $l\in\bZ$, setting $i=n-k$.  
Then we have 
\begin{equation}
\begin{split}
(t^{n-1}q^{\lambda_1},\ldots,t^{n-i}q^{\lambda_i},\ldots,q^{\lambda_n})&=
(t^{n-1}q^{\lambda_1},\ldots,q^{\lambda_i-l},\ldots,t^{n-i}q^{\lambda_n+l})
\\
&=
(i,n). (t^{n-1}q^{\lambda_1},\ldots,t^{n-i}q^{\lambda_n+l},\ldots,q^{\lambda_i-l}),
\end{split}
\end{equation}
where $(i,n)$ stands for the transposition of $i$ and $n$. 
This implies that $d_{\lambda}(u)=d_{\mu}(u)$ for the pair 
\begin{equation}\label{eq:ldmu}
\lambda=(\lambda_1,\ldots,\lambda_i,\ldots,\lambda_n),\quad 
\mu=(\lambda_1,\ldots,\lambda_n+l,\ldots,\lambda_i-l)\in \sP. 
\end{equation}
(2)\ \ 
We first show that, if $|t|\le 1$, then the polynomials $d_{\lambda}(u)$ separate 
$\lambda\in \sP_+$.  
Under the assumption $|t|\le 1$, 
the sequence $|t^{n-j}q^{\lambda_j}|$ ($j=1,\ldots,n$) is non-decreasing 
for any dominant $\lambda\in \sP_+$. 
If $d_\lambda(u)=d_\mu(u)$ for a pair $\lambda,\mu\in \sP_+$, then we have 
$|t^{n-j}q^{\lambda_j}|=|t^{n-j}q^{\mu_j}|$ 
and hence $|q|^{\lambda_j}=|q|^{\mu_j}$ for $j=1,\ldots,n$.  
This implies $\lambda_j=\mu_j$ ($j=1,\ldots,n$) (since $|q|<1$).  
Suppose now that $d_\lambda(u)=d_\mu(u)$ for a distinct pair $\lambda,\mu\in\sP$.  
Then we have $|t|>1$, and also $t^k\in q^{\bZ}$ for some $k\in\pr{1,\ldots,n-1}$ by (1).  
Hence $t^k\in q^{\bZ_{<0}}$ for some $k\in\pr{1,\ldots,n-1}$. 
Conversely, suppose that $t^{n-i}=q^{-l}$ for some $i\in\pr{1,\ldots,n-1}$ and $l\in\bZ_{>0}$.  
Then $d_\lambda(u)=d_{\mu}(u)$ for 
$\lambda=(m,\ldots,m,0,0\ldots,0)$ with $m$'s in the first $i-1$ places, 
and $\mu=(m,\ldots,m,l,0,\ldots,-l)$, both in $\sP_+$ 
if $m\ge l$. 
\end{proof}

The following lemma is crucial in our proof of the existence theorem of formal joint eigenfunctions 
for the elliptic Ruijsenaars operators. 
\begin{lem}\label{lem:genericconst}
If the polynomials $d_\lambda(u)$ separate a subset $L\subseteq \sP$, then 
there exists a constant $c\in\bC^{\ast}$, such that $d_\lambda(c)\ne d_\mu(c)$ 
for any distinct pair $\lambda,\mu\in L$.  
\end{lem}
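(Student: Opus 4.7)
The plan is to prove this by a simple cardinality/genericity argument, using that $L$ is at most countable (since it is a subset of $\sP \cong \bZ^n$) while the set of ``bad'' values of $c$ contributed by each distinct pair $\lambda,\mu \in L$ is finite.

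First I would fix an arbitrary pair of distinct weights $\lambda,\mu \in L$ and consider the difference
\begin{equation}
g_{\lambda,\mu}(u) = d_\lambda(u) - d_\mu(u) \in \bC[u].
\end{equation}
By the separation hypothesis $d_\lambda(u) \ne d_\mu(u)$ as polynomials in $u$, so $g_{\lambda,\mu}(u)$ is a \emph{nonzero} polynomial, and its degree is at most $n$ (since each $d_\nu(u)$ has degree exactly $n$). Consequently its zero set
\begin{equation}
Z_{\lambda,\mu} = \{c \in \bC \mid g_{\lambda,\mu}(c) = 0\}
\end{equation}
is a finite subset of $\bC$ with at most $n$ elements.

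Next I would take the union over all such pairs. Because $L \subseteq \sP$ and $\sP$ is countable, the collection of unordered pairs $\{\lambda,\mu\}$ with $\lambda \ne \mu$ in $L$ is at most countable. Therefore
\begin{equation}
Z = \bigcup_{\substack{\lambda,\mu \in L\\ \lambda \ne \mu}} Z_{\lambda,\mu}
\end{equation}
is a countable union of finite sets, hence countable. Since $\bC^\ast$ is uncountable, the complement $\bC^\ast \setminus Z$ is nonempty; any $c \in \bC^\ast \setminus Z$ then satisfies $d_\lambda(c) \ne d_\mu(c)$ for every distinct pair $\lambda,\mu \in L$, which is the claim.

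There is no real obstacle here: the only thing to check is that the separation hypothesis yields a genuinely nonzero polynomial $g_{\lambda,\mu}(u)$ (so that its zero locus is finite), and that one is entitled to invoke the countability of $L$, both of which are immediate. If a more constructive statement were desired, one could note that ``almost every'' $c \in \bC^\ast$ works in the sense of Baire category or Lebesgue measure, but for the purposes of the lemma the bare existence suffices.
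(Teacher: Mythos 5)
Your proof is correct and follows essentially the same route as the paper: both identify the countable set of ``bad'' $c$ values at which some $d_\lambda(c)=d_\mu(c)$ coincide, and pick $c$ in the complement. You merely spell out why that set is countable (a countable union of the finite zero sets of the nonzero polynomials $d_\lambda-d_\mu$), which the paper asserts without elaboration.
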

\begin{proof}{Proof}
The set 
\begin{equation}\label{eq:defineS}
S=\prm{a\in \bC^\ast}{\ d_\lambda(a)=d_{\mu}(a)\ \ \mbox{for some distinct pair $\lambda,\mu\in L$}}\subseteq \bC^\ast
\end{equation}
is countable, and hence the complement $\bC^\ast\backslash S$ is non-empty.  
This implies that any element $c\in\bC^\ast\backslash S$ satisfies the condition 
$d_{\lambda}(c)\ne d_{\mu}(c)$ for any distinct pair $\lambda,\mu\in\bC^\ast$. 
\end{proof}


\subsection{Elliptic Ruijsenaars operators as power series of $p$} \label{ssec:2.2}
In this subsection, 
we clarify the structure of $p$-expansion of the elliptic Ruijsenaars operators, 
thereby proving Propositions \ref{prop:RAOp} and \ref{prop:RBOp}. 
\par\medskip
Similarly to the trigonometric case, we introduce the generation function
\begin{equation}
\cD_x(p;u)=\sum_{r=0}^{n}(-u)^{r}\cD^{(r)}_x(p)=
\sum_{I\subseteq \pr{1,\ldots,n}}(-u)^{|I|}\cA_I(x;p)T_{q,x}^{\ep_I},\quad
\cA_I(x;p)=\frac{T_{t,x}^{\ep_I}(\Delta(x;p))}{\Delta(x;p)},
\end{equation}
for the elliptic Ruijsenaars operators, 
where $\Delta(x;p)=x^{\rho}\prod_{1\le i<j\le n}\theta(x_j/x_i;p)$. 
We express the coefficients 
$\cA_I(x;p)$ in two ways:
\begin{equation}
\cA_I(x;p)=t^{\br{\ep_I,\rho}}\cB_I(x;p)=A_I(x)\cC_I(x;p),
\end{equation}
where $A_I(x)=\cA_I(x;0)$ are the coefficients of $D_x(u)$ 
in \eqref{eq:MRAB} and 
\begin{equation}\label{eq:defBC}
\begin{split}
\cB_I(x;p)&=
\prod_{\substack{1\le i<j\le n\\ i\in I,\,j\notin I}}
\frac{\theta(x_j/tx_i;p)}{\theta(x_j/x_i;p)}
\prod_{\substack{1\le i<j\le n\\ i\notin I,\,j\in I}}
\frac{\theta(tx_j/x_i;p)}{\theta(x_j/x_i;p)},
\\
\cC_I(x;p)&=
\prod_{ i\in I,\,j\notin I}
\frac{(ptx_i/x_j;p)(px_j/tx_i;p)}{(px_i/x_j;p)(px_j/x_i;p)}. 
\end{split}
\end{equation}
We remark here that, for each $I\subseteq\pr{1,\ldots,n}$, 
$\cB_I(x;p)$ is a holomorphic function of the variables 
$z_0=px_1/x_n, z_1=x_2/x_1, \ldots,z_{n-1}=x_n/x_{n-1}$ 
in the polydisc $|z_0|<1, |z_1|<1,\ldots,|z_{n-1}|<1$.
In fact, we have $z_0z_1\cdots z_{n-1}=p$ and 
\begin{equation}
x_j/x_i=z_i\cdots z_{j-1},\quad px_i/x_j=z_j\cdots z_{n-1}z_0\cdots z_{i-1}\quad(1\le i<j\le n).  
\end{equation}
This implies in particular that $\cB_I(x;p)\in\bC\pr{px_1/x_n,x_2/x_1,\ldots,x_n/x_{n-1}}$, and hence, 
$\cB_I(x;p)$ is represented by a power series in $p$ of the form
\begin{equation}\label{eq:BIk}
\cB_I(x;p)=\sum_{k=0}^{\infty}p^k\,
\cB_{I,k}(x),
\quad \cB_{I,k}(x)\in (x_1/x_n)^k\bC\pr{x_2/x_1,\ldots,x_n/x_{n-1}}\quad(k=0,1,2,\ldots),
\end{equation}
which proves Proposition \ref{prop:RBOp} for $\cA_I(x;p)=t^{\br{\ep_I,\rho}}\cB_I(x;p)$. 
On the other hand, from \eqref{eq:defBC}, $\cC_I(x;p)$ is 
holomorphic in the domain $|p|<|x_j/x_i|<|p|^{-1}$, and 
it can be represented there
by a power series of $p$ whose coefficients are Laurent polynomials in $x$: 
\begin{equation}\label{eq:CIk}
\cC_I(x;p)=\sum_{k=0}^{p}p^k \cC_{I,k}(x),\quad \cC_{I,k}(x)\in \bC[x^{\pm1}]
\quad(k=0,1,2,\ldots).  
\end{equation}
Hence, 
each coefficients of $\cA_I(x;p)$ of the elliptic Ruijsenaars operators has a $p$-expansion of the form
\begin{equation}
\cA_I(x;p)=\sum_{k=0}^{\infty}p^k\,\cA_{I,k}(x),
\quad
\cA_{I,k}(x)=t^{\br{\ep_I,\rho}}\cB_{I,k}(x)=A_I(x)\cC_{I,k}(x)
\quad(k=0,1,2,\ldots). 
\end{equation}
Note that both expansions \eqref{eq:BIk} and \eqref{eq:CIk} make sense 
in the domain $|p|^{\frac{1}{n-1}}<|x_{i+1}/x_{i}|<1$ ($i=1,\ldots,n-1$). 
Combining \eqref{eq:BIk} and \eqref{eq:CIk}, we have
\begin{equation}\label{eq:AIk}
\cA_{I,k}(x)\in \big(A_I(x)\bC[x^{\pm1}]\big)\cap\big((x_1/x_n)^{k}\bC\pr{x_2/x_1,\ldots,x_n/x_{n-1}}\big)
\end{equation}
for $k=0,1,2,\ldots$.  
For each $r=1,\ldots,n$, 
the elliptic Ruijsenaars operator 
$\cD_x^{(r)}(p)$ is a $\frS_n$-invariant $q$-difference operator, 
and hence 
the coefficients $\cD^{(r)}_{x,k}(u)$ $(k=0,1,2\ldots)$
in the $p$-expansion
\begin{equation}
\cD^{(r)}_{x}(p)=\sum_{k=0}^{\infty}p^k\,\cD^{(r)}_{x,k},\quad
\cD^{(r)}_{x,k}=\sum_{I\subseteq\pr{1,\ldots,n};\ |I|=r} \cA_{I,k}(x)T_{q,x}^{\ep_I}
\end{equation}
are also $\frS_n$-invariant.  
Since $\Delta(x)\cD^{(r)}_{x,k}$ $(k=0,1,2,\ldots)$ 
have Laurent polynomial coefficients as well, 
they stabilize the ring $\bC[x^{\pm1}]^{\frS_n}$ of symmetric Laurent polynomials, 
namely
$\cD^{(r)}_{x,k}\big(\bC[x^{\pm1}]^{\frS_n}\big)\subseteq \bC[x^{\pm1}]^{\frS_n}$. 
(Note that $\cD^{(r)}_{x,k}$ do {\em not} stabilize $\bC[x^{\pm1}]$.)
Suppose that 
$\varphi(x)\in \bC[x^{\pm1}]^{\frS_n}_{\le \mu}$ for some $\mu\in\sP_+$.  
Since $\varphi(x)\in x^{\mu}\bC\cps{x_2/x_1,\ldots,x_n/x_{n-1}}$, 
by \eqref{eq:AIk} we obtain 
\begin{equation}
\psi(x)=\cD^{(r)}_{x,k}\big(\varphi(x)\big)\in  x^{\mu+k\phi}\bC\cps{x_2/x_1,\ldots,x_n/x_{n-1}}, 
\end{equation}
where $x^{\phi}=x_1/x_n$.  Hence
\begin{equation}
\psi(x)\in \bC[x^{\pm1}]^{\frS_n}\cap\big( x^{\mu+k\phi}\bC\cps{x_2/x_1,\ldots,x_n/x_{n-1}}\big)
=\bC[x^{\pm1}]^{\frS_n}_{\le \mu+k\phi}.
\end{equation}
This means that 
$\cD^{(r)}_{x,k}\big(\bC[x^{\pm1}]^{\frS_n}_{\le \mu}\big)\subseteq 
\bC[x^{\pm1}]^{\frS_n}_{\le \mu+k\phi}$ 
for all $k=0,1,2,\ldots$ and $\mu\in \sP_+$,
which proves Proposition \ref{prop:RAOp}.  


\subsection{Elliptic deformation of Macdonald polynomials}\label{ssec:2.3}

In this subsection, we complete our proof of Theorem \ref{thm:RAFm}. 
\par\medskip
Denoting by  $\cR=\bC[x^{\pm1}]^{\frS_n}$ the ring of symmetric 
Laurent polynomials in $x=(x_1,\ldots,x_n)$, 
we consider the commuting family of $\bC\fps{p}$-linear operators
\begin{equation}
\cD^{(r)}_x(p):\ \ \cR\fps{p}\to\cR\fps{p}\qquad(r=1,\ldots,n)
\end{equation}
defined by the elliptic Ruijsenaars operators, and investigate the joint eigenvalue problems 
\begin{equation}\label{eq:EVPA}
\cD^{(r)}_{x}(p)\psi(x;p)=\vep^{(r)}(p)\psi(x;p)\quad(r=1,\ldots,n)
\end{equation}
where 
\begin{equation}
\psi(x;p)=\sum_{k=0}^{\infty}p^k \psi_k(x)
\in \cR\fps{p},\quad \vep^{(r)}(p)
=\sum_{k=0}^{\infty}p^k\vep^{(r)}_k\in\bC\fps{p}. 
\end{equation}
In terms of the generating function $\cD_x(p;u)=\sum_{r=0}^{n}(-u)^r \cD^{(r)}_x(p)$, 
this joint eigenvalue problem is expressed as an equation
\begin{equation}\label{eq:EVPAu}
\cD_x(u)\psi(x;p)=\vep(p;u)\psi(x;p),\quad \vep(p;u)=\sum_{r=0}^{n}(-u)^{r}\vep^{(r)}(p)
=\sum_{k=0}^{\infty}p^k \vep_k(u)
\end{equation}
with a parameter $u$. 

\par\medskip
We assume hereafter that the parameter $t$ satisfies the condition
$t^{k}\notin q^{\bZ_{<0}}$ ($k=1,\ldots,n-1$).  
Then, by Lemma \ref{lem:separation}, the polynomials 
$d_{\lambda}(u)=\prod_{j=1}^n(1-ut^{n-j}q^{\lambda_j})$ 
in $u$ separate the cone $\sP_+$ of dominant integral weights. 
We also fix an arbitrary constant $c\in\bC^\ast\backslash S$ with $S$ 
defined as in \eqref{eq:defineS} of Lemma \ref{lem:genericconst}, 
so that $d_\lambda(c)\ne d_{\mu}(c)$ for any distinct $\lambda,\mu\in\sP_+$. 
We first solve equation \eqref{eq:EVPAu}, 
specializing the parameter $u$ to $c$, 
and then show the solution $\psi(x;p)$ gives a joint eigenfunction for 
all $\cD^{(r)}_x(p)$ ($r=1,\ldots,n$).  

In terms of the expansion coefficients, the eigenfunction equation 
\begin{equation}\label{eq:EVPAc}
\cD_x(p;c)\psi(x;p)=\vep(p;c)\psi(x;p),\quad\psi(x;p)\in\cR\fps{p}
\end{equation}
is transferred to the recurrence relations
\begin{equation}\label{eq:reck0}
\sum_{i=0}^{k}\cD_{x,i}(c)\psi_{k-i}(x)=\sum_{i=0}^{k}\vep_{i}(c)\psi_{k-i}(x)
\qquad(k=0,1,2,\ldots).
\end{equation}
The conditions for $k=0$ is the  
eigenfunction equation
\begin{equation}
D_x(c)\psi_0(x)=\vep_0(c)\psi_0(x) 
\end{equation}
for $\psi_0(x)\in\cR=\bC[x^{\pm1}]^{\frS_n}$. 
If $\psi_{0}(x)\ne 0$, it must be a constant multiple of 
the Macdonald polynomial $P_\lambda(x)$ for some $\lambda\in\sP_+$. 
Fixing a dominant integral weight $\lambda\in\sP_+$, we take 
\begin{equation}
\psi_0(x)=P_{\lambda}(x),\quad \vep_0(c)=d_\lambda(c)
\end{equation}
for the initial data. 
We rewrite the recurrence relations 
for $k>0$ as 
\begin{equation}\label{eq:reck1}
\cD_{x,0}(c)\psi_{k}(x)+\sum_{i=1}^{k}\cD_{x,i}(c)\psi_{k-i}(x)
=\vep_0(c)\psi_k(x)+\sum_{i=1}^{k-1}\vep_i(c)\psi_{k-i}(x)+\vep_k(c)\psi_0(x), 
\end{equation}
namely,
\begin{equation}\label{eq:reck2}
\big(D_{x}(c)-d_\lambda(c)\big)\psi_{k}(x)-\vep_k(c)P_\lambda(x)
=
-\sum_{i=1}^{k}\cD_{x,i}(c)\psi_{k-i}(x)
+\sum_{i=1}^{k-1}\vep_i(c) \psi_{k-i}(x). 
\end{equation}
At this step, regarding the right-hand side as known, 
we need to solve this equation for $\psi_k(x)\in\cR$ and $\vep_k(c)\in\bC$. 
Under the genericity assumption on $c$, 
we verify that $\psi_{k}(x)$ and $\vep_{k}(c)$ are determined inductively 
from $\psi_i(x)$ and $\vep_{i}(c)$ ($i=0,1,\ldots,k-1$).  
For that purpose, we use the decomposition 
\begin{equation}
\cR=\bC P_\lambda(x)\oplus \cS_{\lambda},\quad
\cS_\lambda=\bigoplus_{\mu\in P_+;\,\mu\ne \lambda}\bC P_{\mu}(x). 
\end{equation}
of $\cR$ by means of the Macdonald basis. 
Note that 
\begin{equation}
\big(D_x(c)-d_\lambda(c)\big)P_\mu(x)
=\big(d_\mu(c)-d_\lambda(c))P_\mu(x)
\end{equation}
and $d_{\mu}(c)\ne d_{\lambda}(c)$ for any $\mu\in P_+$ with $\mu\ne\lambda$.  
This implies that
$\mathrm{Ker}(D_x(c)-d_\lambda(c): \cR\to\cR)=\bC P_{\lambda}(x)$ 
and that $D_x(c)-d_{\lambda}(c)$ induces a $\bC$-isomorphism $\cS_{\lambda}\isom\cS_{\lambda}$. 
In the recurrence relation \eqref{eq:reck2}, decompose the right-hand side into the 
form $a_{k} P_{\lambda}(x)+\varphi_{k}(x)$ 
with $a_{k}\in\bC$ and $\varphi_{k}(x)\in \cS_\lambda$.  Then 
$\psi_{k}(x)$ and $\vep_{k}(c)$ are determined by
\begin{equation}
\vep_{k}(c)=-a_{k},\quad (D_x(c)-d_\lambda(c))\psi_{k}(x)=\varphi_{k}(x).  
\end{equation}
If we assume $\psi_{k}(x)\in \cS_\lambda$ ($k>0$), then 
$\psi_{k}(x)$ and $\vep_k(c)$ are determined uniquely by the recurrence.  
Otherwise, $\psi_{k}(x)$ is determined with freedom of 
adding a multiple of $P_{\lambda}(x)$ at each $k$th step.  
(As we will see below, this ambiguity corresponds to the multiplication 
by a power series $\gamma(p)\in\bC\fps{p}$ with $\gamma(0)=1$.)
Also, it is verified inductively by the recurrence \eqref{eq:reck2}
that $\psi_{k}(x)\in\cR_{\le \lambda+k\phi}$, 
since $\psi_{\lambda,k-i}(x)\in\cR_{\le\lambda+(k-i)\phi}$ 
and 
$\cD_{x,i}(c)(\cR_{\le\lambda+(k-i)\phi})\subseteq \cR_{\le\lambda+k\phi}$
by Proposition \ref{prop:RAOp}. 
\par\medskip
From now on, we denote by 
\begin{equation}
\varphi_\lambda(x;p)=\sum_{k=0}^{\infty}p^k\,\varphi_{\lambda;k}(x)\in \cR\fps{p}
\end{equation}
the {\em unique} formal solution of the eigenfunction equation \eqref{eq:EVPAc}
determined by the condition 
that $\varphi_{\lambda;0}(x)=P_{\lambda}(x)$, 
and $\varphi_{\lambda;k}(x)\in\cS_{\lambda}$ ($k>0$).  
(This normalization of $\varphi_{\lambda}(x;p)$ 
is different from the one we adopted below Theorem \ref{thm:RAFm}.)
Also, we denote by $\vep_\lambda(p;c)$ the corresponding formal eigenvalue.
At this stage, $\varphi_{\lambda}(x;p)$ may depend on the constant 
$c\in \bC^\ast\backslash S$ 
that we started with, but it is not the case as it will turn out later. 
For this reason, we are suppressing the possible dependence on $c$ in the notation 
of $\varphi_\lambda(x;p)$. 
We also remark that any formal solution of the eigenfunction equation 
\eqref{eq:EVPAc}
with initial condition $\psi(x;0)=P_{\lambda}(x)$ 
is expressed as $\psi(x;p)=\gamma(p)\varphi_{\lambda}(x;p)$ for some $\gamma(p)\in\bC\fps{p}$
with $\gamma(0)=1$.  
In fact, express $\psi(x;p)$ as 
\begin{equation}
\psi(x;p)=\sum_{k=0}^{\infty}p^k \psi_k(x),\quad \psi_k(x)=\sum_{\mu\in P_+}\psi_{k,\mu}P_{\mu}(x)
\quad(k=0,1,2,\ldots)
\end{equation}
with $\psi_0(x)=P_{\lambda}(x)$ and $\psi_{0,\mu}=\delta_{\mu,\lambda}$.  
If we set $\gamma(p)=\sum_{k=0}^{\infty}p^k\psi_{k,\lambda}$, it is easy to show 
that $\gamma(p)^{-1}\psi(x;p)$ satisfies the normalization condition 
for $\varphi_{\lambda}(x;p)$. 
\par\medskip
We next show 
that the formal solution $\varphi_{\lambda}(x;p)$ is in fact a joint eigenfunction 
for the elliptic Ruijsenaars operators $\cD^{(r)}_x(p)$ ($r=1,\ldots,n$) acting on 
$\cR\fps{p}$.  
Since each $\cD_x^{(r)}(p)$ commute with $\cD_x(p;c)$, we have 
\begin{equation}
\cD_x(p;c)\cD_{x}^{(r)}(p)\varphi_{\lambda}(x;p)
=\cD_{x}^{(r)}(p)\cD_x(p;c)\varphi_{\lambda}(x;p)=\vep_{\lambda}(p;c)\cD^{(r)}_x(p)\varphi_\lambda(x;p). 
\end{equation}
This means that $\cD^{(r)}_x(p)\varphi_\lambda(x;p)$ is a formal solution of the 
eigenfunction equation \eqref{eq:EVPAc} with initial condition 
\begin{equation}
\cD^{(r)}_x(p)\varphi_\lambda(x;p)\big|_{p=0}=D^{(r)}_xP_\lambda(x)=e_r(t^\rho q^\lambda)P_{\lambda}(x). 
\end{equation}
Hence, $\cD^{(r)}_x(p)\varphi_{\lambda}(x;p)$ is expressed as 
$\vep^{(r)}_\lambda(p)\varphi_{\lambda}(x;p)$ with some $\vep^{(r)}_\lambda(p)\in\bC\fps{p}$ 
such that $\vep^{(r)}_\lambda(0)=e_r(t^\rho q^{\lambda})$.  
Namely, we have
\begin{equation}
\cD^{(r)}_{x}(p)\varphi_\lambda(x;p)=\vep^{(r)}_\lambda(p)\varphi_{\lambda}(x;p),
\quad
\vep^{(r)}_\lambda(0)=e_r(t^\rho q^{\lambda})\quad(r=0,1,\ldots,n),
\end{equation}
as desired.  In particular, we have 
\begin{equation}
\cD_x(p;u)\varphi_\lambda(x;p)=\vep_\lambda(p;u)\varphi_{\lambda}(x;p),
\quad \vep_\lambda(p;u)=\sum_{r=0}^{n}(-1)^r \vep^{(r)}_{\lambda}(p)
\end{equation}
for arbitrary $u\in\bC$.  This implies also that $\varphi_\lambda(x;p)$ does {\em not} depend on 
the constant $c\in \bC^\ast\backslash S$ that 
we have chosen for the construction of $\varphi_\lambda(x;p)$.  
This completes the proof of Theorem \ref{thm:RAFm}. 
\begin{rem}\rm 
If we regard $q$ and $t$ as indeterminates, 
the formal solution $\varphi_\lambda(x;p)$ mentioned above, as well as 
the normalized formal solution in the sense of \eqref{eq:cPpexp2}, 
is uniquely determined 
as a formal power series of $p$ with 
coefficients in $\bC(q,t)[x^{\pm1}]^{\frS_n}$. 
\end{rem}
\subsection{Asymptotically free eigenfunctions in the domain $|x_1|\gg\cdots\gg|x_n|\gg|px_1|$}\label{ssec:2.4}

In this subsection, we complete our proof of Theorem \ref{thm:RBFm}. 
\par\medskip
As we remarked in Subsection \ref{ssec:2.2}, the coefficients 
$\cA_{I}(x;p)$ of the elliptic Ruijsenaars operators 
define formal power series in 
\begin{equation}
\bC\fps{px_1/x_n,x_2/x_1,\ldots,x_n/x_{n-1}},  
\end{equation}
which is the set of all formal power series of the form 
\begin{equation}\label{eq:phi=sum}
\varphi(x;p)=\sum_{l_0,l_1,\ldots,l_{n-1}\in\bZ_{\ge0}} \varphi_{l_0,l_1,\dots,l_{n-1}} (px_1/x_n)^{l_0}
(x_2/x_1)^{l_1}\cdots (x_n/x_{n-1})^{l_{n-1}}. 
\end{equation}
In order to simplify the presentation, 
we use below the notation of monomials attached 
to the root lattice $\sQ^\aff$ of the affine Lie algebra $\widehat{\mathfrak{gl}}_n$ 
\cite{Kac1990}.  
Adding the {\em null root} $\delta$ to the basis, we extend the weight lattice $\sP$ 
of $\mathfrak{gl}_n$ to 
\begin{equation}
\widetilde{\sP}=\bZ\delta\oplus\sP
=\bZ\delta\oplus\bZ\ep_1\oplus\cdots\oplus\bZ\ep_n, 
\end{equation}
together with the scalar product $\widetilde{\sP}\times\widetilde{\sP}\to\bZ$ extended by 
$\br{\mu,\delta}=\br{\delta,\mu}=0$ ($\mu\in\sP$) and $\br{\delta,\delta}=0$.  
For each element $\lambda=k\delta+\mu\in\widetilde{\sP}$ 
$(k\in\bZ,\ \mu\in\sP)$, 
we attach the monomial 
\begin{equation}
x^{\lambda}=p^{-k}x^{\mu}=p^{-k}x_1^{\mu_1}\cdots x_n^{\mu_n}
\end{equation}
with the convention $x^{\delta}=p^{-1}$.  Introducing the 0th simple affine root 
$\alpha_0=\delta-\phi=\delta-\ep_1+\ep_n$, we denote by
\begin{equation}
\sQ^{\aff}=\bZ\alpha_0\oplus\bZ\alpha_1\oplus\cdots\oplus\bZ\alpha_{n-1}
=\bZ\delta\oplus \sQ
\subseteq \widetilde{\sP}
\end{equation}
the root lattice of the affine Lie algebra $\widehat{\mathfrak{gl}}_n$. 
Note here that $\alpha_0+\alpha_1+\cdots+\alpha_{n-1}=\delta$.  
We also set $\sQ^\aff_+=\bZ_{\ge0}\alpha_0\oplus\bZ_{\ge0}\alpha_1\oplus\cdots
\oplus\bZ_{\ge0}\alpha_{n-1}.$
Noting that
\begin{equation}
x^{-\alpha_0}=px_1/x_n,\ \ 
x^{-\alpha_1}=x_2/x_1,\ \ \ldots,\ \ x^{-\alpha_{n-1}}=x_n/x_{n-1}, 
\end{equation}
we rewrite the monomials in $\bC\fps{px_1/x_n,x_2/x_1,\ldots,x_n/x_{n-1}}$ 
as 
\begin{equation}
(px_1/x_n)^{l_0}(x_2/x_1)^{l_1}\cdots (x_n/x_{n-1})^{l_{n-1}}
=x^{-\beta},\quad \beta=l_0\alpha_0+l_1\alpha_1+\cdots+l_{n-1}\alpha_{n-1},
\end{equation}
and call 
\begin{equation}
\htaf(\beta)=l_0+l_1+\cdots+l_{n-1}\in\bZ_{\ge 0}
\end{equation}
the {\em height} of $\beta\in \sQ^\aff_+$.  
In this notation, the formal power series in 
$\bC\fps{px_1/x_n,x_2/x_1,\ldots,x_n/x_{n-1}}$ 
are simply expressed as
\begin{equation}
\varphi(x;p)=\sum_{\beta\in\sQ^\aff_+}\varphi_{\beta} x^{-\beta},
\quad
\varphi_\beta\in\bC\ \ (\beta\in\sQ^\aff_+).  
\end{equation}

We recall the definition of the {\em modified elliptic Ruijsenaars operators} 
$\cE^{(r)}_{x,s}(p)$
that we introduced in Subsection \ref{ssec:1.4}: 
\begin{equation}
\cE^{(r)}_{x,s}(p)=
\sum_{I\subseteq\pr{1,\ldots,n};\ |I|=r} 
s^{\ep_I}
\cB_{I}(x;p)T_{q,x}^{\ep_I}\qquad(r=0,1,\ldots,n), 
\end{equation}
where
\begin{equation}
\cB_{I}(x;p)=
\prod_{\substack{1\le i<j\le n\\ i\in I;\,j\notin I}}
\frac{\theta(x_j/tx_i;p)}{\theta(x_j/x_i;p)}
\prod_{\substack{1\le i<j\le n\\ i\notin I;\,j\in I}}
\frac{\theta(tx_j/x_i;p)}{\theta(x_j/x_i;p)}
\qquad(I\subseteq\pr{1,\ldots,n}). 
\end{equation}
Note that the coefficients $\cB_I(x;p)$ are expressed as 
\begin{equation}\label{eq:expBIx}
\cB_I(x;p)=\sum_{\beta\in \sQ^{\aff}_+}x^{-\beta}\,b^{I}_\beta\in 
\bC\fps{px_1/x_n,x_2/x_1,\ldots,x_n/x_{n-1}},\quad b^{I}_0=1, 
\end{equation}
through the expansion in the domain  
$|px_1/x_n|<1$, $|x_2/x_1|<1$, $\ldots$, $|x_n/x_{n-1}|<1$. 

\par\medskip
In what follows, we regard $s=(s_1,\ldots,s_n)\in(\bC^\ast)^n$ as fixed constants, 
supposing that the genericity condition $s_j/s_i\notin q^{\bZ}$ ($1\le i<j\le n$) is satisfied.
In this setting, we investigate the joint eigenvalue problem 
\begin{equation}
\cE^{(r)}_{x,s}(p)\varphi(x;p)=\vep^{(r)}(p)\varphi(x;p),
\quad 
\vep^{(r)}(0)=e_r(s)
\qquad(r=1,\ldots,n)
\end{equation}
for a formal power series 
\begin{equation}
\varphi(x;p)=\sum_{\mu\in \beta\in\sQ^\aff_+}\varphi_{\beta} x^{-\beta}
\in\bC\fps{px_1/x_n,x_2/x_1,\ldots,x_n/x_{n-1}}
\end{equation}
with leading term $\varphi_0=1$.  
Here we have suppressed the dependence on $s$ in the notation of 
$\varphi(x;p)$ and $\vep^{(r)}(p)$.  
We also use the generation functions
\begin{equation}
\cE_{x,s}(p;u)=\sum_{r=0}^{n} (-u)^r \cE^{(r)}_{x,s}(p),
\quad
\vep(p;u)=\sum_{r=0}^{n}(-u)^r\vep^{(r)}(p)
\end{equation}
and consider the eigenfunction equation
\begin{equation}
\cE_{x,s}(p;u)\varphi(x;p)=\vep(p;u)\varphi(x;p),\quad \vep(0;u)=\prod_{i=1}^{n}(1-s_iu),
\end{equation}
with a parameter $u$. 

In view of the expansion \eqref{eq:expBIx}, we express the operator
$\cE_{x,s}(p;u)$ as follows: 
\begin{equation}
\begin{split}\label{eq:defbfun}
\cE_{x,s}(p;u)
&=\sum_{\beta\in \sQ_+^\aff}
x^{-\beta}\,b_{\beta}(T_{q,x};u),
\quad
b_{\beta}(\xi;u)=
\sum_{I\subseteq\pr{1,\ldots,n}}(-u)^{|I|}b^{I}_{\beta}s^{\ep_I}\xi^{\ep_I}, 
\end{split}
\end{equation}
with symbols $\xi=(\xi_1,\ldots,\xi_n)$ attached to $T_{q,x}=(T_{q,x_1},\ldots,T_{q,x_n})$. 
Since $b^{I}_0=1$, we have 
\begin{equation}
b_0(\xi;u)=\prod_{j=1}^{n}(1-us_j\xi_j). 
\end{equation}

\begin{lem}\label{lem:43}\ \ 
Suppose that $s_j/s_i\notin q^{\bZ^\ast}$ $(1\le i<j\le n)$, 
where $\bZ^\ast=\bZ\backslash\pr{0}$. 
Then, there exists a countable set $S\subseteq\bC^\ast$ such that,  if $c\in\bC^\ast\backslash S$, 
\begin{equation}
\prod_{j=1}^{n}(1-cs_jq^{m_j})\ne\prod_{j=1}^{n}(1-cs_j)
\end{equation}
for any nonzero $m=(m_1,\ldots,m_n)\in\bZ^n$.  
\end{lem}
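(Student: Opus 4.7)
The plan is to prove the lemma by showing that for each fixed nonzero $m\in\bZ^n$ the set of $c$ for which equality holds is finite, and then to take the countable union over all such $m$. Concretely, set
\begin{equation*}
P_m(c)=\prod_{j=1}^{n}(1-cs_jq^{m_j})-\prod_{j=1}^{n}(1-cs_j)\in\bC[c],
\end{equation*}
which is a polynomial in $c$ of degree at most $n$. Once we establish that $P_m\not\equiv0$ for every $m\neq0$, its zero set $Z_m\subseteq\bC$ is finite, and
\begin{equation*}
S\;=\;\bigcup_{m\in\bZ^n\setminus\{0\}} Z_m
\end{equation*}
is a countable union of finite sets, hence countable; any $c\in\bC^\ast\setminus S$ then satisfies the required strict inequality for all nonzero $m$ simultaneously.

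The substantive step is showing $P_m\not\equiv0$ when $m\neq0$. I would argue by contradiction. If $P_m\equiv 0$, then the two factored products agree in $\bC[c]$, so by unique factorization their root multisets coincide; equivalently, $\{s_jq^{m_j}\}_{j=1}^{n}=\{s_j\}_{j=1}^{n}$ as multisets in $\bC^\ast$. Hence there exists a permutation $\sigma\in\frS_n$ with $s_jq^{m_j}=s_{\sigma(j)}$ for every $j$. Now pick any index $i$ with $m_i\neq0$ (such an $i$ exists because $m\neq0$) and examine $\sigma(i)$. If $\sigma(i)=i$, then $q^{m_i}=1$, contradicting $|q|<1$. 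If $\sigma(i)=j\neq i$, then $s_j/s_i=q^{m_i}\in q^{\bZ^\ast}$; letting $(i',j')$ denote $(i,j)$ reordered so that $i'<j'$, this forces $s_{j'}/s_{i'}\in q^{\bZ^\ast}$ (it equals either $q^{m_i}$ or $q^{-m_i}$, both nontrivial $q$-powers), contradicting the hypothesis $s_{j'}/s_{i'}\notin q^{\bZ^\ast}$ for $1\le i'<j'\le n$.

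The main (and only mildly delicate) obstacle is that the permutation $\sigma$ realizing the multiset equality need not be unique when some of the $s_j$ happen to coincide. This ambiguity is sidestepped by localizing the argument to a single index $i$ with $m_i\neq0$ and inspecting only $\sigma(i)$, which avoids having to pin $\sigma$ down globally. The remainder is pure bookkeeping, combining finiteness of each $Z_m$ with countability of $\bZ^n\setminus\{0\}$ to produce the countable exceptional set $S$.
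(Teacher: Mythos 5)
Your proof is correct and follows essentially the same route as the paper: for each fixed nonzero $m$, the difference of the two products is a nonzero polynomial in $c$ (shown by the same contradiction argument via a permutation $\sigma$ with $s_jq^{m_j}=s_{\sigma(j)}$, using $|q|<1$ and the hypothesis $s_j/s_i\notin q^{\bZ^\ast}$), and then $S$ is the countable union of the finite zero sets. Your added remarks on reordering $(i,j)$ so that $i'<j'$ and on the non-uniqueness of $\sigma$ are small points of extra care that the paper leaves implicit.
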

\begin{proof}{Proof}
For each $m\in\bZ^n\backslash\pr{0}$, 
$g_m(u)=\prod_{j=1}^{m}(1-us_jq^{m_j})-\prod_{j=1}^{m}(1-us_j)$ is a nonzero 
polynomial in $u$ by the assumption $s_j/s_i\notin q^{\bZ^\ast}$ ($1\le i<j\le n)$. 
(In fact, 
suppose that there exists a nonzero $m\in\bZ^n$ and 
a permutation $\sigma\in\frS_n$ such 
$s_jq^{m_j}=s_{\sigma(j)}$ for $j=1,\ldots,m$.    
Then, under our assumption that $|q|<1$, 
for each index $i\in\pr{1,\ldots,n}$ such that $m_i\ne0$, we have
$\sigma(i)\ne i$ and $s_{\sigma(i)}/s_i\in q^{\bZ^\ast}$, 
leading to a contradiction.) 
Then, the set
\begin{equation}
S=\prm{a\in\bC^\ast}{\ g_m(a)=0\ \mbox{for some}\ m\in\bZ^{m}\backslash\pr{0}}
\end{equation}
is countable, and fulfills the condition of Lemma. 
\end{proof}

Fixing a {\em generic} constant $c\in\bC^\ast\backslash S$ in this sense,  
we investigate the eigenfunction equation
\begin{equation}\label{eq:EFEqf}
\cE_{x,s}(p;c)f(x;p)=\vep(p;c)f(x;p),
\quad \vep(p;c)=\sum_{k=0}^{\infty}p^k\vep_k(c)=\sum_{k=0}^{\infty}x^{-k\delta}\vep_k(c),
\end{equation}
where 
\begin{equation}
f(x;p)=\sum_{\mu\in \sQ^{\aff}_+} x^{-\mu}\,f_{\mu}\in\bC\fps{px_1/x_n,x_2/x_1,\ldots,x_n/x_{n-1}}.  
\end{equation}
We have suppressed the dependence of $f(x;p)$ and $\vep(p;u)$ on the $s$ parameters in order to simplify the notation. 
The eigenfunction equation \eqref{eq:EFEqf} 
is then equivalent to the recurrence relations 
\begin{equation}
\sum_{\beta+\nu=\mu}b_{\beta}(q^{-\nu};c)f_{\nu}
=\sum_{k\delta+\nu=\mu}\vep_{k}(c)f_\nu\qquad(\mu\in \sQ^{\aff}_+) 
\end{equation}
for the coefficients of $f(x;p)$.  
In view of condition 
\begin{equation}
b_{0}(1;c)f_0=\vep_0(c)f_0,
\end{equation}
of the case $\mu=0$, 
we start with
\begin{equation}
f_0=1,\quad \vep_0(c)=b_0(1;c)=\prod_{j=1}^{n}(1-cs_j). 
\end{equation}
For $\mu>0$ (i.e. $\mu\in\sQ^\aff_+$, $\mu\ne 0$), we have 
\begin{equation}\label{eq:rec1}
(b_0(q^{-\mu};c)-\vep_0(c))f_{\mu}
+\sum_{\beta+\nu=\mu;\,\beta>0} 
b_\beta(q^{-\nu};c)f_\nu
=\sum_{k\delta+\nu=\mu;\,k>0}\vep_k(c)f_{\nu}. 
\end{equation}
Note here that, if $\mu=k_0\alpha_0+k_1\alpha_1+\cdots+k_{n-1}\alpha_{n-1}\in\sQ^\aff_+$, 
we have
\begin{equation}
\br{\ep_j,\mu}=k_j-k_{j-1}\quad(j=1,\ldots,n),
\end{equation}
with the convention $k_n=k_0$, and hence
\begin{equation}
b_{0}(q^{-\mu};c)=\prod_{j=1}^{n}(1-cs_jq^{-k_j+k_{j-1}}). 
\end{equation}

If $\mu\notin\bZ_{>0}\delta$, there exists an index $i\in\pr{1,\ldots,n}$ 
such that $k_i-k_{i-1}\ne 0$, 
and hence
\begin{equation}
\prod_{j=1}^{n}(1-cs_jq^{-k_j+k_{j-1}})\ne \prod_{j=1}^{n}(1-cs_j)
\end{equation}
by the genericity assumption of $c$. 
Since $b_0(q^{-\mu};c)\ne \vep_0(c)$, 
by the recurrence relation \eqref{eq:rec1}
$f_{\mu}$ is determined uniquely
from $f_{\nu}$ ($\nu<\mu$) and $\vep_k(c)$ with $\htaf(k\delta)=kn<\htaf(\mu)$. 

When $\mu=l\delta$ ($l>0$), we have 
$b_{0}(q^{-l\delta};u)=b_0(1;c)=\vep_0(c)$.
Hence the recurrence relation is written as
\begin{equation}\label{eq:rec2}
\sum_{\beta+\nu=l\delta;\,\beta>0} 
b_\beta(q^{-\nu};c)f_\nu
=\vep_l(c)+\sum_{k\delta+\nu=l\delta;\,0<k<l}\vep_k(c)f_{\nu}. 
\end{equation}
This implies that 
$\vep_{l}(c)$ is determined uniquely from $f_{\nu}$ 
($\nu<l\delta$) and $\vep_k(c)$ ($k<l$), and $f_{l\delta}$ can be fixed arbitrarily.
Since this ambiguity of $f_{k\delta}$ ($k>0$) corresponds to the multiplication 
by an element $\gamma(p)\in\bC\fps{p}$ with $\gamma(0)=1$, we adopt here the 
normalization
\begin{equation}
f_0=1,\quad f_{k\delta}=0\quad\mbox{for all}\quad k>0. 
\end{equation}
Under this normalization, the recurrence relation for $\vep_l(u)$ $(l>0)$ simplifies as
\begin{equation}\label{eq:rec3}
\vep_l(c)=\sum_{\beta+\nu=l\delta;\,\beta>0} 
b_\beta(q^{-\nu};c)f_\nu. 
\end{equation}
This procedure implies the unique existence of a normalized formal solution 
$f(x;p)$.  Also, one can verify that any eigenfunction is 
expressed as $\gamma(p)f(x;p)$ with some $\gamma(p)\in\bC\fps{p}$, 
by a similar argument to the one we used in Subsection \ref{ssec:2.3}. 
\par\medskip
We now prove that $f(x;p)$ is a joint eigenfunction for all 
$\cE_{x,s}^{(r)}(p)$ ($r=0,1,\ldots,n$).  
Since $\cE_{x,s}^{(r)}(p)$ commutes with $\cE_{x,s}(p;c)$, we have
\begin{equation}
\cE_{x,s}(p;c)\cE_{x,s}^{(r)}(p)f(x;p)=\cE_{x,s}^{(r)}(p)\cE_{x,s}(p;c)f(x;p)
=\vep(p;c)\cE_{x,s}^{(r)}(p)f(x;p). 
\end{equation}
This means that $g_r(x;p)=\cE_{x,s}^{(r)}(p)f(x;p)$ is a formal eigenfunction 
of $\cE_{x,s}(p;c)$, and hence 
$g_r(x;p)=\vep^{(r)}(p)f(x;p)$ for some $\vep^{(r)}(p)\in\bC\fps{p}$,
namely, 
$\cE_{x,s}^{(r)}(p)f(x;p)=\vep^{(r)}(p)f(x;p)$ for $r=0,1,\ldots,n$. 
This also implies that the normalized formal solution $f(x;p)$ 
satisfies
\begin{equation}
\cE_{x,s}(p;u)f(x;p)=\vep(p;u)f(x;p),\quad \vep(p;u)=\sum_{r=0}^{n}(-u)^r \vep^{(r)}(p)
\end{equation}
for arbitrary $u\in\bC$. 
In particular, $f(x;p)$ does not depend on the choice of $c\in\bC^\ast\backslash S$ 
that we started with.  
This completes the proof of Theorem \ref{thm:RBFm}. 

\begin{rem}\rm 
If we regard as $q$ and $t$ as indeterminates, 
the coefficients $f_{\beta}$ $(\beta\in\sQ^\aff_+$) of 
the normalized formal solution $f(x;p)$ are determined uniquely 
as rational functions in $(q,t)$. 
\end{rem}

\section{Convergence of perturbative eigenfunctions}\label{sec:Convergence}

In this section, we first prove the convergence of 
the formal joint eigenfunctions $f(x;s;p)$ of 
the modified elliptic Ruijsenaars operators 
in the asymptotic domain $|x_1|\gg\cdots\gg|x_n|\gg|px_1|$, 
and then apply an integral transform to them 
for establishing the convergence of 
symmetric formal joint eigenfunctions $\cP_\lambda(x;p)$ $(\lambda\in \sP_+)$ 
of the elliptic Ruijsenaars operators around of the torus $\bT^n$.  
We also include a proof of Theorem \ref{thm:RAOr} on the orthogonality 
of elliptic deformations $\cP_\lambda(x;p)$ of Macdonald polynomials. 

\subsection{Ruijsenaars functions in the asymptotic domain 
$|x_1|\gg\cdots\gg|x_n|\gg|px_1|$}
In what follows, we fix the parameters $s=(s_1,\ldots,s_n)\in(\bC^\ast)^n$, assuming that 
they satisfy the genericity condition $s_j/s_i\notin q^{\bZ}$ ($1\le i<j\le n$).  
Keeping the notation of Subsection \ref{ssec:2.4}, 
we denote by 
$f(x;p)$ of the normalized formal solution of the eigenfunction equation
\begin{equation}
\cE_{x,s}(p;u)f(x;p)=\vep(p;u)f(x;p), \quad f(x;p)\in\bC\fps{px_1/x_n,x_2/x_1,\ldots,x_n/x_{n-1}}
\end{equation}
in the asymptotic domain $|x_1|\gg\cdots\gg|x_n|\gg|px_1|$. 
(We write $f(x;p)=f(x;s;p)$ and $\vep(p;u)=\vep(s;p;u)$ when we need 
to make the dependence on $s$ manifest.) 
As before, we expand the modified elliptic Ruijsenaars operator in the form
\begin{equation}
\cE_{x,s}(p;u)=\sum_{I\subseteq\pr{1,\ldots,n}}
(-u)^{|I|}s^{\ep_I}\cB_{I}(x;p)T_{q,x}^{\ep_I}=
\sum_{\beta\in\sQ^\aff_+}x^{-\beta}b_{\beta}(T_{q,x};u), 
\end{equation}
where $b_{\beta}(\xi;u)$ are the symbols of $q$-difference operators defined by 
\eqref{eq:defbfun}. 
Then the formal power series 
\begin{equation}
f(x;p)=\sum_{\mu\in\sQ^\aff_+}x^{-\mu}f_{\mu},\quad 
\vep(p;u)=\sum_{k=0}^{\infty} p^k\vep_k(u)
\end{equation}
satisfy the initial condition 
\begin{equation}
f_0=1,\quad \vep_0(u)=b_0(1;u)=\prod_{j=1}^{n}(1-us_i)
\end{equation}
and 
the recurrence formulas 
\begin{equation}\label{eq:rec4}
(b_0(q^{-\mu};u)-\vep_0(u))f_{\mu}
=-\sum_{\beta+\nu=\mu;\,\beta>0} 
b_\beta(q^{-\nu};u)f_\nu
+\sum_{k\delta+\nu=\mu;\,k>0}\vep_k(u)f_{\nu}
\quad(\mu\in \sQ^{\aff}_+\backslash\bZ\delta), 
\end{equation}
and 
\begin{equation}\label{eq:rec5}
f_{l\delta}=0,\quad \vep_l(u)=\sum_{\beta+\nu=l\delta;\,\beta>0} 
b_\beta(q^{-\nu};u)f_\nu\qquad(l=1,2,\ldots). 
\end{equation}
Our goal is to establish estimates for the coefficients $f_{\mu}$ and $\vep_k(u)$ 
satisfying these recurrence formulas, so as to guarantee the absolute convergence of 
$f(x;p)$ and $\vep(p;u)$. 
As far as the condition $s_j/s_i\notin q^{\bZ}$ ($1\le i<j\le n$) is satisfied, 
our estimates for $f_{\mu}$ will be uniform on every compact subset 
with respect to the parameters $(s;q,t)$. 

\par\medskip
Setting $U=\prm{q\in\bC^\ast}{\ |q|<1}$, we consider the following subset 
$D$ of  the space of parameters $(s;q)=(s_1,\ldots,s_n;q)\in(\bC^\ast)^n\times U$: 
\begin{equation}
D=\prm{ (s;q)\in(\bC^\ast)^n\times U}{\ 
1-q^l s_j/s_i\ne 0 \ \ \mbox{for any distinct $i,j\in\pr{1,\ldots,n}$ and 
$l\in\bZ$}}. 
\end{equation}
Note that $D$ is a locally finite intersection of 
complements of hypersurfaces, and hence an open subset of 
$(\bC^\ast)^n\times U$. 

\begin{lem}\label{lem:tauconst} For any compact subset $K\subseteq D$, there exist 
real constants $\tau_0\in(0,1]$ and $\tau_1>0$ such that
\begin{equation}\label{eq:tau01}
\tau_0\le |s_j/s_i|\le\tau_0^{-1},
\quad |1-q^l s_j/s_i|\ge \tau_1\quad(l\in\bZ_{\ge 0})
\end{equation}
for any distinct $i,j\in\pr{1,\ldots,n}$. 
\end{lem}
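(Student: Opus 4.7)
The plan is to prove the two assertions separately, combining compactness of $K$ with the open/closed conditions defining $D$.

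For the first inequality $\tau_0\le |s_j/s_i|\le \tau_0^{-1}$, I would simply observe that each function $(s;q)\mapsto |s_j/s_i|$ is continuous and strictly positive on $D$ (since $s_i\ne 0$ everywhere on $(\bC^\ast)^n\times U$), so on the compact set $K$ each such function attains a positive minimum and a finite maximum. Taking $\tau_0$ to be the minimum of these minima and of the reciprocals of these maxima over the finitely many distinct pairs $(i,j)$ gives the desired bound; the assumption $\tau_0\le 1$ can then be enforced by shrinking $\tau_0$ if necessary.

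For the second inequality $|1-q^l s_j/s_i|\ge\tau_1$ uniformly in $l\in\bZ_{\ge 0}$, the plan is to split into a ``tail'' regime and a ``head'' regime. Since $K$ is compact in $(\bC^\ast)^n\times U$, the continuous function $|q|$ attains its maximum on $K$ at some value $\rho<1$. Combined with the first step, this yields $|q^l s_j/s_i|\le \rho^l\tau_0^{-1}$ on $K$, which tends to $0$ as $l\to\infty$ uniformly. Thus one can pick $L\in\bN$ such that $|q^l s_j/s_i|\le 1/2$ whenever $l\ge L$, giving $|1-q^l s_j/s_i|\ge 1/2$ throughout $K$ for all such $l$ and all distinct $i,j$.

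For the head regime $0\le l<L$, I would invoke the definition of $D$: by construction $1-q^l s_j/s_i\ne 0$ on $D$ for every $l\in\bZ$ and every distinct pair $i,j$, so the continuous function $|1-q^l s_j/s_i|$ is strictly positive on the compact set $K$ and attains a positive minimum $\tau_{1,l,i,j}>0$. Taking $\tau_1$ to be the minimum of $1/2$ and of the finitely many constants $\tau_{1,l,i,j}$ for $0\le l<L$ and distinct $i,j\in\{1,\ldots,n\}$ gives the uniform lower bound. The only mild point is ensuring that the threshold $L$ can be chosen uniformly in $(i,j)$, which is automatic since there are only finitely many pairs and the bound $\rho^l\tau_0^{-1}\to 0$ is uniform on $K$; the argument is then essentially a packaging of compactness with the geometric decay of $q^l$.
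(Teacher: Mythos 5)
Your proof is correct and takes essentially the same approach as the paper: establish $\tau_0$ and a bound $|q|\le\rho<1$ by compactness, split the index $l$ into a tail regime where geometric decay of $|q^ls_j/s_i|$ gives a uniform lower bound away from $1$, and a head regime of finitely many $l$ handled by compactness and the defining condition of $D$. The only difference is cosmetic (you use the threshold $|q^ls_j/s_i|\le 1/2$, whereas the paper uses $\rho^{l_0}\le\tau_0$ to get $|q^ls_j/s_i|\le\rho$), so nothing further is needed.
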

\begin{proof}{Proof}
We first take $\tau_0\in(0,1]$ and $\rho\in(0,1)$ so that 
$\tau_0\le |s_j/s_i|\le \tau_0^{-1}$ $(i,j\in\pr{1,\ldots,n};\ i\ne j)$ 
and $|q|\le \rho$ for all $(s;q)\in K$. 
Let $l_0\in\bZ_{\ge0}$ be sufficient large so that $\rho^{l_0}\le \tau_0$. 
Then, for $l>l_0$, we have 
$|q^ls_j/s_i|\le \rho^{l_0+1}\tau_0^{-1}\le \rho$ 
hence $|1-q^{l}s_j/s_i|\ge 1-\rho>0$ on $K$, 
for any distinct $i,j$.  
The condition of Lemma is fulfilled if we let $\tau_1>0$ 
sufficiently small so that $\tau_1\le 1-\rho$ and 
$|1-q^{l}s_j/s_i|\ge \tau_1$ $(0\le l\le l_0)$ on $K$, 
for any distinct $i,j$. 
\end{proof}

From now on, we assume that the parameters $(s;q,t)$ belong 
to an arbitrary compact subset of $\in D\times\bC^\ast$, 
and fix two real constants $\tau_0\in (0,1]$ and $\tau_1>0$ 
as in Lemma \ref{lem:tauconst}.  

\par\medskip
As we already remarked in Subsection 2.2, 
the coefficients $\cB_I(x;p)$ of $\cE_{x,s}(p;u)$ are holomorphic 
functions in the variables $z=(z_0,z_1,\ldots,z_{n-1})$, defined by 
\begin{equation}
z_0=px_1/x_n,\ z_1=x_2/x_1,\ \ldots,\ z_{n-1}=x_n/x_{n-1}, 
\end{equation}
in the polydisc $|z_0|<1$, $|z_1|<1$, $\ldots$, $|z_{n-1}|<1$. 
Hence, their Taylor expansions 
\begin{equation}
\cB_I(x;p)=\sum_{k_0,k_1,\ldots,k_{n-1}\ge 0}z_0^{k_0}z_1^{k_1}\ldots z_{n-1}^{k_{n-1}}\,
b^{I}_{k_0,k_1,\ldots,k_{n-1}} \in \bC\fps{z_0,z_1,\ldots,z_{n-1}}
\end{equation}
satisfy the Cauchy estimate:  
For each $r\in(0,1)$, there exists a constant $M_r>0$ such that
\begin{equation}
|b^I_{k_0,k_1,\ldots,k_{n-1}}|\le \frac{M_r}{r^{k_0+k_1+\cdots+k_{n-1}}}
\qquad(I\subseteq\pr{1,\ldots,n}; k_0,k_1,\ldots,k_{n-1}\ge 0). 
\end{equation}
By the identification $\beta=k_0\alpha_0+k_1\alpha_1+\cdots+k_{n-1}\alpha_{n-1}$, 
the Cauchy estimate can be rewritten as 
\begin{equation}\label{eq:Cauchy0}
|b^I_{\beta}|\le \frac{M_r}{r^{\htaf(\beta)}}
\qquad(I\subseteq\pr{1,\ldots,n}; \beta\in \sQ^{\aff}_+),  
\end{equation}
where $\htaf(\beta)=k_0+k_1+\cdots+k_{n-1}$ denotes the {\em height} of $\beta\in \sQ^{\aff}_+$. 

This Cauchy estimate \eqref{eq:Cauchy0} implies the estimate 
\begin{equation}\label{eq:Cauchy1}
|b_{\beta}(\xi;u)|\le 
\frac{M_r}{r^{\htaf(\beta)}}
\sum_{I\subseteq\pr{1,\ldots,n}}|u||s^{\ep_I}||\xi^{\ep_I}|
=
\frac{M_r}{r^{\htaf(\beta)}}
\prod_{j=1}^{n}(1+|u||s_j||\xi_j|)
\end{equation}
for the coefficients 
$b_{\beta}(\xi;u)$ $(\beta\in \sQ^{\aff}_+)$ of \eqref{eq:defbfun} 
appearing in the recurrence relations. 
If $\nu=l_0\alpha_0+l_1\alpha_1+\cdots+l_{n-1}\alpha_{n-1}\in \sQ_+^{\aff}$,
we have
\begin{equation}
|b_{\beta}(q^{-\nu};u)|\le \frac{M_r}{r^{\htaf(\beta)}}
\prod_{j=1}^{n}(1+|u||s_j||q|^{-l_j+l_{j-1}}),
\end{equation}
with $l_n=l_0$ as before. 
With the notation $|x|_+=\max\pr{x,0}$, we obtain 
\begin{equation}
\begin{split}
\prod_{j=1}^{n}(1+|u||s_j||q|^{-l_j+l_{j-1}})
&\le
\prod_{j=1}^{n}(1+|u||s_j||q|^{-|l_j-l_{j-1}|_+})
\\
&\le |q|^{-\sum_{j=1}^{n}|l_j-l_{j-1}|_{+}}
\prod_{i=1}^{n}(1+|u||s_j|). 
\end{split}
\end{equation}
In view of this estimate, we define
\begin{equation}
\dd(\nu)=\sum_{j=1}^{n}|l_j-l_{j-1}|_{+}=
\sum_{1\le j\le n;\ l_j>l_{j-1}}(l_j-l_{j-1})\in\bZ_{\ge 0}. 
\end{equation}
Note also that $\sum_{j=1}^{n}|l_j-l_{j-1}|=2\dd(\nu)$,  since $\sum_{j=1}^{n}(l_j-l_{j-1})=0$. 
Then we have 
\begin{lem}\label{lem:A}
For any $\beta\in \sQ_{+}^{\aff}$ with $\beta>0$, 
\begin{equation}
|b_{\beta}(q^{-\nu};u)|\le 
\frac{M_r}{r^{\htaf(\beta)}|q|^{\dd(\nu)}}
\prod_{j=1}^{n}(1+|u||s_j|)\qquad(\nu\in \sQ_{+}^\aff).
\vspace{-12pt}
\end{equation}
\qed
\end{lem}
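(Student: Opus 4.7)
The plan is to chain together the three elementary estimates that were set up in the paragraph immediately preceding the statement. First I would specialize the Cauchy bound \eqref{eq:Cauchy1} at the symbol $\xi=q^{-\nu}$. Writing $\nu = l_0\alpha_0+l_1\alpha_1+\cdots+l_{n-1}\alpha_{n-1}$ with the cyclic convention $l_n=l_0$, one has $\br{\ep_j,\nu}=l_j-l_{j-1}$ (as noted above), so this specialization yields
\begin{equation*}
|b_{\beta}(q^{-\nu};u)|\le
\frac{M_r}{r^{\htaf(\beta)}}\prod_{j=1}^{n}\bigl(1+|u||s_j||q|^{-l_j+l_{j-1}}\bigr).
\end{equation*}

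Next, I would majorize each factor on the right using two trivial consequences of $|q|<1$. Since $-l_j+l_{j-1}\ge -|l_j-l_{j-1}|_+$ (checking the two cases $l_j\ge l_{j-1}$ and $l_j<l_{j-1}$ separately) and $|q|^{a}$ is decreasing in $a$,
\begin{equation*}
|q|^{-l_j+l_{j-1}}\le |q|^{-|l_j-l_{j-1}|_+}.
\end{equation*}
Then, since $|q|^{-|l_j-l_{j-1}|_+}\ge 1$, factoring this quantity out gives
\begin{equation*}
1+|u||s_j||q|^{-|l_j-l_{j-1}|_+}\le |q|^{-|l_j-l_{j-1}|_+}\bigl(1+|u||s_j|\bigr).
\end{equation*}

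Taking the product over $j=1,\ldots,n$ and invoking the identity $\sum_{j=1}^{n}|l_j-l_{j-1}|_+ = \dd(\nu)$ (the definition of $\dd(\nu)$) collects the $|q|$-exponents into the single factor $|q|^{-\dd(\nu)}$, producing the claimed bound. There is no genuine obstacle here: the lemma is essentially a repackaging of the inequalities already derived, and the only bookkeeping issue is the cyclic convention $l_n=l_0$ on the indices, which is needed so that $q$-shifts acting through $T_{q,x}^{\ep_I}$ are correctly matched against the exponents $\br{\ep_j,\nu}$ arising from the expansion of $\cB_I(x;p)$ in $(px_1/x_n,x_2/x_1,\ldots,x_n/x_{n-1})$.
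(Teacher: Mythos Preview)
Your proposal is correct and follows exactly the same approach as the paper: the lemma is simply a packaging of the chain of inequalities displayed in the paragraph immediately preceding it, and you have reproduced each step (specialize \eqref{eq:Cauchy1} at $\xi=q^{-\nu}$, bound $|q|^{-l_j+l_{j-1}}$ by $|q|^{-|l_j-l_{j-1}|_+}$, factor out this quantity using $|q|^{-|l_j-l_{j-1}|_+}\ge 1$, and sum the exponents to $\dd(\nu)$).
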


In estimating the leading factor in the recurrence relation \eqref{eq:rec4},
\begin{equation}
b_0(q^{-\mu};u)-\vep_0(u)=\prod_{j=1}^{n}(1-us_jq^{-k_{j}+k_{j-1}})-\prod_{j=1}^{n}(1-us_j) 
\end{equation}
for $\mu\in \sQ^{\aff}_{+}\backslash\bZ\delta$ with $\mu=\sum_{j=0}^{n-1}k_j\alpha_j$, 
we choose an index $i\in\pr{1,\ldots,n}$ such that $\br{\ep_i,\mu}=
k_i-k_{i-1}\ne 0$, 
and make the substitution $u=s_i^{-1}$ to obtain
\begin{equation}\label{eq:busp}
b_0(q^{-\mu};s_i^{-1})-\vep_0(s_i^{-1})
=b_0(q^{-\mu};s_i^{-1})=(1-q^{-k_i+k_{i-1}})\prod_{j\ne i}(1-q^{-k_j+k_{j-1}}s_j/s_i).
\end{equation}

\begin{lem}\label{lem:B}
With the real constants $\tau_0, \tau_1$ as in \eqref{eq:tau01}, 
we have 
\begin{equation}
|b_0(q^{-\mu};s_i^{-1})|\ge \tau|q|^{-d(\mu)},\quad 
\tau=|1-q|(\tau_0\tau_1)^{n-1}, 
\end{equation}
for any $\mu\in\sQ^\aff_+\backslash\bZ\delta$, and 
for any $i\in\pr{1,\ldots,n}$ such that $\br{\ep_i,\mu}\ne 0$. 
\end{lem}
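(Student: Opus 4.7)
The plan is to exploit the explicit factorization in~\eqref{eq:busp}, which after setting $u=s_i^{-1}$ cancels the $j=i$ contribution of $\vep_0(u)$ against the corresponding factor of $b_0(q^{-\mu};u)$, leaving
\begin{equation*}
b_0(q^{-\mu};s_i^{-1})=(1-q^{a_i})\prod_{j\ne i}(1-q^{a_j}s_j/s_i),\qquad a_j:=k_{j-1}-k_j,
\end{equation*}
where $\mu=\sum_{j=0}^{n-1}k_j\alpha_j$ and $a_j=-\br{\ep_j,\mu}$. Note $\sum_{j=1}^{n}a_j=0$ and, by hypothesis, $a_i\ne 0$. I would bound the $n$ factors one by one, splitting according to the sign of the exponent $a_j$, and then recombine the powers of $|q|$ using the constraint $\sum_j a_j=0$ together with the definition of $\dd(\mu)$.

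For the $n-1$ factors with $j\ne i$, the case split is: if $a_j\ge 0$, hypothesis~\eqref{eq:tau01} applied with $l=a_j$ yields $|1-q^{a_j}s_j/s_i|\ge\tau_1$; if $a_j<0$, factor out $q^{a_j}s_j/s_i$ to rewrite
\begin{equation*}
|1-q^{a_j}s_j/s_i|=|q|^{a_j}\,|s_j/s_i|\,|1-q^{-a_j}s_i/s_j|\ge |q|^{a_j}\,\tau_0\tau_1,
\end{equation*}
using $|s_j/s_i|\ge\tau_0$ from~\eqref{eq:tau01} and applying~\eqref{eq:tau01} again with $l=-a_j\in\bZ_{>0}$ and the roles of $i,j$ swapped. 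Since $\tau_0\le 1$ the two subcases can be unified as $|1-q^{a_j}s_j/s_i|\ge \tau_0\tau_1\,|q|^{\min(a_j,0)}$.

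For the remaining factor $j=i$ I use only the hypothesis $|q|<1$. If $a_i\ge 1$, then $|1-q^{a_i}|\ge 1-|q|^{a_i}\ge 1-|q|$; if $a_i\le -1$, factor out $q^{a_i}$ to get $|1-q^{a_i}|=|q|^{a_i}|1-q^{-a_i}|\ge (1-|q|)|q|^{a_i}$. In either case $|1-q^{a_i}|\ge (1-|q|)\,|q|^{\min(a_i,0)}$, and since $|1-q|\ge 1-|q|$ this is consistent with the constant $\tau=|1-q|(\tau_0\tau_1)^{n-1}$ in the statement (the proof in fact delivers the slightly stronger constant $(1-|q|)(\tau_0\tau_1)^{n-1}$).

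Multiplying the $n$ bounds, the constant part is at least $(1-|q|)(\tau_0\tau_1)^{n-1}$, and the accumulated $|q|$-exponent is
\begin{equation*}
\sum_{j=1}^{n}\min(a_j,0)=-\sum_{j=1}^{n}\max(-a_j,0)=-\sum_{j=1}^{n}|k_j-k_{j-1}|_{+}=-\dd(\mu),
\end{equation*}
by the definition of $\dd(\mu)$ together with $k_j-k_{j-1}=-a_j$. This gives the desired $|b_0(q^{-\mu};s_i^{-1})|\ge\tau\,|q|^{-\dd(\mu)}$. The only delicate point is the bookkeeping: one must verify that, regardless of which indices $j$ contribute negatively, the $|q|$-exponents from the $j\ne i$ factors and from the $j=i$ factor add up precisely to $-\dd(\mu)$, and that the $\tau_0\le 1$ bound lets one replace $\tau_1$ by $\tau_0\tau_1$ uniformly in the $a_j\ge 0$ cases. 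Once the case split and the sign conventions are set up, each individual estimate is elementary.
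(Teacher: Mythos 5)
Your argument is essentially the paper's own: factor $b_0(q^{-\mu};s_i^{-1})$ as in \eqref{eq:busp}, bound the $j=i$ factor and the $n-1$ factors $j\ne i$ separately with a sign split on the exponent, and assemble the $|q|$-powers via $\dd(\mu)$. The case split for $j\ne i$ matches the paper's exactly, and your bookkeeping $\sum_j\min(a_j,0)=-\dd(\mu)$ is correct.

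One remark, going in the opposite direction of what you wrote. For the $j=i$ factor you use the elementary bound $|1-q^{a_i}|\ge(1-|q|)|q|^{\min(a_i,0)}$, which is correct. The paper instead asserts $|1-q^{-l}|\ge|1-q|$ for $l<0$, which is \emph{false} for general complex $q$ with $|q|<1$ (e.g.\ $q=-0.9$, $l=-2$ gives $|1-q^2|=0.19<1.9=|1-q|$); the paper's argument therefore does not actually establish its own stated constant $\tau=|1-q|(\tau_0\tau_1)^{n-1}$. Your proof is the corrected one. However, you describe $(1-|q|)(\tau_0\tau_1)^{n-1}$ as the "slightly stronger constant" --- this is backwards: since $1-|q|\le|1-q|$ by the triangle inequality, $(1-|q|)(\tau_0\tau_1)^{n-1}$ is a \emph{smaller} constant, so it is the \emph{weaker} lower bound. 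It is nonetheless the constant that the computation actually delivers, and it is all that is needed for the convergence argument downstream; the stated $\tau$ with $|1-q|$ appears to be a minor slip in the paper.
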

\begin{proof}{Proof}
For $l\in\bZ$, we have 
\begin{equation}
|1-q^{-l}|\ge |1-q|\quad(l<0),\qquad
|1-q^{-l}|\ge |q|^{-l}|1-q| \quad(l>0),
\end{equation}
and hence 
\begin{equation}
|1-q^{-l}|\ge |q|^{-|l|_+}|1-q|\quad(l\in\bZ,\ l\ne 0).  
\end{equation}
Similarly, for any distinct $i,j\in\pr{1,\ldots,n}$ we have 
\begin{equation}
\begin{split}
&|1-q^{-l} s_j/s_i|\ge\tau_1\quad(l\le 0),\quad 
\\
&|1-q^{-l}s_j/s_i|=|q|^{-l}|s_j/s_i||1-q^ls_i/s_j|
\ge |q|^{-l}\tau_0\tau_1\quad(l\ge 0), 
\end{split}
\end{equation}
and hence 
\begin{equation}
|1-q^{-l}s_j/s_i|\ge |q|^{-|l|_+}\tau_0\tau_1\quad(l\in\bZ). 
\end{equation}
These inequalities imply 
\begin{equation}
\begin{split}
|b_0(q^{-\mu};s_i^{-1})|&\ge |q|^{-|k_i-k_{i-1}|_+}|1-q|
\prod_{j\ne i}\big(|q|^{-|k_j-k_{j-1}|}\tau_0\tau_1\big)
\\
&\ge |q|^{-\dd(\mu)}|1-q|(\tau_0\tau_1)^{n-1}
\end{split}
\end{equation}
for any $i\in\pr{1,\ldots,n}$ such that $\br{\ep_i,\mu}=k_i-k_{i-1}\ne 0$. 
\end{proof}
\par\medskip
We give a proof of convergence of $f(x;p)$ on the basis of Lemma \ref{lem:A} 
and Lemma \ref{lem:B}.  
In view of the recurrence relation, for each $\mu\in \sQ^{\aff}_+$ we set
\begin{equation}
g_{\mu}(u)=\sum_{\beta+\nu=\mu;\,\beta>0} 
b_\beta(q^{-\nu};u)f_\nu,
\end{equation}
so that $\vep_{l}(u)=g_{l\delta}(u)$ ($l=1,2,\ldots$)
and 
\begin{equation}
(b_0(q^{-\mu};u)-\vep_0(u))f_{\mu}
=-
g_{\mu}(u)
+\sum_{k\delta+\nu=\mu;\,k>0}g_{k\delta}(u)f_{\nu}. 
\quad 
\end{equation}
We apply Lemma \ref{lem:A} to the estimate of $g_{\mu}(s_i^{-1})$ ($i=1,\ldots,n$). 
By Lemma \ref{lem:A} we have
\begin{equation}
|b_\beta(q^{-\nu};s_i^{-1})|\le 
\frac{M_r}{r^{\htaf(\beta)}|q|^{\dd(\nu)}}
\prod_{j=1}^{n}(1+|s_j/s_i|)\quad(i=1,\ldots,n)
\end{equation}
for any $\beta>0$ and $\nu\in \sQ^{\aff}_+$.  
With the constant $\tau_0$ of \eqref{eq:tau01}, we have 
$\prod_{j=1}^{n}(1+|s_j/s_i|)\le 2(1+\tau_0^{-1})^{n-1}$,  
and hence
\begin{equation}
|b_\beta(q^{-\nu};s_i^{-1})|\le 
\frac{M_rL}{r^{\htaf(\beta)}|q|^{\dd(\nu)}}\quad(i=1,\ldots,n),
\quad L=2(1+\tau_0^{-1})^{n-1}. 
\end{equation}
We apply this estimate to $g_{\mu}(u)$ to obtain 
\begin{equation}
|g_{\mu}(s_i^{-1})|\le \sum_{\beta+\nu=\mu;\,\beta>0} 
\frac{M_rL}{r^{\htaf(\beta)}|q|^{\dd(\nu)}}|f_{\nu}|, 
\end{equation}
or equivalently, 
\begin{equation}\label{eq:recA}
r^{\htaf(\mu)}|g_{\mu}(s_i^{-1})|\le M_rL\sum_{\beta+\nu=\mu;\,\beta>0} 
\frac{r^{\htaf(\nu)}}{|q|^{\dd(\nu)}}|f_{\nu}|\quad(i=1,\ldots,n).
\end{equation}
Next we apply Lemmas \ref{lem:A} and \ref{lem:B} to the estimate of 
$f_{\mu}$. 
As to the recurrence relation
for $\mu>0$ with $\mu\notin\bZ_{>0}\delta$, 
we choose an
index $i\in\pr{1,\ldots,n}$ with $\br{\ep_i,\mu}\ne 0$,  
and specialize it by the substitution $u=s_i^{-1}$ to obtain 
\begin{equation}
b_0(q^{-\mu};s_i^{-1})f_{\mu}
=-g_{\mu}(s_i^{-1})
+\sum_{k\delta+\nu=\mu;\,k>0}g_{k\delta}(s_i^{-1})f_{\nu}.
\end{equation}
Then, by Lemma \ref{lem:A} and Lemma \ref{lem:B}, 
we have the estimate 
\begin{equation}
\tau |q|^{-\dd(\mu)}|f_{\mu}|
\le 
|g_{\mu}(s_i^{-1})|
+\sum_{k\delta+\nu=\mu;\,k>0}|g_{k\delta}(s_i^{-1})||f_{\nu}|,
\end{equation}
which we rewrite as 
\begin{equation}\label{eq:recB}
\tau \frac{r^{\htaf(\mu)}}{|q|^{\dd(\mu)}}|f_{\mu}|
\le 
r^{\htaf(\mu)}|g_{\mu}(s_i^{-1})|
+\sum_{k\delta+\nu=\mu;\,k>0}r^{\htaf(k\delta)}|g_{k\delta}(s_i^{-1})|r^{\htaf(\nu)}|f_{\nu}|. 
\end{equation}

\par\medskip
Having the two estimates \eqref{eq:recA} and \eqref{eq:recB}, we propose to 
define two families of positive constants $\ca_{\mu}$ $(\mu>0)$
and $\cb_{\mu}$\ \ ($\mu\ge 0$) (majorants)
such that 
\begin{equation}\label{eq:abcoversgf}
r^{\htaf(\mu)}|g_{\mu}(s_i^{-1})|\le \ca_{\mu}\quad(\mu>0;\ i=1,\ldots,n),
\qquad
\frac{r^{\htaf(\mu)}}{|q|^{\dd(\mu)}}|f_{\mu}|\le \cb_{\mu}\quad(\mu\ge 0).
\end{equation}
We specify these constants by the recurrence relations
\begin{equation}\label{eq:recab}
\ca_{\mu}=\cs \sum_{\substack{\beta+\nu=\mu\\ \beta>0}}\cb_{\nu},\qquad
\cb_{\mu}=\ct \sum_{\beta+\nu=\mu;\,\beta>0}\ca_{\beta}\cb_{\nu}\qquad(\mu>0)
\end{equation}
with the initial condition $\cb_0=1$ so that $|f_0|=\cb_0$, 
where $\cs=M_rL$ and $\ct=\tau^{-1}$. 
We recursively verify that these constants fulfill the estimates of \eqref{eq:abcoversgf}. 
In fact, by \eqref{eq:recA} we have
\begin{equation}
\begin{split}
r^{\htaf(\mu)}
|g_{\mu}(s_i^{-1})|&\le \cs \sum_{\beta+\nu=\mu;\,\beta>0} \cb_{\nu}=\ca_{\mu}.
\end{split}
\end{equation}
for $\mu>0$, 
and also, by \eqref{eq:recB} we obtain 
\begin{equation}
\begin{split}
\frac{r^{\htaf(\mu)}}{|q|^{\dd(\mu)}}|f_{\mu}|
&\le 
\ct\Big(
\ca_{\mu}
+\sum_{k\delta+\nu=\mu;\,k>0}\ca_{k\delta}|q|^{\dd(\nu)}\cb_{\nu}
\Big)
\le
\ct\sum_{\beta+\nu=\mu;\,\beta>0}\ca_{\beta}\cb_{\nu}=\cb_{\mu}
\end{split}
\end{equation}
for $\mu>0$ with $\mu\notin\bZ\delta$. 

\par\medskip
Finally we establish Cauchy 
estimates for $\ca_{\mu}$ $(\mu>0)$ and $\cb_{\mu}$ $(\mu\ge 0)$ 
by means of the generating functions
\begin{equation}
\ca(z)=\sum_{\mu>0}\ca_{\mu} z_0^{k_0}z_1^{k_1}\cdots z_{n-1}^{k_{n-1}},
\qquad
\cb(z)=\sum_{\mu\ge 0}\cb_{\mu} z_0^{k_0}z_1^{k_1}\cdots z_{n-1}^{k_{n-1}},
\end{equation}
in $n$ variables $z=(z_0,z_1,\ldots,z_{n-1})$, 
with the parametrization $\mu=\sum_{j=0}^{n-1}k_j\alpha_j\in \sQ^{\aff}_+$ as before.  
Note that these formal power series satisfy $\ca(0)=0$ and $\cb(0)=1$.  
The recurrence relations for $\ca_{\mu}$ and $\cb_{\mu}$ are expressed as 
the algebraic equations 
\begin{equation}\label{eq:algrel}
\ca(z)=\cs\, \cc(z)\cb(z),\quad \cb(z)-1=\ct\,\ca(z)\cb(z)
\end{equation}
for $\ca(z)$, $\cb(z)$, where
\begin{equation}
\cc(z)=\prod_{i=0}^{n-1}(1-z_i)^{-1}-1.  
\end{equation}
Eliminating $\cb(z)$ from \eqref{eq:algrel}, we obtain a quadratic equation 
\begin{equation}
\ct\,\ca(z)^2-\ca(z)+\cs\,\cc(z)=0
\end{equation}
for $\ca(z)$, and hence, by $\ca(0)=0$ we obtain
\begin{equation}
\ca(z)=\frac{1}{2\,\ct}\Big(1-\sqrt{1-4\,\cs \ct\,\cc(z)}\Big). 
\end{equation}
Similarly, eliminating $\ca(z)$ from \eqref{eq:algrel}, we obtain a quadratic equation 
\begin{equation}
\cs\ct\, \cc(z)\cb(z)^2-\cb(z)+1=0
\end{equation}
for $\cb(z)$, and hence, by $\cb(0)=1$ we obtain
\begin{equation}
\cb(z)=\frac{2}{1+\sqrt{1-4\,\cs \ct\,\cc(z)}}. 
\end{equation}
Note that $\ct=\tau^{-1}$ and $\cs\ct=M_rL\tau^{-1}$. 
Since $\cc(z)$ is a holomorphic function in the polydisc 
$\cU=\pr{\,|z_i|<1\ (i=0,\ldots,n-1)}$ with $\cc(0)=0$, 
the formal power series $\ca(z)$ and $\cb(z)$ in $z$ represent holomorphic functions  
on the domain $\prm{z\in \cU}{\,|\cc(z)|<1/4\,\cs\ct}$ containing 0.  
By the Cauchy estimate, 
this implies that there exist positive constants $\rho>0$ and $K>0$,
depending on $r$ and $\tau_0$, $\tau_1$, 
such that 
\begin{equation}
\ca_{\mu}\le \frac{K}{{\rho}^{\htaf(\mu)}},\quad
\cb_{\mu}\le \frac{K}{{\rho}^{\htaf(\mu)}}
\quad(\mu>0).  
\end{equation}

Returning to \eqref{eq:abcoversgf}, we obtain
\begin{equation}\label{eq:estimategf}
|g_{\mu}(s_i^{-1})|\le \frac{K}{(r \rho)^{\htaf(\mu)}},\quad
|f_{\mu}|\le |q|^{\dd(\mu)}
\frac{K}{(r \rho)^{\htaf(\mu)}}
\le \frac{K}{(r \rho)^{\htaf(\mu)}}
\end{equation}
for all $\mu>0$.  
This means that the formal power series 
\begin{equation}
f(x;p)=\sum_{\mu\in \sQ^{\aff}_+}x^{-\mu}f_{\mu}
=\sum_{\mu\in \sQ^{\aff}_+} z_0^{k_0}z_1^{k_1}\cdots z_{n-1}^{k_{n-1}}f_{\mu}
\end{equation}
in $z=(z_0,z_1,\ldots,z_{n-1})$ converges in the domain 
$|z_i|<\sigma=r\rho$ ($i=0,1,\ldots,n-1$).
Also, 
noting that $\vep_k(u)=g_{k\delta}(u)$ ($k=1,2,\ldots$), for each $i=1,\ldots,n$ 
we have 
\begin{equation}
|\vep_k(s_i^{-1})|\le \frac{K}{\sigma^{\htaf(k\delta)}}
=\frac{K}{(\sigma^{n})^k}\quad(k=1,2,\ldots). 
\end{equation}
Hence $\vep(p;s_i^{-1})=\sum_{k=0}^{\infty}p^k\vep_k(s_i^{-1})$ 
converges absolutely in the disc $|p|<\sigma^n$. 
Since  $\vep(p;s_i^{-1})=\sum_{j=0}^{n}(-s_i^{-1})^{j}\vep^{(j)}(p)$, we have
\begin{equation}
1-s_i^{-1}\vep^{(1)}(p)+s_i^{-2} \vep^{(2)}(p)+\cdots+(-s_i^{-1})^{n}\vep^{(n)}(p)=\vep(p;s_i^{-1})
\quad(i=1,\ldots,n). 
\end{equation}
Since $s_i\ne s_j$ ($1\le i<j\le n$),  
solving this system of linear equations for $\vep^{(i)}(p)$ ($j=1,\ldots,n$), 
we see that the formal eigenvalues 
$\vep^{(j)}(p)$ ($j=1,\ldots,n$) are also convergent in the same disc $|p|<\sigma^n$. 
This completes the proof of Theorem \ref{thm:RBCv}. 

\par\medskip
When we make the dependence 
on $s=(s_1,\ldots,s_n)$ and $q,t$ explicit, we write $f(x;s;p|q,t)$ for $f(x;p)$, 
under the condition that $s_j/s_i\notin q^{\bZ}$ $(1\le i<j\le n)$. 
We remark again that the convergence established above 
is uniform on every compact subsets 
with respect to $(s;q,t)\in D\times \bC^\ast$. 
Hence, $f(x;s;p|q,t)$ depends homomorphically on 
the parameters $(s;q,t)\in D\times\bC^\ast$. 

We refer to this normalized joint eigenfunction as 
the {\em $($stationary$)$ Ruijsenaars function} 
in the asymptotic domain $|x_1|\gg\cdots\gg|x_n|\gg|px_1|$. 

\begin{rem}\rm
Suppose that $s_i\ne s_j$ for any distinct $i,j\in\pr{1,\ldots,s}$.  
Then the assumption 
$s_j/s_i\notin q^{\bZ}$ ($1\le i<j\le n$) of Theorem \ref{thm:RBCv} is fulfilled 
automatically for sufficiently small $q$. 
Let $\tau_0\in(0,1]$ satisfy $\tau_0\le|s_j/s_i|\le\tau_0^{-1}$ 
$(1\le i<j\le n)$, and for any $\theta$ with $0<\theta<\tau_0$ set 
\begin{equation}
\tau_1=\min\pr{1-\theta/\tau_0, |1-s_j/s_i|\ \ (i,j\in\pr{1,\ldots,n};\ i\ne j)}. 
\end{equation}
Then, for any $q\in\bC^\ast$ with 
$|q|\le\theta$, we have $|1-q^{l}s_j/s_i|\ge\tau_1$ for all 
distinct $i,j\in\pr{1,\ldots,n}$ and $l\in\bZ_{\ge 0}$.  
Hence, the constants $K$ and $\rho$ in \eqref{eq:estimategf} 
do not depend on $q$ with $|q|\le\theta$.  
This implies that $f(x;s;p)=f(x;s;p|q,t)$ is also a {\em holomorphic} function  
on $q\in\bC$ with $|q|<\tau_0$ and $t\in\bC^\ast$, 
including $q=0$.  Also, we have 
$f(x;s;p|0,t)=1$ by \eqref{eq:estimategf}.  
\end{rem}

\subsection{Elliptic deformation of Macdonald polynomials} 

We now proceed to the problem of  
symmetric joint eigenfunctions around the torus $\bT^n$.  
We apply an idea of integral transforms, similar to that we already 
presented in Section \ref{sec:Prototype},  
to construct elliptic deformations of Macdonald polynomials from 
Ruijsenaars functions in the asymptotic domain $|y_1|\gg\cdots\gg|y_n|\gg|py_1|$. 

\par\medskip
Recall that the {\em Ruijsenaars elliptic gamma function} is defined by 
\begin{equation}
\Gamma(z;p,q)=\frac{(pq/z;p,q)_\infty}{(z;p,q)_\infty},\quad
(z;p,q)_\infty=\prod_{i,j=0}^{\infty}(1-p^iq^j z).  
\end{equation}
Note that $\Gamma(z;p,q)$ is a meromorphic function on $\bC^\ast$, and satisfies the 
functional equations
\begin{equation}
\Gamma(qz;p,q)=\theta(z;p)\Gamma(z;p,q),\ \ 
\Gamma(pz;p,q)=\theta(z;q)\Gamma(z;p,q),\ \ 
\Gamma(pq/z;p,q)=\Gamma(z;p,q)^{-1}.  
\end{equation}
In order to formulate an integral transform in the elliptic case, we use the following 
elliptic deformation of $K(x,y)$ and $w(y)$:
\begin{equation}
K(x,y;p)=\prod_{i=1}^{n}\prod_{j=1}^{n}\frac{\Gamma(x_i/y_j;p,q)}{\Gamma(tx_i/y_j;p,q)},
\quad
w(y;p)=\prod_{1\le i<j\le n}\theta(y_j/y_i;p)\frac{\Gamma(ty_j/y_i;p,q)}{\Gamma(qy_j/ty_i;p,q)}. 
\end{equation}
The weight function $w(y;p)$ is holomorphic in the domain 
\begin{equation}
|pt|<|y_j/y_i|<|t|^{-1}\qquad(1\le i<j\le n).
\end{equation}
Also, the kernel function $K(x,y;p)$ is holomorphic in the domain
\begin{equation}
|pq/t|<|x_i/y_j|<1\qquad(i,j=1,\ldots,n).  
\end{equation}

Supposing that $0<\theta\le\min\pr{1,1/|t|, |t/q|}$ and $|p|<\theta^n$, we define the domain $\cV_{\theta}\subset (\bC^\ast)^n$ by 
\begin{equation}
\cV_{\theta}=\prm{y=(y_1,\ldots,y_n)\in(\bC^\ast)^n\,}
{\ |y_2/y_1|<\theta,\ \ldots,\ |y_n/y_{n-1}|<\theta,\ 
|py_1/y_n|<\theta}.  
\end{equation}
For any $r>0$ and $\sigma>0$ such that $|p|\le\sigma^n<\theta^n$, we take the $n$-cycle 
\begin{equation}
C_{r,\sigma}=\prm{y=(y_1,\ldots,y_n)\in(\bC^\ast)^n}
{\ |y_i|=\sigma^{i-1}r\ \ (i=1,\ldots,n)}\subset(\bC^\ast)^n 
\end{equation}
as in the trigonometric case.    
Then $C_{r,\sigma}$ is contained in $\cV_{\theta}$.  
Also, since $|pt|\le\sigma^n|t|<\sigma^{n-1}$, $w(y;p)$ is holomorphic in an neighborhood 
of $C_{r,\sigma}$.  
On the other hand, if $y\in C_{r,\sigma}$, $K(x,y;p)$ is holomorphic in the domain
\begin{equation}
\cU_{r,\sigma}=\prm{ x=(x_1,\ldots,x_n)\in(\bC^\ast)^n}
{\ |pq/t|r<|x_i|<\sigma^{n-1}r\ \ (i=1,\ldots,n)}. 
\end{equation}
Since $|pq/t|<\sigma^{n-1}$, $\cU_{r,\sigma}$ is not empty. 
Note also that, if $x\in \cU_{r,\sigma}$, then $|pq/t|\sigma^{-n+1}<|x_j/x_i|<|t/pq|\sigma^{n-1}$. 
In view of this fact, we set
\begin{equation}
\cU_{\sigma}=\prm{x=(x_1,\ldots,x_n)\in(\bC^\ast)^n}
{\ |pq/t|\sigma^{-n+1}<|x_j/x_i|<|t/pq|\sigma^{n-1}
\ \ (1\le i<j\le n)},
\end{equation}
so that $\cU_{r,\sigma}\subseteq \cU_{\sigma}$ for any $r>0$. 
\begin{lem}\label{lem:IT-ell}
Let $\theta$ be a positive constant such that $\theta\le\min\pr{1,1/|t|,|t/q|}$ 
and suppose that $|p|<\theta^n$.
For each holomorphic function $\psi(y;p)$ in the domain $\cV_{\theta}$, 
consider the holomorphic function $\varphi(x;p)$ 
in the domain $\cU_{r,\sigma}$ defined by the integral transform
\begin{equation}
\varphi(x;p)=\int_{C_{r,\sigma}}K(x,y;p)w(y;p)\psi(y;p)d\omega_n(y),
\end{equation}
taking $r>0$ and $\sigma>0$ such that $\sigma<\theta$ and $|p|\le \sigma^n$. 
Then $\varphi(x;p)$ is continued to a symmetric holomorphic function in the domain $\cU_{\theta}$, 
which does not depend on the choice of $r$, $\sigma$. 
\end{lem}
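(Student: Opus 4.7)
\par\smallskip\noindent
\textbf{Proof plan.} The plan has three steps. Step (i): for each admissible $(r,\sigma)$, show that $\varphi(x;p)$ is $\frS_n$-symmetric in $x$ and holomorphic on $\cU_{r,\sigma}$. Step (ii): show that the value of $\varphi(x;p)$ is independent of the admissible choice of $(r,\sigma)$. Step (iii): show that $\cU_\theta$ is covered by the $\cU_{r,\sigma}$ for admissible $(r,\sigma)$. Steps (i) and (iii) are essentially bookkeeping; the substantive content is in (ii).

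For (i), the $x$-symmetry of $\varphi$ is inherited from the manifest symmetry of $K(x,y;p)=\prod_{i,j}\Gamma(x_i/y_j;p,q)/\Gamma(tx_i/y_j;p,q)$ in $x_1,\ldots,x_n$. For $y\in C_{r,\sigma}$ the singularities of $K(x,y;p)$ in $x$ lie on $x_i/y_j\in p^{-\bN}q^{-\bN}$ or $x_i/y_j\in t^{-1}p^{\bN+1}q^{\bN+1}$, both excluded by $|pq/t|\,r<|x_i|<\sigma^{n-1}r$, i.e.\ by $x\in\cU_{r,\sigma}$; together with the holomorphy of $w(y;p)\psi(y;p)$ near $C_{r,\sigma}$ (already recorded in the paper from $\sigma<\theta\le\min\pr{1,1/|t|,|t/q|}$ and $|pt|<\sigma^{n-1}$), the standard theorem on holomorphic parameter integrals yields (i).

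For (ii), I would fix $x$ and introduce
\begin{equation*}
\Omega_x=\prm{(r,\sigma)\in\bR_{>0}^2}{\sigma<\theta,\ |p|\le\sigma^n,\ x\in\cU_{r,\sigma}}.
\end{equation*}
Writing $M=\max_i|x_i|$ and $m=\min_i|x_i|$, the condition $x\in\cU_{r,\sigma}$ reads $M/\sigma^{n-1}<r<m|t/pq|$, so for each admissible $\sigma$ the $r$-interval is non-empty and $\Omega_x$ is open and connected. Along any continuous path in $\Omega_x$ the $n$-cycle $C_{r,\sigma}$ varies continuously, and the key check is that the inequalities defining $\Omega_x$ keep it clear of every pole of the meromorphic integrand in $y$, namely the loci $|y_j|\in\pr{|p^aq^bx_i|,|tx_i/p^{a+1}q^{b+1}|}$ coming from $K(x,y;p)$ and the loci $|y_j/y_i|\in\pr{|t^{-1}p^{-a}q^{-b}|,|tp^{a+1}q^b|}$ coming from $w(y;p)$, for $a,b\in\bN$. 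The multidimensional Cauchy theorem, applied one circle factor at a time, then gives the $(r,\sigma)$-independence of the integral.

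For (iii), given $x\in\cU_\theta$ one has $M/m<|t/pq|\theta^{n-1}$, so I can pick $\sigma<\theta$ with $\sigma^n\ge|p|$ and $\sigma^{n-1}>|pq/t|\,M/m$, and then any $r\in(M/\sigma^{n-1},\,m|t/pq|)$; this places $x$ in $\cU_{r,\sigma}$. Combined with (ii), the local definitions glue into a single symmetric holomorphic function $\varphi(x;p)$ on $\cU_\theta$. The real obstacle is step (ii): one has to verify simultaneously, for all six families of poles contributed by the elliptic gamma functions (four in $K$, two in $w$) and uniformly in admissible $(r,\sigma)$, that the moving cycle stays inside a single component of the complement of the pole divisor in $y$-space; this is exactly what the careful shape of $\cU_{r,\sigma}$, tuned to $|pq/t|$ and $|t|^{\pm1}$, is designed to ensure.
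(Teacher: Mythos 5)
Your proposal follows essentially the same route as the paper: establish symmetry and holomorphy on each $\cU_{r,\sigma}$ (step (i)), deform cycles via Cauchy's theorem (step (ii)), and show $\cU_\theta$ is covered (step (iii)). The one organizational difference is that the paper first proves $\cU_\sigma=\bigcup_{r>0}\cU_{r,\sigma}$ and varies $r$ at fixed $\sigma$, then varies $\sigma$, whereas you treat $(r,\sigma)$ jointly via the connected open set $\Omega_x$; your version is a little more explicit about what the cycle deformation needs, and is a perfectly acceptable variant. Two small points worth correcting: you should include, among the constraints the deformation must respect, that $C_{r,\sigma}$ stays inside $\cV_\theta$ (where $\psi$ is holomorphic) --- this is in fact automatic from $\sigma<\theta$ and $|p|\le\sigma^n$, but it belongs on the list alongside the $K$ and $w$ pole loci; and your tally ``six families of poles (four in $K$, two in $w$)'' overcounts, since for Cauchy's theorem only the poles of the integrand matter (two families from $K$, two from $w$, plus the boundary of $\cV_\theta$ for $\psi$), while the zeros of the elliptic gammas are harmless.
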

\begin{proof}{Proof}
We verify that
\begin{equation}
\cU_{\sigma}=\bigcup_{r>0}\ \cU_{r,\sigma}.  
\end{equation}
For $x\in\cU_{\sigma}$ given, take two indices $k,l\in\pr{1,\ldots,n}$ 
such that $|x_k|=\max\pr{|x_1|,\ldots,|x_n|}$ and $|x_l|=\min\pr{|x_1|,\ldots,|x_n|}$.  
Then we have $|pq/t|\sigma^{-n+1}<|x_l/x_k|\le 1$. 
Since $|x_k|/\sigma^{n-1}<|x_l|/|pq/t|$, we can find an $r>0$ such that 
$|x_k|/\sigma^{n-1}<r<|x_l|/|pq/t|$.  
Then we have  $|pq/t|r<|x_l|\le|x_k|<\sigma^{n-1}r$.  This means that 
$|pq/t|r<|x_i|<\sigma^{n-1}r$ $(i=1,\ldots,n)$, namely $x\in \cU_{r,\sigma}$.
Since the integral transforms defined in $\cU_{r,\sigma}$ 
do not depend on $r>0$, they define a holomorphic function $\varphi(x;p)$ 
in $\cU_\sigma$.  
Since $\sigma>0$ can be taken arbitrarily as far as $|p|^{\frac{1}{n}}\le \sigma<\theta$, 
$\varphi(x;p)$ is continued to $\cU_{\theta}$. 
\end{proof}

We remark that a natural elliptic counterpart of the symmetric weight function $w^{\mathrm{sym}}(y)$ of
\eqref{eq:wsym-trig} is given by
\begin{equation}
w^{\mathrm{sym}}(y;p)=\prod_{1\le i<j\le n}\frac{\Gamma(ty_i/y_j;p,q)\Gamma(ty_j/y_i;p,q)}
{\Gamma(y_i/y_j;p,q)\Gamma(y_j/y_i;p,q)}. 
\end{equation}
Similarly to the trigonometric case, the elliptic Ruijsenaars operator $\cD_{x}(p;u)$ is formally self-adjoint with respect to this weight function $w^{\mathrm{sym}}(y;p)$
in the following sense. 
(For the definition of the formal adjoint of a $q$-difference operator, 
see the explanation below formula \eqref{eq:DP=eP}.) 
\begin{lem}\label{lem:selfadjoint}
\begin{equation}
\cD_{y}(p;u)^\ast=w^{\mathrm{sym}}(y;p)\cD_{y^{-1}}(p;u)w^{\mathrm{sym}}(y;p)^{-1}. 
\end{equation}
\end{lem}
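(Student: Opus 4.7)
The plan is to prove the identity by expanding both sides as polynomials in the shift operators $T_{q,y}^{-\ep_I}$ $(I\subseteq\pr{1,\ldots,n})$ and matching coefficients, thereby reducing the operator identity to a family of scalar identities that can be verified directly from the definitions of $\cA_I(y;p)$ and $w^{\mathrm{sym}}(y;p)$. From the definition of the formal adjoint and the commutation $T^\mu_{q,y}\,a(y) = a(q^{\mu}y)\,T^\mu_{q,y}$, one obtains
\begin{equation*}
\cD_y(p;u)^\ast = \sum_{I\subseteq\pr{1,\ldots,n}} (-u)^{|I|}\, \cA_I(q^{-\ep_I}y;p)\, T_{q,y}^{-\ep_I}.
\end{equation*}
Since $T_{q,y_i^{-1}} = T_{q^{-1},y_i}$, the substitution $y\to y^{-1}$ yields $\cD_{y^{-1}}(p;u) = \sum_I (-u)^{|I|}\cA_I(y^{-1};p)\,T_{q,y}^{-\ep_I}$, and conjugation by $w^{\mathrm{sym}}(y;p)$ gives
\begin{equation*}
w^{\mathrm{sym}}(y;p)\,\cD_{y^{-1}}(p;u)\,w^{\mathrm{sym}}(y;p)^{-1}
= \sum_I (-u)^{|I|}\,\cA_I(y^{-1};p)\,\frac{w^{\mathrm{sym}}(y;p)}{w^{\mathrm{sym}}(q^{-\ep_I}y;p)}\,T_{q,y}^{-\ep_I}.
\end{equation*}
Thus the lemma reduces to the scalar identities
\begin{equation*}
\cA_I(q^{-\ep_I}y;p)\,w^{\mathrm{sym}}(q^{-\ep_I}y;p) = \cA_I(y^{-1};p)\,w^{\mathrm{sym}}(y;p)\qquad(I\subseteq\pr{1,\ldots,n}).
\end{equation*}

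To verify each of these, I would first compute $w^{\mathrm{sym}}(y;p)/w^{\mathrm{sym}}(q^{-\ep_I}y;p)$ using the functional equation $\Gamma(qz;p,q)=\theta(z;p)\Gamma(z;p,q)$. Pairs $(i,j)$ with both indices inside $I$ or both outside $I$ leave the arguments $y_i/y_j$ and $y_j/y_i$ invariant under $y\mapsto q^{-\ep_I}y$, so their $\Gamma$-factors cancel; only the mixed pairs produce nontrivial theta ratios. On the other hand, from the factorized form $\cA_I(y;p)=t^{\br{\rho,\ep_I}}\cB_I(y;p)$ given in \eqref{eq:defBC}, both $\cA_I(q^{-\ep_I}y;p)$ and $\cA_I(y^{-1};p)$ can be written explicitly as products of theta functions indexed by mixed pairs. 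Using the quasi-symmetry $\theta(z^{-1};p)=-z^{-1}\theta(z;p)$ to pass between $\theta(y_j/y_i;p)$ and $\theta(y_i/y_j;p)$ in $\cA_I(y^{-1};p)$, the theta ratios on the two sides should line up exactly, while the monomial prefactors of $t$ and $y$ collected from quasi-symmetry and from shifting should cancel.

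The main obstacle is the bookkeeping of these monomial prefactors: each application of $\theta(z^{-1};p)=-z^{-1}\theta(z;p)$ introduces a factor of $-z^{-1}$, and each $q$-shift inside a $\theta$ introduces another monomial through $\theta(qz;p)=-z^{-1}\theta(z;p)$, so one must track these systematically over all mixed pairs $(i,j)$ and verify cancellation. Conceptually, however, Lemma \ref{lem:selfadjoint} is the direct elliptic analogue of the trigonometric self-adjointness identity $D_y(u)^\ast = w^{\mathrm{sym}}(y)\,D_{y^{-1}}(u)\,w^{\mathrm{sym}}(y)^{-1}$ already used in \eqref{eq:DKw-trig}; both $w^{\mathrm{sym}}(y;p)$ and $\cA_I(y;p)$ arise from the same combinatorial pattern as in the trigonometric case, with each $1-z$ replaced by $\theta(z;p)$, so the cancellation proceeds by the same mechanism once the identity is organized pair-by-pair.
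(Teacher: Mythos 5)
Your proposal is correct, but it follows a genuinely different route from the paper. You reduce the operator identity to the family of scalar identities $\cA_I(q^{-\ep_I}y;p)\,w^{\mathrm{sym}}(q^{-\ep_I}y;p)=\cA_I(y^{-1};p)\,w^{\mathrm{sym}}(y;p)$ by matching coefficients of $T_{q,y}^{-\ep_I}$ on both sides, and then verify each one directly from the functional equation $\Gamma(qz;p,q)=\theta(z;p)\Gamma(z;p,q)$. The paper instead factors $w^{\mathrm{sym}}(y;p)=w^{\mathrm{sym}}_{+}(y;p)\,w^{\mathrm{sym}}_{+}(y^{-1};p)$, realizes each coefficient as a conjugation $\cA_I(y;p)=t^{\binom{r}{2}}\bigl(\sigma_I.w^{\mathrm{sym}}_{+}(y;p)\bigr)^{-1}T_{q,y}^{\ep_I}\,\sigma_I.w^{\mathrm{sym}}_{+}(y;p)$, and then reads off the adjoint structurally, never computing the coefficient ratios at all. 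Your route is more elementary but requires the coefficient-level bookkeeping, while the paper's "half-weight" conjugation makes the self-adjointness essentially formal once the factorization of $w^{\mathrm{sym}}$ is in place. One remark: the bookkeeping you worry about is actually lighter than you expect. Writing $w^{\mathrm{sym}}(y;p)=\prod_{i\ne j}\Gamma(ty_i/y_j;p,q)/\Gamma(y_i/y_j;p,q)$ over ordered pairs, the functional equation gives $w^{\mathrm{sym}}(y;p)/w^{\mathrm{sym}}(q^{-\ep_I}y;p)=\prod_{i\in I,\,j\notin I}\tfrac{\theta(q^{-1}ty_i/y_j;p)}{\theta(q^{-1}y_i/y_j;p)}\tfrac{\theta(y_j/y_i;p)}{\theta(ty_j/y_i;p)}$, and the second factor cancels $\cA_I(y^{-1};p)=t^{\binom{r}{2}}\prod_{i\in I,\,j\notin I}\tfrac{\theta(ty_j/y_i;p)}{\theta(y_j/y_i;p)}$ outright, leaving exactly $\cA_I(q^{-\ep_I}y;p)$; no appeal to the quasi-symmetry $\theta(z^{-1};p)=-z^{-1}\theta(z;p)$ or to monomial-prefactor tracking is actually needed.
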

\begin{proof}{Proof}
The elliptic Ruijsenaars operator of $r$th order is expressed as 
\begin{equation}
\cD^{(r)}_y(p)=\sum_{\substack{I\subseteq\pr{1,\ldots,n}\\ |I|=r|}}
\cA_I(y;p)T_{q,y}^{\ep_I},
\quad
\cA_I(y;p)=t^{\binom{r}{2}}\prod_{i\in I;\,j\notin I}\frac{\theta(ty_i/y_j;p)}{\theta(y_i/y_j;p)}. 
\end{equation}
We decompose $w^{\mathrm{sym}}(y;p)$ as
\begin{equation}
w^{\mathrm{sym}}(y;p)=w^{\mathrm{sym}}_{+}(y;p)w^{\mathrm{sym}}_{+}(y^{-1};p),
\quad
w^{\mathrm{sym}}_{+}(y;p)=\prod_{1\le i<j\le n}\frac{\Gamma(ty_i/y_j;p,q)}{\Gamma(y_i/y_j;p,q))}, 
\end{equation}
and look at $I=\pr{1,\ldots,r}$: 
\begin{equation}\label{eq:topcoeff}
\frac{T_{q,y}^{\ep_{\pr{1,\ldots,r}}}(w^{\mathrm{sym}}_{+}(y;p))}{w^{\mathrm{sym}}_{+}(y;p)}
=\prod_{i=1}^{r}\prod_{j=r+1}^{n}
\frac{\theta(ty_i/y_j;p)}{\theta(y_i/y_j;p)}
=\prod_{i\in\pr{1,\ldots,r},\,j\notin\pr{1,\ldots,r}}
\frac{\theta(ty_i/y_j;p)}{\theta(y_i/y_j;p)}.  
\end{equation}
For each $I\subseteq\pr{1,\ldots,n}$ with $|I|=r$, choose a permutation 
$\sigma_I\in\frS_n$ 
such that $\sigma_I(\pr{1,\ldots,r})=I$.   
Then, applying $\sigma_I$ to \eqref{eq:topcoeff} we obtain
\begin{equation}
\frac{T_{q,y}^{\ep_{I}}(\sigma_Iw^{\mathrm{sym}}_{+}(y;p))}
{\sigma_Iw^{\mathrm{sym}}_{+}(y;p)}
=\prod_{i\in I,\,j\notin I}
\frac{\theta(ty_i/y_j;p)}{\theta(y_i/y_j;p)}, 
\end{equation}
namely,
\begin{equation}
\cA_I(y;p)=t^{\binom{r}{2}}
\frac{T_{q,y}^{\ep_{I}}(\sigma_I.w^{\mathrm{sym}}_{+}(y;p))}
{\sigma_I.w^{\mathrm{sym}}_{+}(y;p)}
=t^{\binom{r}{2}}\frac{1}{\sigma_I.w^{\mathrm{sym}}_{+}(y;p)}\,T_{q,y}^{\ep_I} \,\sigma_I.w^{\mathrm{sym}}_{+}(y;p).  
\end{equation}
Hence $\cD^{(r)}_{y}(p)$ and its formal adjoint $\cD^{(r)}_{y}(p)^{\ast}$ 
are expressed as 
\begin{equation}
\cD_y^{(r)}(p)
=t^{\binom{r}{2}}
\sum_{\substack{I\subseteq\pr{1,\ldots,n}\\ |I|=r}}
\frac{1}{\sigma_I.w^{\mathrm{sym}}_{+}(y;p)}\,
T_{q,y}^{\ep_I} \,\sigma_I.w^{\mathrm{sym}}_{+}(y;p), 
\end{equation}
and
\begin{equation}
\cD_y^{(r)}(p)^{\ast}
=t^{\binom{r}{2}}
\sum_{\substack{I\subseteq\pr{1,\ldots,n}\\ |I|=r}}
\sigma_I.w^{\mathrm{sym}}_{+}(y;p)\,T_{q,y}^{-\ep_I} \,
\frac{1}{\sigma_I.w^{\mathrm{sym}}_{+}(y;p)}. 
\end{equation}
Note that $\frS_n$-invariance of  $w^{\mathrm{sym}}(y;p)$ implies
\begin{equation}
w^{\mathrm{sym}}(y;p)=\sigma_I.w^{\mathrm{sym}}_{+}(y;p)\,\sigma_I.w^{\mathrm{sym}}_+(y^{-1};p).  
\end{equation}
Hence we have
\begin{equation}
\begin{split}
&w^{\mathrm{sym}}(y;p)^{-1}
\cD_y^{(r)}(p)^{\ast}\,w^{\mathrm{sym}}(y;p)
\\
&=t^{\binom{r}{2}}
\sum_{\substack{I\subseteq\pr{1,\ldots,n}\\ |I|=r}}
\frac{\sigma_I.w^{\mathrm{sym}}_{+}(y;p)}
{w^{\mathrm{sym}}(y;p)}
\,T_{q,y}^{-\ep_I}
 \,
\frac{w^{\mathrm{sym}}(y;p)}
{\sigma_I.w^{\mathrm{sym}}_{+}(y;p)} 
\\
&=t^{\binom{r}{2}}
\sum_{\substack{I\subseteq\pr{1,\ldots,n}\\ |I|=r}}
\frac{1}{\sigma_I. w^{\mathrm{sym}}_{+}(y^{-1};p)}
T_{q,y}^{-\ep_I} \,\sigma_I.w^{\mathrm{sym}}_{+}(y^{-1};p)
=\cD^{(r)}_{y^{-1}}(p), 
\end{split}
\end{equation}
as desired. 
\end{proof}

Also, by a result of \cite{R2009b} and \cite{KNS2009}, 
the kernel function $K(x,y;p)$ satisfies the kernel function identity
\begin{equation}
\cD_{x}(p;u)K(x,y;p)=\cD_{y^{-1}}(p;u)K(x,y; p).  
\end{equation}
Hence we have 
\begin{equation}
\begin{split}
\cD_{x}(p;u)K(x,y;p)w^{\mathrm{sym}}(y;p)
&= w^{\mathrm{sym}}(y;p)\cD_{y^{-1}}(p;u)K(x, y; p)
\\
&=\cD_{y}(p;u)^{\ast}(K(x, y; p)w^{\mathrm{sym}}(y;p)).
\end{split}
\end{equation}
We rewrite the symmetric weight function $w^{\mathrm{sym}}(y;p)$ as
\begin{equation}
\begin{split}
w^{\mathrm{sym}}(y;p)
&=\prod_{1\le i<j\le n}
\frac
{\Gamma(pty_i/y_j;p,q)\Gamma(ty_j/y_i;p,q)}
{\Gamma(py_i/y_j;p,q)\Gamma(y_j/y_i;p,q)}
\frac{\theta(y_i/y_j;q)}{\theta(ty_i/y_j;q)}
\\
&=
\prod_{1\le i<j\le n}
\frac{\Gamma(qy_j/y_i;p,q)\Gamma(ty_j/y_i;p,q)}
{\Gamma(qy_j/ty_i;p,q)\Gamma(y_j/y_i;p,q)}
\cdot
\prod_{1\le i<j\le n}
\frac{\theta(y_i/y_j;q)}{\theta(ty_i/y_j;q)}
\\
&=
\prod_{1\le i<j\le n}
\theta(y_j/y_i;p)
\frac{\Gamma(ty_j/y_i;p,q)}{\Gamma(qy_j/ty_i;p,q)}
\cdot
\prod_{1\le i<j\le n}
\frac{\theta(y_i/y_j;q)}{\theta(ty_i/y_j;q)}
\end{split}
\end{equation}
to obtain
\begin{equation}
w^{\mathrm{sym}}(y;p)=w(y;p)g(y),\quad g(y)=\prod_{1\le i<j\le n}
\frac{\theta(y_i/y_j;q)}{\theta(ty_i/y_j;q)}, 
\end{equation}
with the same function $g(y)$ as in the trigonometric case.  
Then, by the same argument as in Subsection \ref{ssec:3.2}, we obtain 
the kernel function identity
\begin{equation}
\cD_{x}(p;u)K(x,y;p)w(y;p) = \cE_{y,t^{\rho^\vee}}(p;u)^{\ast}(K(x, y; p)w(y; p))
\end{equation}
in the elliptic case, where $t^{\rho^\vee}=(1,t,\ldots,t^{n-1})$ stands for the reversal 
of $t^{\rho}=(t^{n-1},t^{n-2},\ldots,1)$. 

\begin{thm}\label{thm:ellMacIT}
Suppose that $t^k\notin q^{\bZ}$ $(k=1,\ldots,n-1)$.   
Let $\lambda$ be a partition with $l(\lambda)\le n$, namely $\lambda\in \sP_+$, 
$\br{\ep_n,\lambda}\ge 0$, and set $s=t^{\rho}q^{\lambda}$.  
Let $\theta$ be a positive constant such that 
$\theta\le \min\pr{1,1/|t|,|t/q|}$ and that the Ruijsenaars function 
$f(y;s^\vee;p)$ is holomorphic in the domain 
\begin{equation}
\cV_{\theta}:\quad |y_2/y_1|<\theta,\ \ \ldots,\ \ |y_n/y_{n-1}|<\theta,\ \ |py_1/y_n|<\theta
\end{equation}
for $|p|<\theta^n$, as well as the eigenvalue $\vep(s^\vee;p;u)$. 
Then the integral transform 
\begin{equation}
\varphi_\lambda(x;p)=\int_{C_{r,\sigma}} K(x,y;p)w(y;p)\,y^{\lambda^\vee}\!f(y;s^\vee;p)d\omega_n(y)
\end{equation}
is continued to a symmetric holomorphic function in the domain
\begin{equation}
\cU_{\theta}:\quad
|pq/t|\theta^{-n+1}<|x_j/x_i|<|t/pq|\theta^{n-1}\qquad(1\le i<j\le n). 
\end{equation}
If $|p|<|q|^n\theta^n$, then $\varphi_\lambda(x;p)$ 
satisfies the joint eigenfunction equation
\begin{equation}\label{eq:EFEq-eqn}
\cD_{x}(p;u)\varphi_\lambda(x;p)=\vep(s;p;u)\varphi_\lambda(x;p)
\end{equation}
in $\cU_{|q|\theta}\subset \cU_{\theta}$ 
with the initial condition $\varphi_\lambda(x;0)=b_\lambda P_\lambda(x)$.  
Furthermore, as power series in $p$ we have 
\begin{equation}
\varphi_\lambda(x;p)=b_{\lambda}(p)\cP_{\lambda}(x;p)
\end{equation}
with a holomorphic function $b_\lambda(p)$ in $|p|<\theta^n$. 
\end{thm}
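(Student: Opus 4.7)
The plan is to adapt the trigonometric integral-transform argument of Theorem \ref{thm:MacIT} to the elliptic setting, with three main steps plus one comparison step to pin down the relation to $\cP_\lambda(x;p)$. First, since $s=t^\rho q^\lambda$ satisfies $s_j/s_i=t^{i-j}q^{\lambda_j-\lambda_i}\notin q^{\bZ}$ under the hypothesis $t^k\notin q^{\bZ}$, Theorem \ref{thm:RBCv} applies and the Ruijsenaars function $f(y;s^\vee;p)$ (hence $y^{\lambda^\vee}\!f(y;s^\vee;p)$) is holomorphic in $\cV_\theta$ for $|p|<\theta^n$. By Lemma \ref{lem:IT-ell}, the integral transform $\varphi_\lambda(x;p)$ defines a symmetric holomorphic function on $\cU_\theta$ independent of the choice of $r,\sigma$.

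Second, I would verify the eigenvalue equation in the smaller domain $\cU_{|q|\theta}$. Combining Lemma \ref{lem:selfadjoint} with the elliptic kernel identity $\cD_x(p;u)K(x,y;p)=\cD_{y^{-1}}(p;u)K(x,y;p)$ and the decomposition $w^{\mathrm{sym}}(y;p)=w(y;p)g(y)$ yields, exactly as in Subsection \ref{ssec:3.2},
\[
\cD_x(p;u)\,K(x,y;p)w(y;p)=\cE_{y,t^{\rho^{\vee}}}(p;u)^{\ast}\bigl(K(x,y;p)w(y;p)\bigr).
\]
Transferring $\cE_{y,t^{\rho^{\vee}}}(p;u)^{\ast}$ to the other factor $y^{\lambda^\vee}\!f(y;s^\vee;p)$ by ``integration by parts'' along $C_{r,\sigma}$ is legitimate via Cauchy's theorem provided the $q$-shifts of $C_{r,\sigma}$ still lie within the common domain of holomorphy of the integrand; this is precisely where the shrinkage to $\cU_{|q|\theta}$ and the bound $|p|<|q|^n\theta^n$ enter. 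The conjugation $\cE_{y,t^{\rho^\vee}}(p;u)(y^{\lambda^\vee}\varphi(y))=y^{\lambda^\vee}\cE_{y,t^{\rho^\vee}q^{\lambda^\vee}}(p;u)\varphi(y)$ together with $s^\vee=t^{\rho^\vee}\!q^{\lambda^\vee}$ produces the eigenvalue $\vep(s^\vee;p;u)$; this coincides with the stated $\vep(s;p;u)$ because both must equal the intrinsic eigenvalue $\sum_r(-u)^r\vep^{(r)}_\lambda(p)$ of Theorem \ref{thm:RAFm} attached to the dominant weight $\lambda$.

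Third, setting $p=0$ reduces the integrand to its trigonometric counterpart, and Theorem \ref{thm:MacIT} gives $\varphi_\lambda(x;0)=b_\lambda P_\lambda(x)$. Expanding the $p$-analytic integrand (uniformly in $y\in C_{r,\sigma}$) and integrating term by term yields $\varphi_\lambda(x;p)=\sum_{k\ge 0}p^k\varphi_{\lambda,k}(x)$ with symmetric holomorphic coefficients. An induction on $k$ using the eigenvalue equation together with the triangularity property of the elliptic Ruijsenaars operators (Proposition \ref{prop:RAOp}) will show each $\varphi_{\lambda,k}(x)\in \bC[x^{\pm1}]^{\frS_n}$, so that $\varphi_\lambda(x;p)\in\bC[x^{\pm1}]^{\frS_n}\fps{p}$ is a formal joint eigenfunction with initial value $b_\lambda P_\lambda(x)$. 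The uniqueness part of Theorem \ref{thm:RAFm} then forces $\varphi_\lambda(x;p)=b_\lambda(p)\cP_\lambda(x;p)$ for a unique $b_\lambda(p)\in\bC\fps{p}$ with $b_\lambda(0)=b_\lambda$; extracting $b_\lambda(p)$ as (for instance) the coefficient of $m_\lambda(x)$ in $\varphi_\lambda(x;p)$, given that the corresponding coefficient in the normalized $\cP_\lambda(x;p)$ is $1$, exhibits it as a holomorphic function of $p$ for $|p|<\theta^n$.

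The main obstacle is the justification of the integration by parts in the second step: the $q$-shifts of $C_{r,\sigma}$ under $T_{q,y}^{\pm \ep_I}$ must remain inside the common holomorphy domain of $K(x,y;p)$, $w(y;p)$, and $f(y;s^\vee;p)$, and this is what forces the precise interplay between $\theta$, $|p|$, and the localization $x\in\cU_{|q|\theta}$. A secondary technical issue is the Laurent-polynomiality of each $\varphi_{\lambda,k}(x)$, which is needed in order to invoke Theorem \ref{thm:RAFm}; this is where the triangularity estimates of Proposition \ref{prop:RAOp}, applied inductively along the $p$-expansion, play the crucial role.
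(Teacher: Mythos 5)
Your proposal is correct and follows essentially the same route as the paper's proof: Lemma \ref{lem:IT-ell} for holomorphy on $\cU_\theta$; the self-adjointness of Lemma \ref{lem:selfadjoint}, the elliptic kernel identity, and the factorization $w^{\mathrm{sym}}(y;p)=w(y;p)g(y)$ combined with contour deformation (Cauchy's theorem) to obtain the eigenvalue equation in $\cU_{|q|\theta}$ for $|p|<|q|^n\theta^n$; Theorem \ref{thm:MacIT} at $p=0$ to identify $\varphi_\lambda(x;0)=b_\lambda P_\lambda(x)$; induction on the $p$-order using the recurrence of Subsection \ref{ssec:2.3} to obtain Laurent polynomiality of the coefficients; and uniqueness from Theorem \ref{thm:RAFm} to extract $b_\lambda(p)$. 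The only small points worth flagging are that the paper also explicitly records that the poles of $\cB_I(y;p)$ are cancelled by the factor $\prod_{i<j}\theta(y_j/y_i;p)$ in $w(y;p)$ (needed for the contour move), and that the identification $\vep(s^\vee;p;u)=\vep(s;p;u)$ is logically concluded only after $\varphi_\lambda(x;p)=b_\lambda(p)\cP_\lambda(x;p)$ has been established, rather than before --- but these do not alter the substance of the argument.
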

\begin{proof}{Proof}
The first half of Theorem \ref{thm:ellMacIT} is a consequence of Lemma \ref{lem:IT-ell}.  
Supposing that $|p|<|q|^n\theta^n$, 
we show that $\varphi_\lambda(x;p)$ satisfies the joint eigenfunction equation 
\eqref{eq:EFEq-eqn} in the domain $\cU_{|q|\theta}$.   
Let $\sigma$ be a positive constant such that $|p|^{\frac{1}{n}}\le \sigma<|q|\theta$, 
and suppose that $x=(x_1,\ldots,x_n)\in(\bC^\ast)^n$ satisfies 
\begin{equation}
|p/t|\sigma^{-n+1}<|x_j/x_i|<|t/p|\sigma^{n-1}\quad(1\le i<j\le n). 
\end{equation}
Take two indices $k,l\in\pr{1,\ldots,n}$ such that $|x_l|\le |x_i|\le |x_k|$ for $i=1,\ldots,n$. 
Then we have $|p/t|\sigma^{-n+1}<|x_l/x_k|\le 1$.  
Since $\sigma^{-n+1}|x_k|<|t/p||x_l|$, we can take $r>0$ such that 
$\sigma^{-n+1}|x_k|<r<|t/p||x_l|$ so that $|p/t|r<|x_l|\le|x_k|<\sigma^{n-1}r$, namely
$|p/t|r<|x_i|<\sigma^{n-1}r$ $(i=1,\ldots,n)$.   
Then, for $y\in C_{r,\sigma}$ we have
$|p/t||y_j|<|x_i|<|y_j|$, namely 
\begin{equation}
|pq/t|<|qx_i/y_j|<|x_i/y_j|<1\qquad(i,j=1,\ldots,n). 
\end{equation}
Then we can represent $\cD_x(p;u)\varphi_\lambda(x;p)$ as an integral over the 
cycle $C_{r,\sigma}$: 
\begin{equation}
\begin{split}
\cD_x(p;u)\varphi_\lambda(x;p)
&=
\int_{C_{r,\sigma}}\cD_x(p;u)K(x,y;p)w(y;p)\,y^{\lambda^\vee}\!f(y;s^{\vee};p)d\omega_n(y)
\\
&=
\int_{C_{r,\sigma}}(\cE_{y,t^{\rho^\vee}})^\ast\big(K(x,y;p)w(y;p)\big)
\,y^{\lambda^\vee}\!f(y;s^{\vee};p)d\omega_n(y). 
\end{split}
\end{equation}
Recalling that 
\begin{equation}
\cE_{y,t^{\rho^\vee}}(p;u)=\sum_{I\subseteq\pr{1,\ldots,n}}
(-u)^{|I|}t^{\br{\ep_I,\rho^\vee}}\cB_{I}(y;p)T_{q,y}^{\ep_I}, 
\end{equation}
for each $I\subseteq\pr{1,\ldots,n}$, 
we compute the integral 
\begin{equation}
\begin{split}
&
\int_{C_{r,\sigma}}T_{q,y}^{-\ep_I}\left(\cB_{I}(y;p)K(x,y;p)w(y;p)\right)\,
y^{\lambda^\vee}\!f(y;s^\vee;p)d\omega_n(y)
\\
&=
\int_{C_{r,\sigma}}\cB_{I}(q^{-\ep_I}y;p)K(x,q^{-\ep_I}y;p)w(q^{-\ep_I}y;p)\,
y^{\lambda^\vee}\!f(y;s^\vee;p)d\omega_n(y)
\\
&=
\int_{q^{-\ep_I}C_{r,\sigma}}\cB_{I}(y;p)K(x,y;p)w(y;p)\,
(q^{\ep_I}y)^{\lambda^\vee}\!f(q^{\ep_I}y,s^\vee;p)d\omega_n(y). 
\end{split}
\end{equation}
Since the two cycles $q^{-\ep_I}C_{r,\sigma}$ and $C_{r,\sigma}$ are 
homologous in the domain of holomorphy of the integrand, 
by the Cauchy theorem we see that this integral equals to 
\begin{equation}
\begin{split}
&
\int_{C_{r,\sigma}}K(x,y;p)w(y;p)\,
\cB_{I}(y;p)(q^{\ep_I}y)^{\lambda^\vee}\!f(q^{\ep_I}y,s^\vee;p)d\omega_n(y)
\\
&=
\int_{C_{r,\sigma}}K(x,y;p)w(y;p)\,
\cB_{I}(y;p)T_{q,y}^{\ep_I}
\big(y^{\lambda^\vee}\! f(y;s^\vee;p)\big)d\omega_n(y). 
\end{split}
\end{equation}
Note also that the poles $\cB_I(y;p)$ are cancelled by the 
factor $\prod_{1\le i<j\le n}\theta(y_j/y_i;p)$ in $w(y;p)$.
Hence we have 
\begin{equation}
\begin{split}
\cD_x(x;p)\varphi_\lambda(x;p)
&=
\int_{C_{r,\sigma}} K(x,y;p)w(y;p)\cE_{y,t^{\rho^\vee}}(p;u)
\big(y^{\lambda^\vee}\!f(y;s^\vee;p)\big)d\omega_n(y)
\\
&=
\int_{C_{r,\sigma}} K(x,y;p)w(y;p)\,(y^{\lambda^\vee}
\cE_{y,t^{\rho^\vee}q^{\lambda^\vee}}(p;u)\big(f(y;s^\vee;p)\big)d\omega_n(y)
\\
&=
\int_{C_{r,\sigma}} K(x,y;p)w(y;p)\,y^{\lambda^\vee}
\vep(s^\vee;p;u)f(y;s^\vee;p)d\omega_n(y)
\\
&=
\vep(s^\vee;p;u)\varphi_{\lambda}(x;p). 
\end{split}
\end{equation}
This means that 
\begin{equation}
\cD_x(p;u)\varphi_\lambda(x;p)=\vep(s^\vee;p;u)\varphi_\lambda(x;p),
\quad s=t^{\rho}q^{\lambda}. 
\end{equation}
Also, by comparing the integral transform with the one in the trigonometric case, 
from Theorem \ref{thm:MacIT} we obtain 
\begin{equation}
\varphi_{\lambda}(x;0)=b_\lambda P_{\lambda}(x),\quad \vep(s^\vee;0;u)=\vep(s;u)=\prod_{i=1}^{n}(1-us_i).  
\end{equation}
Note also that 
\begin{equation}
K(x,y;p)=K(x,y)
\prod_{i=1}^{n}\prod_{j=1}^{n}
\frac{(ptx_i/y_j;p,q)_\infty(pqy_j/x_i;p,q)_\infty}
{(px_i/y_j;p,q)_\infty(pqy_j/tx_i;p,q)_\infty}. 
\end{equation}
From this expression of $K(x,y;p)$, we see that 
$\varphi_\lambda(x;p)$ has $p$-expansion of the form
\begin{equation}
\varphi_\lambda(x;p)=\sum_{k=0}^{\infty}p^k\varphi_{\lambda,k}(x),
\quad \varphi_{\lambda,k}(x)\in(x_1\cdots x_n)^{-k}\bC\fps{x}^{\frS_n}.  
\end{equation}
Since the joint eigenfunction
$\varphi_\lambda(x)$ has the leading term $\varphi_{\lambda,0}(x)=b_\lambda P_\lambda(x)
\in\bC[x^{\pm1}]^{\frS_n}$, 
by the recurrence \eqref{eq:reck1} we inductively see $\varphi_{\lambda,k}(x)\in\bC[x^{\pm1}]^{\frS_n}$
for all $k=0,1,2,\ldots$, 
which means that $\varphi_\lambda(x;p)\in\bC[x^{\pm1}]^{\frS_n}\fps{p}$.  
Hence, 
we conclude that the Laurent expansion 
of $\varphi_\lambda(x;p)$ around the torus $\bT^n$ 
coincides with the normalized formal solution $\cP_{\lambda}(x;p)$
up to multiplication by a power series in $\bC\fps{p}$, 
and that $\vep(s^\vee;p;u)=\vep(s;p;u)$.  
We set 
\begin{equation}
\varphi_{\lambda}(x;p)=b_\lambda(p)\cP_{\lambda}(x;p),\quad
b_\lambda(p)\in\bC\fps{p}.  
\end{equation}
In fact, $b_\lambda(p)$ is determined as the constant term 
of $x^{-\lambda}\varphi_{\lambda}(x;p)$ with respect to $x$:
\begin{equation}
b_\lambda(p)=\int_{\bT^n}x^{-\lambda}\varphi_\lambda(x;p)d\omega_n(x).  
\end{equation}
Since $b_\lambda(p)$ is a holomorphic function in $|p|<\theta^n$, 
the normalized formal solution $\cP_\lambda(x;p)$ is absolutely convergent 
as well in an neighborhood $\bT^n$ where $\varphi_\lambda(x;p)$ is holomorphic, 
if $|p|$ is sufficiently small.  
\end{proof}

This implies Theorem \ref{thm:RACv} for the case where 
$\lambda\in\sP_+$ is a partition.  
In fact, from Theorem \ref{thm:ellMacIT} it follows that 
$\varphi_\lambda(x;p)=b_\lambda(p)\cP_{\lambda}(x;p)$ 
is holomorphic in the domain 
\begin{equation}\label{eq:domptau}
|p|/\tau<|x_j/x_i|<\tau/|p|\quad(1\le i<j\le n),\quad|p|<\tau
\end{equation}
for any positive $\tau\le |t/q|\theta^{n-1}$. 
If we take a sufficiently small $\tau>0$ such that $b_\lambda(p)\ne 0$ 
for $|p|<\tau$, $\cP_\lambda(x;p)$ is holomorphic in the domain 
\eqref{eq:domptau}. 
If $\lambda\in \sP_{+}$ is a general dominant integral weight, there exists an integer 
$l\in\bZ$ such that $\lambda+(l^n)$ is a partition. 
Since
\begin{equation}
\cP_{\lambda}(x; p) = (x_1\cdots x_n)^{-l}\cP_{\lambda+(l^n)}(x; p), 
\end{equation}
the statement of Theorem \ref{thm:RACv} 
is reduced to the case of a partition. 

\subsection{Orthogonality relation for $\cP_{\lambda}(x;p)$}

For each dominant integral weight $\lambda\in\sP_+$, 
we use the notation $\cP_{\lambda}(x;p)=\cP_\lambda(x;p|q,t)$ 
for the normalized symmetric joint eigenfunction of the elliptic Ruijsenaars 
operator $\cD_x(p;u)$ with initial condition $\cP_\lambda(x;0)=P_\lambda(x)$; 
it is holomorphic in a domain of the form
\begin{equation}\label{eq:aroundT}
\cW_{\tau}=\prm{(x;p)\in (\bC^\ast)^n\times\bC^\ast}{\ 
|p|/\tau<|x_j/x_i|<\tau/|p|\quad(1\le i<j\le n),\quad 0<|p|<\tau}
\end{equation}
for some $\tau\in(0,1]$.  
Note that $\cW_\tau$ is not empty for any $\tau\in(0,1]$, 
and $\cW_{\tau'}\subseteq\cW_{\tau}$ if $\tau'\le\tau$. 
In this subsection, we assume that $|t|<1$ and
$t^k\notin q^{\bZ_{>0}}$ for $k=1,\ldots,n-1$ in 
order to guarantee the existence of $\cP_{\lambda}(x;p)$.  
Note that the eigenfunction equation 
\begin{equation}
\cD_x(p;u)\cP_\lambda(x;p)=\vep_\lambda(p;u)\cP_\lambda(x;p) 
\end{equation}
holds in the domain $\cW_{|q|\tau}$.  

\par\medskip
For a pair of holomorphic functions $\varphi(x;p), \psi(x;p)$ 
in a domain $\cW_{\tau}$ of the form \eqref{eq:aroundT}, 
we define the scalar product 
$\br{\varphi(x;p),\psi(x;p)}$ by the integral\,\footnote{
The left-hand side in \eqref{eq:defscalarprod} would perhaps be better 
written as $\br{\varphi(\cdot;p),\psi(\cdot;p)}$, 
but we find it convenient to abuse the notation here. 
}
\begin{equation}\label{eq:defscalarprod}
\br{\varphi(x;p),\psi(x;p)}=\int_{\bT^n} 
\varphi(x^{-1};p)\psi(x;p)w^{\mathrm{sym}}(x;p)d\omega_n(x)
\end{equation}
with the symmetric weight function 
\begin{equation}
w^{\mathrm{sym}}(x;p)=\prod_{1\le i<j\le n}\frac{\Gamma(tx_i/x_j;p,q)\Gamma(tx_j/x_i;p,q)}
{\Gamma(x_i/x_j;p,q)\Gamma(x_j/x_i;p,q)}.
\end{equation}
Note here that
\begin{equation}
\frac{1}{\Gamma(x_i/x_j;p,q)\Gamma(x_j/x_i;p;q)}=\theta(x_i/x_j;p)\theta(x_j/x_i;q), 
\end{equation}
and hence, $w^{\mathrm{sym}}(x;p)$ is holomorphic in the domain
\begin{equation}
|t|<|x_j/x_i|<|t|^{-1}\quad(1\le i<j\le n),\quad |p|<1.  
\end{equation}

By Lemma \ref{lem:selfadjoint}, we know that the elliptic Ruijsenaars 
operator $\cD_x(p;u)$ is formally self-adjoint with respect to 
$w^{\mathrm{sym}}(x;p)$.  
\begin{lem} For any pair of holomorphic functions $\varphi(x;p)$, 
$\psi(x;p)$ in $\cW_\tau$, we have the following identity 
with respect to the scalar product for holomorphic functions in $\cW_{|q|\tau}$\,$:$ 
\begin{equation}
\br{\cD_{x}(p;u)\varphi(x;p),\psi(x;p)}=
\br{\varphi(x;p),\cD_{x}(p;u)\psi(x;p)}. 
\end{equation}
\end{lem}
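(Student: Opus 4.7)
The strategy is to combine the kernel-type conjugation identity $\cD_y(p;u)^\ast = w^{\mathrm{sym}}(y;p)\cD_{y^{-1}}(p;u)w^{\mathrm{sym}}(y;p)^{-1}$ from Lemma~\ref{lem:selfadjoint} with the change of variables $x\mapsto x^{-1}$ built into the definition of $\br{\cdot,\cdot}$, and then to invoke the formal adjoint property of $q$-difference operators as an actual integral identity via contour deformation.

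Set $\widetilde\varphi(x;p) := \varphi(x^{-1};p)$. Since the defining conditions $|p|/\tau < |x_j/x_i|<\tau/|p|$ of $\cW_\tau$ are invariant under $x\mapsto x^{-1}$, the function $\widetilde\varphi$ is holomorphic on $\cW_\tau$. A direct check shows
\begin{equation*}
[\cD_x(p;u)\varphi](x^{-1};p) = \cD_{x^{-1}}(p;u)[\widetilde\varphi(x;p)],
\end{equation*}
since substituting $x\to x^{-1}$ into the formula for $\cD_x(p;u)$ turns $T_{q,x_i}$ into $T_{q^{-1},x_i}$ and produces exactly the operator $\cD_{x^{-1}}(p;u)$. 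Rewriting $\cD_{x^{-1}}(p;u)= w^{\mathrm{sym}}(x;p)^{-1}\cD_x(p;u)^\ast w^{\mathrm{sym}}(x;p)$ by Lemma~\ref{lem:selfadjoint} gives
\begin{equation*}
\br{\cD_x(p;u)\varphi,\psi}
= \int_{\bT^n}\! \cD_x(p;u)^\ast\!\bigl[w^{\mathrm{sym}}(x;p)\widetilde\varphi(x;p)\bigr]\,\psi(x;p)\,d\omega_n(x).
\end{equation*}
Moving the formal adjoint across the pairing then yields
\begin{equation*}
\br{\cD_x(p;u)\varphi,\psi}
= \int_{\bT^n}\! w^{\mathrm{sym}}(x;p)\widetilde\varphi(x;p)\,\cD_x(p;u)\psi(x;p)\,d\omega_n(x)
= \br{\varphi,\cD_x(p;u)\psi}.
\end{equation*}

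The main technical point is justifying this last step of interchanging $\cD_x(p;u)^\ast$ with $\cD_x(p;u)$ under the integral. For each term $T_{q,x}^{-\ep_I}\cA_I(x;p)$ in $\cD_x(p;u)^\ast$, this amounts to deforming the torus $\bT^n$ to the shifted cycle $q^{\ep_I}\bT^n$ via Cauchy's theorem. The pole divisors $x_i=x_j$ of $\cA_I(x;p)$ are cancelled by the vanishing factors in $w^{\mathrm{sym}}(x;p)$, using the identity $1/[\Gamma(x_i/x_j;p,q)\Gamma(x_j/x_i;p,q)] = \theta(x_i/x_j;p)\theta(x_j/x_i;q)$, so the full integrand extends holomorphically across an annular neighborhood of $\bT^n$. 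The assumption $\varphi,\psi$ holomorphic on $\cW_\tau$ together with the bound $|p|<|q|\tau$ ensures that the $q$-shifted contours remain inside the common domain of holomorphy $\cW_{|q|\tau}$, so the deformation can be carried out term by term and yields the asserted identity.
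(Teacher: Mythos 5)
Your proof follows the same route as the paper's: Lemma \ref{lem:selfadjoint} to rewrite $\cD_{x^{-1}}(p;u)$ as a conjugate of $\cD_x(p;u)^\ast$ by $w^{\mathrm{sym}}(x;p)$, then transferring the formal adjoint across the pairing via the $q$-shift contour deformation and Cauchy's theorem, with the poles of $\cA_I$ cancelled by the zeros of $w^{\mathrm{sym}}$. The repackaging via $\widetilde\varphi(x;p)=\varphi(x^{-1};p)$ and the intertwining $[\cD_x(p;u)\varphi](x^{-1};p)=\cD_{x^{-1}}(p;u)\widetilde\varphi(x;p)$ is a cosmetic rephrasing of the paper's opening computation, not a different method.

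There is, however, one substantive point you gloss over in the final paragraph. Making the contour deformation legitimate requires not only that $\varphi,\psi$ remain defined on the shifted cycles (which your remark about $|p|<|q|\tau$ addresses), but that $w^{\mathrm{sym}}(x;p)$ has no poles in the region swept out between $\bT^n$ and $q^{-\ep_I}\bT^n$, where $1\le|x_i/x_j|\le|q|^{-1}$ for $i\in I$, $j\notin I$. The pole divisors of $w^{\mathrm{sym}}$ come from the numerator factors $\Gamma(tx_i/x_j;p,q)$, whose nearest poles lie at $|x_i/x_j|=|t|^{-1}$; these are avoided in the deformation region exactly when $|t|^{-1}>|q|^{-1}$, i.e.\ $|t|<|q|$. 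The standing hypothesis of the subsection is only $|t|<1$, so the paper first imposes the extra restriction $|t|<|q|<1$ to carry out the contour argument and then removes it by analytic continuation in $q$ (the scalar products depend holomorphically on $q$). Your proof should record this reduction; as written, the claim that ``the full integrand extends holomorphically across an annular neighborhood of $\bT^n$'' is not justified for all $t$ with $|t|<1$.
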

\begin{proofa}{Proof}
In view of the fact that the scalar products depend holomorphically on $q$, 
we assume that  $|t|<|q|<1$. 
Note first that 
\begin{equation}
\begin{split}
\cD_{x^{-1}}(p;u)\varphi(x^{-1};p)
&=w^{\mathrm{sym}}(x;p)(\cD_{x^{-1}}(p;u)\varphi(x^{-1};p))
\\
&=\cD_{x}(p;u)^\ast(w^{\mathrm{sym}}(x;p)\varphi(x^{-1};p)). 
\end{split}
\end{equation}
Since
\begin{equation}
\cD_x(p;u)^{\ast}=\sum_{I\subseteq\pr{1,\ldots,n}}
(-u)^{|I|}T_{q,x}^{-\ep_I}\cA_I(x;p),
\end{equation}
we have
\begin{equation}
\begin{split}
&\br{\cD_x(p;u)\varphi(x;p),\psi(x;p)}
\\
&=\int_{\bT}
(\cD_{x^{-1}}(p;u)\varphi(x^{-1};p))\psi(x;p)d\omega_n(x)
\\
&=\int_{\bT}\cD_x(p;u)^{\ast}(w^{\mathrm{sym}}(x;p)\varphi(x^{-1};p))\psi(x;p)d\omega_n(x)
\\
&=
\sum_{I\subseteq\pr{1,\ldots,n}}(-u)^{|I|}
\int_{\bT^n}
\cA_I(q^{-\ep_I}x;p)w^{\mathrm{sym}}(q^{-\ep_I}x;p)\varphi(q^{\ep_I}x^{-1};p)
\psi(x;p)d\omega_n(x)
\\
&=\sum_{I\subseteq\pr{1,\ldots,n}}(-u)^{|I|}
\int_{q^{-\ep_I}\bT^n}
\cA_I(x;p)w^{\mathrm{sym}}(x;p)
\varphi(x^{-1};p)\psi(q^{\ep_I}x;p)d\omega_n(x). 
\end{split}
\end{equation}
Note here that, for each $I\subseteq\pr{1,\ldots,n}$, 
the poles of $\cA_I(q^{-\ep_I})$ are cancelled 
by the zeros of $w^{\mathrm{sym}}(q^{-\ep_I};p)$.  
Hence, under the condition $|t|<|q|<1$, the $n$-cycle $q^{-\ep_I}\bT$ 
is homologous to $\bT$ in the domain of holomorphy of the integrand. 
Applying Cauchy's theorem, we obtain 
\begin{equation}
\begin{split}
&\br{\cD_x(p;u)\varphi(x;p),\psi(x;p)}
\\
&=
\sum_{I\subseteq\pr{1,\ldots,n}}(-u)^{|I|}
\int_{\bT^n}
\varphi(x^{-1};p)
\cA_I(x;p)
\psi(q^{\ep_I}x;p)
w^{\mathrm{sym}}(x;p)
d\omega_n(x)
\\
&=
\int_{\bT^n}
\varphi(x^{-1};p)
\cD_x(p;u)(
\psi(q^{\ep_I}x;p))
w^{\mathrm{sym}}(x;p)
d\omega_n(x)
\\
&=\br{\varphi(x;p),\cD_{x}(p;u)\psi(x;p)}. 
\end{split}
\end{equation}
\pick{\put(460,16){$\square$}}
\end{proofa}

Let $\lambda,\mu\in\sP_+$ be two dominant integral weights, 
and suppose that $\lambda\ne\mu$. 
Then we have 
\begin{equation}
\br{\cD_x(p;u)\cP_\lambda(x;p),\cP_\mu(x;p)}
=
\br{\cP_\lambda(x;p),\cD_x(p;u)\cP_\mu(x;p)},
\end{equation}
and hence
\begin{equation}
\vep_\lambda(p;u)\br{\cP_\lambda(x;p),\cP_\mu(x;p)}
=
\vep_\mu(p;u)\br{\cP_\lambda(x;p),\cP_\mu(x;p)}.  
\end{equation}
Since $\vep_\lambda(p;u)\ne\vep_\mu(p;u)$, 
we obtain $\br{\cP_\lambda(x;p),\cP_\mu(x;p)}=0$.   
This completes the proof of Theorem \ref{thm:RAOr}.  

\section{Symmetries of the Ruijsenaars function}\label{sec:Symmetry}

In this section, we give proofs of 
Theorems \ref{thm:RBRo} and \ref{thm:RBRe} 
on symmetries of the normalized Ruijsenaars function $f(x;s;p|q,t)$ 
in the asymptotic domain $|x_1|\gg\cdots\gg|x_n|\gg|px_1|$.

\subsection{Symmetry of $f(x;s;p|q,t)$ 
with respect the rotation of variables}
In this subsection, we give a proof of Theorem \ref{thm:RBRo}. 
\par\medskip
Recall that the modified Ruijsenaars operator $\cE_{x,s}(p;u)$ is defined by 
\begin{equation}
\cE_{x,s}(p;u)=\sum_{I\subseteq\pr{1,\ldots,n}}(-u)^{|I|}s^{\ep_I}\cB_{I}(x;p)T_{q,x}^{\ep_I},
\end{equation}
where
\begin{equation}
\cB_{I}(x;p)=\prod_{1\le i<j\le n}\frac{\theta(t^{-\br{\ep_I,\ep_i-\ep_j}}x_j/x_i;p)}{\theta(x_j/x_i;p)}.  
\end{equation}
We denote by $\pi$ the automorphism of the affine root lattice $\sQ^{\aff}$ such that
\begin{equation}
\pi(\alpha_i)=\alpha_{i+1}\quad(i=1,\ldots,n-1),\quad \pi(\alpha_{n-1})=\alpha_0,
\end{equation}
and by the same symbol the corresponding transformation of the $x$ and $s$ variables:
\begin{equation}
\begin{split}
&\pi(x_i)=x_{i+1}\quad(i=1,\ldots,n-1),\quad \pi(x_n)=px_1,\\
&\pi(s_i)=s_{i+1}\quad(i=1,\ldots,n-1),\quad \pi(s_n)=s_1. 
\end{split}
\end{equation}
Then we have
\begin{equation}
\begin{split}
\pi(\cB_{I}(x;p))
&=
\prod_{2\le i<j\le n}
\frac{\theta(t^{-\br{\ep_I,\ep_{i-1}-\ep_{j-1}}}x_j/x_i;p)}{\theta(x_j/x_i;p)}
\cdot
\prod_{j=2}^{n}\frac{\theta(t^{-\br{\ep_I,\ep_{j-1}-\ep_n}}px_1/x_j;p)}{\theta(px_1/x_j;p)}
\\
&=
\prod_{2\le i<j\le n}
\frac{\theta(t^{-\br{\ep_{\pi(I)},\ep_{i}-\ep_{j}}}x_j/x_i;p)}{\theta(x_j/x_i;p)}
\cdot
\prod_{j=2}^{n}\frac{\theta(t^{-\br{\ep_\pi(I),\ep_{1}-\ep_j}}x_j/x_1;p)}{\theta(x_j/x_1;p)}
\\
&=\cB_{\pi(I)}(x;p)
\end{split}
\end{equation}
with the action of $\pi$ on the indexing set $\pr{1,\ldots,n}$ 
specified by $\pi(i)=i+1$ ($i=1,\ldots,n-1$) and $\pi(n)=1$. 
Hence we obtain
\begin{equation}
\begin{split}
\pi(\cE_{x,s}(p;u))
&=\sum_{I\subseteq\pr{1,\ldots,n}}(-u)^{|I|}s^{\ep_{\pi(I)}}
\pi(\cB_{I}(x;p))T_{q,x}^{\ep_{\pi(I)}}
\\
&=\sum_{I\subseteq\pr{1,\ldots,n}}(-u)^{|I|}s^{\ep_{\pi(I)}}
\cB_{\pi(I)}(x;p)T_{q,x}^{\ep_{\pi(I)}}
\\
&=\sum_{I\subseteq\pr{1,\ldots,n}}(-u)^{|I|}s^{\ep_{I}}
\cB_{I}(x;p)T_{q,x}^{\ep_{I}}
=\cE_{x,s}(x;p).
\end{split}
\end{equation}
We apply $\pi$ to the eigenfunction equation 
\begin{equation}
\cE_{x,s}(p;u)f(x;s;p)=\vep(s;p;u)f(x;s;p)
\end{equation}
to obtain
\begin{equation}
\pi(\cE_{x,s}(p;u))\pi(f(x;s;p))=\pi(\vep(s;p;u))\pi(f(x;s;p))
\end{equation}
and hence
\begin{equation}
\cE_{x,s}(p;u)\pi(f(x;s;p))=\pi(\vep(s;p;u))\pi(f(x;s;p)).
\end{equation}
Recall that the function $f(x;s;p)$ is normalized so that the constant term with respect to $x$ variables 
should be 1.  
Since $\pi$ does not change the constant term of $f(x;s;p)$, it follows 
that $\pi(f(x;s;p))=f(x;s;p)$ and $\pi(\vep(s;p;u))=\vep(s;p;u)$.  
This completes the proof of Theorem \ref{thm:RBRo}.

\subsection{Transformation of $f(x;s;p|q,t)$ 
under the reflection $t \leftrightarrow q/t$}
In this subsection, we give a proof of Theorem \ref{thm:RBRe}. 
\par\medskip
We modify $f(x;s;p)$ by setting
\begin{equation}
\psi(x,s;p)=G(x;p)f(x;s;p),\quad G(x;p)=\prod_{1\le i<j\le n}\frac{\Gamma(tx_j/x_i;p,q)}{\Gamma(x_j/x_i;p,q)},
\end{equation}
and denote by ${}^{G}\cE_{x,s}(p;u)$ the conjugation of $\cE_{x,s}(p;u)$ by $G(x;p)$\,$:$
\begin{equation}
{}^{G}\cE_{x,s}(p;u)=G(x;p)\cE_{x,s}(p;u)G(x;p)^{-1}.  
\end{equation}
For each subset $I\subseteq\pr{1,\ldots,n}$, we have
\begin{equation}
\cB_{I}(x;p)=
\prod_{\substack{1\le i<j\le n \\ i\in I\,j\notin I}}\frac{\theta(x_j/tx_i;p)}{\theta(x_j/x_i;p)}
\cdot 
\prod_{\substack{1\le i<j\le n \\ i\notin I\,j\in I}}\frac{\theta(tx_j/x_i;p)}{\theta(x_j/x_i;p)}
\end{equation}
and also,
\begin{equation}
\begin{split}
\frac{T_{q,x}^{\ep_I}G(x;p)}{G(x;p)}
&=
\prod_{\substack{1\le i<j\le n\\ i\in I,\,j\notin I}}
\frac{\Gamma(tx_j/qx_i;p,q)}{\Gamma(x_j/qx_i;p,q)}
\frac{\Gamma(x_j/x_i;p,q)}{\Gamma(tx_j/x_i;p,q)}
\\
&\quad\cdot
\prod_{\substack{1\le i<j\le n\\ i\notin I,\,j\in I}}
\frac{\Gamma(qtx_j/x_i;p,q)}{\Gamma(qx_j/x_i;p,q)}
\frac{\Gamma(x_j/x_i;p,q)}{\Gamma(tx_j/x_i;p,q)}
\\
&=
\prod_{\substack{1\le i<j\le n\\ i\in I,\,j\notin I}}
\frac{\theta(x_j/qx_i;p)}{\theta(tx_j/qx_i;p)}
\cdot
\prod_{\substack{1\le i<j\le n\\ i\notin I,\,j\in I}}
\frac{\theta(tx_j/x_i;p)}{\theta(x_j/x_i;p)}. 
\end{split}
\end{equation}
Hence we have
\begin{equation}
{}^G\cE_{x,s}(p;u)
=\sum_{I\subseteq\pr{1,\ldots,n}}(-u)^{|I|}s^{\ep_I}
\prod_{\substack{1\le i<j\le n\\ i\in I,\,j\notin I}}
\frac{\theta(x_j/tx_i;p)\theta(tx_j/qx_i;p)}{\theta(x_j/x_i;p)\theta(x_j/qx_i;p)}
T_{q,x}^{\ep_I},
\end{equation}
which is manifestly invariant under the reflection $t\leftrightarrow q/t$ of the parameter $t$. 
Note that the eigenfunction equation
\begin{equation}
\cE_{x,s}(p;u)f(x;s;p)=\vep(s;p;u)f(x;s;p)
\end{equation}
of $f(x;s;p)$ implies
\begin{equation}
{}^G\cE_{x,s}(p;u)\psi(x,s;p)=\vep(s;p;u)\psi(x,s;p). 
\end{equation}
Denoting by $\tau$ the reflection $t\to q/t$ of the parameter $t$, 
we apply $\tau$ to this equation.  
Since ${}^{G}\cE_{x,s}(p;u)$ is invariant under $\tau$, we have
\begin{equation}
{}^G\cE_{x,s}(p;u)\tau(\psi(x,s;p))=\tau(\vep(s;p;u))\tau(\psi(x,s;p)) 
\end{equation}
and hence
\begin{equation}
\cE_{x,s}(p;u)G(x;p)^{-1}\tau(\psi(x,s;p))=\tau(\vep(s;p;u))G(x;p)^{-1}\tau(\psi(x,s;p)). 
\end{equation}
This means that 
\begin{equation}
G(x;p)^{-1}\tau(\psi(x;p))=
\frac{\tau(G(x;p))}{G(x;p)}\tau(f(x;s;p))
\end{equation}
is an eigenfunction of $\cE_{x,s}(p;u)$, and 
it is also holomorphic in the domain $|x_1|\gg\cdots|x_n|\gg|px_1|$ 
and has leading term 1.  
Hence we see that 
\begin{equation}
\frac{\tau(G(x;p))}{G(x;p)}\tau(f(x;s;p))=\gamma(s;p)f(x;s;p)
\end{equation}
for some power series $\gamma(s;p)$ of $p$, and that 
$\tau(\vep(s;p,u))=\vep(s;p;u)$. 
This implies that
\begin{equation}
\begin{split}
\tau(f(x;s;p))&=
\gamma(s;p)\frac{G(x;p)}{\tau(G(x;p))}f(x;s;p)
\\
&=
\gamma(s;p)\prod_{1\le i<j\le n}\frac{\Gamma(tx_j/x_i;p,q)}{\Gamma(qx_j/tx_i;p,q)}
\cdot f(x;s;p), 
\end{split}
\end{equation}
namely
\begin{equation}\label{eq:f=ggf}
f(x;s;p|q,q/t)=
\gamma(s;p|q,t)
\prod_{1\le i<j\le n}\frac{\Gamma(tx_j/x_i;p,q)}{\Gamma(qx_j/tx_i;p,q)}
\cdot
f(x;s;p|q,t). 
\end{equation}
By the normalization condition of $f(x;s;p|q,t)$ and $f(x;s;p|q,q/t)$, 
we see that 
$\gamma(s,p|q,t)$ is determined as the constant term of 
\begin{equation}
f(x;s;p|q,q/t)
\prod_{1\le i<j\le n}\frac{\Gamma(qx_j/tx_i;p,q)}{\Gamma(tx_j/x_i;p,q)}
\end{equation}
with respect to $x$, 
when regarded as a power series in $\bC\fps{px_1/x_n,x_2/x_1,\ldots,x_n/x_{n-1}}$.  
Also, from the symmetry \eqref{eq:f=ggf}, it follows that $\gamma(s;p|q,t)$ satisfies the functional equation
\begin{equation}
\gamma(s;p|q,t)\gamma(s;p|q,q/t)=1.  
\end{equation}
This completes the proof of Theorem \ref{thm:RBRe}. 

\section{Concluding remarks}
We constructed two kinds of eigenfunctions of the commuting family of difference operators defining the elliptic Ruijsenaars system. 
The first kind of eigenfunction provide an elliptic deformation of the Macdonald polynomials \cite{M1995}, and the second kind generalize the asymptotically free eigenfunctions in \cite{NS2012}. 
We proved that these eigenfunctions define analytic functions in suitable domains of the variables and parameters. 
By restricting parameters of variables to the physical domain, the first kind of eigenfunctions have a Hilbert space interpretation which allows to interpret them as quantum mechanical wave functions diagonalizing the Hamiltonian and momentum operator defining the elliptic Ruijsenaars system \cite{R1987}. 
In order to substantiate this interpretation, however, 
it would be desirable 
to refine our results to guarantee the existence of a common domain 
of convergence which does not depend on $\lambda$. 

We recall the hyperbolic Ruijsenaars model provides a quantum mechanical description of a relativistic quantum field theory known as sine-Gordon theory \cite{R1987} and, motivated by this, the construction of eigenfunctions of the hyperbolic Ruijsenaars system was recently completed by Halln\"as and Ruijsenaars \cite{HR2020}; in this model, space is the real line $\bR$ and, for this reason, the spectrum of the Hamiltonian and momentum operator are purely continuous. 
We regard the elliptic Ruijsenaars system as regularization of the hyperbolic one: space $\bR$ is replaced by a circle of circumference $L>0$ and, by this, the continuous spectrum is changed to a pure point spectrum. 
Such a regularization is standard in quantum field theory, and we therefore hope that our results on the elliptic system will be useful in future work to better understand quantum sine-Gordon theory. 

In the present paper, we concentrated on the analytic properties of the eigenfunctions we considered, and we did not attempt to obtain explicit formulas for these functions. 
The later certainly would be very interesting. 
In fact, one of us (S) recently proposed a fully explicit formula for the functions $f(x;s;p|q,t)$ as a limit $\kappa\to 1$ of a function $f^{\widehat{\mathfrak{gl}}_n}(x,p|s,\kappa|q,t)$  depending on a further parameter $\kappa$ \cite{S2019}, 
and we proved that this latter function is analytic in a suitable domain of the variables and parameters \cite{LNS2020a}. 
More specifically, the main conjecture in \cite{S2019} implies that a function\footnote{The functions $f^{\widehat{\mathfrak{gl}}_n}$ and $f_{n,\infty}$ are related to each other by a simple change of variables \cite[Theorem 3.2]{LNS2020a}.} $f_{n,\infty}(x,p|s,\kappa|q,t)$ given by an explicit formula (see \cite[Definition 3.3]{LNS2020a}) is such that the ratio
\begin{equation}\label{falpha0}
\frac{f_{n,\infty}(x,p|s,\kappa|q,t)}{\alpha_{n,\infty}(p|s,\kappa|q,t)}, 
\end{equation} 
with $\alpha_{n,\infty}(p|s,\kappa|q,t)$ defined as 
the constant term of $f_{n,\infty}(x,p|s,\kappa|q,t)$ with respect to $x$, converges to $f(x;s;p|q,t)$ in the limit $\kappa\to 1$. 
If this is true, then 
\begin{equation}\label{falpha}
\frac{f_{n,\infty}(x,p|t^{\rho}q^\lambda,\kappa|q,t)}{\alpha_{n,\infty}(p|t^{\rho}q^\lambda,\kappa|q,t)}\qquad (\rho_i=n-i) 
\end{equation} 
analytically continues to a function converging to $\cP_\lambda(x;p|q,t)$ in the limit $\kappa\to 1$. 

As already mentioned after Theorem \ref{thm:RAOr}, 
one can regard the elliptic deformations of the Macdonald polynomials, $\cP_\lambda(x;p|q,t)$, as an orthogonal system with respect to the scalar product in 
\eqref{eq:cPorth}
defined with a weight functions invariant under the exchange $p\leftrightarrow q$. 
This suggests that these functions $\cP_\lambda(x;p|q,t)$ 
are invariant under this exchange as well.\footnote{We are grateful to S. Ruijsenaars for emphasizing  to us the importance of this symmetry.} 
Another argument supporting this expectation is the fact that the Ruijsenaars operators $D^{(r)}_x(p|q,t)$ in \eqref{eq:defcDrx} 
commute with the operators $D^{(s)}_x(q|p,t)$ obtained from them by this exchange ($r,s=1,\ldots,n$).
However, we have only been able to prove this invariance for 
the ``groundstate'' function:\footnote{Our current proof of this is lengthy, and we therefore do not include it in the present paper.}
\begin{equation} 
\cP_0(x;p|q,t) = \cP_0(x;q|p,t).
\end{equation} 
This, together with other results proved in the present paper,  implies that this 
function has an expansion 
\begin{equation} 
\cP_0(x;p|q,t) = \sum_{k,l=0}^\infty p^kq^l\cP_{0;k,l}(x;t) 
\end{equation} 
where $\cP_{0;k,l}(x;t)=\cP_{0;l,k}(x;t)$ and where all functions $(x_1\cdots x_n)^{\min(k,l)}\cP_{0;k,l}(x;t)$ are symmetric polynomials in the variables $x=(x_1,\ldots,x_n)$. 
It would be interesting to extend this result to all functions $\cP_\lambda(x;p|q,t)$. 
A further reason we believe this can be done is the main conjecture in \cite{S2019} discussed in the previous paragraph, together with the following complementary conjecture:  
the ratio in \eqref{falpha} is invariant under the exchange $p\leftrightarrow q$ for generic values of $\kappa$.

It would also be interesting to extend our results to other related models, including the van Diejen systems \cite{vD1994}.

\section*{Acknowledgment}

This work is supported by VR Grant No.~2016-05167 (E.L.) 
and by JSPS Kakenhi Grants (B) 15H03626 (M.N.), (C) 19K03512 (J.S.). 
M.N. is grateful to the Knut and Alice Wallenberg Foundation for funding 
his guest professorship at KTH.  
We are grateful to the Stiftelse Olle Engkist Byggm\"{a}stare, Contract 184-0573, 
for a travel grant that initiated our collaboration.



\begin{thebibliography}{99}
\bibitem{C2009}
I.~Cherednik:
Whittaker limits of difference spherical functions. 
Int. Math. Res. Not. IMRN 2009, 3793--3842. 
\bibitem{vD1994}
J.F.~van Diejen:
Integrability of difference Calogero-Moser systems. 
J. Math. Phys. {\bf 35} (1994), 2983--3004.
\bibitem{vD1995}
J.F.~van Diejen:
Commuting difference operators with polynomial eigenfunctions. 
Compositio Math. {\bf 95} (1995), 183--233. 
\bibitem{FV1997}
 G.~Felder and A.~Varchenko:
 Elliptic quantum groups and Ruijsenaars models. 
 J. Statist. Phys. {\bf 89} (1997), 963--980.
\bibitem{FV1998}
G.~Felder and A.~Varchenko: 
Algebraic integrability of the two-particle Ruijsenaars operator. 
Funct. Anal. Appl. {\bf 32 } (1998), 81--92. 
\bibitem{FV2004}
G.~Felder and A.~Varchenko: 
Hypergeometric theta functions and elliptic Macdonald polynomials, 
Int. Math. Res. Not. IMRN 2004, 
1037--1055. 
\bibitem{H1997}
K.~Hasegawa:
Ruijsenaars' commuting difference operators as commuting transfer matrices.
Commun. Math. Phys. {\bf 187}, (1997) 289--325.
\bibitem{HR2020} 
M.~Halln\"as and  S. Ruijsenaars:  
Joint eigenfunctions for the relativistic Calogero-Moser Hamiltonians of hyperbolic type. III. Factorized asymptotics.  
Int. Math. Res. Not. IMRN 2020 
(doi:10.1093/imrn/rnaa193,  arXiv:1905.12918). 
\bibitem{Kac1990}
V.G.~Kac:
{\em Infinite Dimensional Lie Algebras}, Third edition. 
Cambridge University Press, 1990. xxii+400 pp.
\bibitem{Kato1995}
T.~Kato: 
{\em Perturbation Theory for Linear Operators}.  
Reprint of the 1980 edition, Classics in Mathematics, 
Springer-Verlag, 1995. 
\bibitem{KNS2009}
Y.~Komori, M.~Noumi and J.~Shiraishi:
Kernel functions for difference operators of Ruijsenaars type and their applications, 
SIGMA {\bf 5} (2009), 054, 40 pages (arXiv:0812.0279).
\bibitem{KT2002}
Y.~Komori and K.~Takemura:
The perturbation of the quantum Calogero-Moser-Sutherland system and related results.
Comm. Math. Phys. {\bf 227} (2002), 93--118. 
\bibitem{L2000}
E.~Langmann:
Anyons and the elliptic Calogero-Sutherland model, 
Lett. Math. Phys. {\bf 54} (2000), 279--289. 
\bibitem{L2013}
E.~Langmann:
Explicit solution of the (quantum) elliptic Calogero-Sutherland model,
Ann. Henri Poincar\'{e} {\bf 15} (2014), 755--791. 
\bibitem{LNS2020a}
E.~Langmann, M.~Noumi and J.~Shiraishi:
Basic properties of non-stationary Ruijsenaars functions. 
SIGMA {\bf 16} (2020), 105, 26 pages (arXiv:2006.07171).
\bibitem{M1995}
I.G.~Macdonald:
{\em Symmetric Functions and Hall Polynomials\/},
Second Edition. 
Oxford Mathematical Monographs, Oxford University Press, 1995, x+475pp. 
\bibitem{M2003}
I.G.~Macdonald:
{\em Affine Hecke Algebras and Orthogonal Polynomials}. 
Cambridge Tracts in Mathematics {\bf 157}. 
Cambridge University Press, 2003. x+175 pp.
\bibitem{vMS2010}
M.~van Meer and J.~Stokman:
Double affine Hecke algebras and bispectral quantum 
Knizhnik-Zamolodchikov equations. 
Int. Math. Res. Not. IMRN 2010, 969--1040. 
\bibitem{MN1997}
K.~Mimachi and M.~Noumi:
An integral representation of eigenfunctions for Macdonald's $q$-difference operators.
T\^{o}hoku Math. J.  {\bf 49} (1997), 517--525.  
\bibitem{NS2012}
M.~Noumi and J.~Shiraishi:
A direct approach to the bispectral problem for the Ruijsenaars-Macdonald 
$q$-difference operators.
(arXiv:1206.5364, 44 pages).
\bibitem{R1987}
S.N.M.~Ruijsenaars:
Complete integrability of relativistic Calogero-Moser systems and elliptic function identities. 
Comm. Math. Phys. {\bf 110} (1987), 191--213.
\bibitem{R1997}
S.N.M.~Ruijsenaars:
First order analytic difference equations and integrable quantum systems. 
J. Math. Phys. {\bf 38} (1997), 1069--1146.
\bibitem{R2009a}
S.N.M.~Ruijsenaars:
Hilbert-Schmidt operators vs. integrable systems of elliptic 
Calogero-Moser type, 
I. The eigenfunction identities, 
Comm. Math. Phys. {\bf 286} (2009), 629--657.
\bibitem{R2009b}
S.N.M.~Ruijsenaars:
Hilbert-Schmidt operators vs. integrable systems of elliptic 
Calogero-Moser type, 
II. The $A_{N-1}$ case: first steps. 
Comm. Math. Phys. {\bf 286} (2009), 659--680.
\bibitem{S2005}
J.~Shiraishi:
A conjecture about raising operators for Macdonald polynomials.
Lett. Math. Phys. {\bf 73} (2005), 71--81.
\bibitem{S2019}
J.~Shiraishi:
Affine screening operators, affine Laumon spaces, and 
conjectures concerning non-stationary Ruijsenaars functions.
J. Integrable Syst. {\bf 4} (2019), xyz010, 30pp. 
(arXiv:1903.07495, 26 pages).  
\end{thebibliography}
\end{document}